\documentclass{article}
\usepackage[verbose=true,letterpaper]{geometry}
\AtBeginDocument{
	\newgeometry{
		textheight=9in,
		textwidth=6.5in,
		top=1in,
		headheight=14pt,
		headsep=25pt,
		footskip=30pt
	} 
}

\usepackage[utf8]{inputenc}

\usepackage{amsmath, amsthm, amssymb}
\usepackage{thmtools, thm-restate}
\usepackage{bbm}
\usepackage{float}
\usepackage{enumerate,enumitem}
\usepackage{color}
\usepackage{mathtools}
\mathtoolsset{showonlyrefs}
\usepackage{ifthen}
\usepackage{xr} 
\allowdisplaybreaks 

\usepackage{pgfplots,tikz}
\pgfplotsset{compat=1.16} 
\usepackage{graphicx}
\usepackage{cite}

\definecolor{ColorLink}{rgb}{0 .4 .6}
\definecolor{ColorCite}{rgb}{0 .5 .3}
\usepackage[plainpages=false,pdfpagelabels,colorlinks=true,linkcolor=ColorLink,citecolor=ColorCite]{hyperref}
\hypersetup{pdfauthor={Name}}

\newtheorem{theorem}{Theorem}
\newtheorem{lemma}{Lemma}
\newtheorem{definition}{Definition}

\newtheorem{remark}{Remark}

\definecolor{BoxRed}{rgb}{1 .8 .8}
\definecolor{BoxLightRed}{rgb}{1 .9 .9}
\definecolor{BoxBlue}{rgb}{.8 .8 1}
\definecolor{BoxLightBlue}{rgb}{.9 .9 1}
\definecolor{BoxCyan}{rgb}{.9 1 1}
\definecolor{BoxGreen}{rgb}{.8 1 .8}
\definecolor{BoxDarkGreen}{rgb}{.6 .9 .6}
\definecolor{BoxYellow}{rgb}{1 .95 .65}
\definecolor{BoxDarkYellow}{rgb}{.95 .85 .5}
\definecolor{BoxOrange}{rgb}{1 .8 0}
\definecolor{BoxPurple}{rgb}{.87 .79 .85}
\definecolor{BoxPink}{rgb}{1 .8 1}
\definecolor{LineGray}{rgb}{.4 .4 .4}
\definecolor{AxisGray}{rgb}{.3 .3 .3}
\definecolor{LineBlue}{rgb}{.2 .6 .9}
\definecolor{LineLightBlue}{rgb}{.4 .8 .95}
\definecolor{LineCyan}{rgb}{.5 .7 .7}
\definecolor{LineRed}{rgb}{.9 .3 .3}
\definecolor{LineLightRed}{rgb}{.95 .4 .4}
\definecolor{LineGreen}{rgb}{.3 .8 .3}
\definecolor{LineDarkGreen}{rgb}{.2 .6 .2}
\definecolor{LineYellow}{rgb}{9 .8 0}
\definecolor{LineDarkYellow}{rgb}{.9 .7 .2}
\definecolor{LineOrange}{rgb}{.9 .6 .1}
\definecolor{LinePurple}{rgb}{.8 .3 .8}
\definecolor{LinePink}{rgb}{.8 .5 .8}

\usepackage{relsize} 

\newcommand{\floor}[1]{\lfloor #1 \rfloor}

\DeclareMathOperator*{\argmax}{arg\,max}

\newcommand\newsubcommand[3]{\newcommand#1{#2\sc@sub{#3}}}
\def\sc@sub#1{\def\sc@thesub{#1}\@ifnextchar_{\sc@mergesubs}{_{\sc@thesub}}}
\def\sc@mergesubs_#1{_{\sc@thesub#1}}

\newcommand{\expe}{\mathbb{E}}

\newcommand{\half}{\frac{1}{2}}
\newcommand*{\dif}{{\mathop{}\!\mathrm{d}}}
\newcommand{\norm}[1]{\left\lVert#1\right\rVert}
\newcommand{\abs}[1]{\left\lvert#1\right\rvert}
\newcommand{\sign}[1]{\mathrm{sign}(#1)}

\def\P{\mathbb{P}}
\def\RR{\mathbb{R}}
\def\id{\mathbf{I}}

\def\zeros{\mathbf{0}}

\def\iid{\stackrel{\mathrm{iid}}{\sim}}

\newenvironment{smallbmatrix}{
	\bigl[ \begin{smallmatrix} }{%
	\end{smallmatrix} \bigr]}


\makeatletter
\def\thanks#1{\protected@xdef\@thanks{\@thanks
        \protect\footnotetext{#1}}}
\makeatother

\title{Detecting Correlated Gaussian Databases
	\thanks{The authors were supported by NSF Grant CCF-1955981. This work was presented in part at the 2022 IEEE International Symposium on Information Theory~\cite{k2022databases}.}
	\thanks{Z. K and B. Nazer are with the Department of Electrical and Computer Engineering, Boston University, Boston, MA, email: \texttt{zeynepk, bobak@bu.edu}.}
}
\author{Zeynep K and Bobak Nazer}

 \date{\vspace{-12pt}}

\begin{document}
\maketitle

\begin{abstract}
This paper considers the problem of detecting whether two databases, each consisting of $n$ users with $d$ Gaussian features, are correlated. Under the null hypothesis, the databases are independent. Under the alternate hypothesis, the features are correlated across databases, under an unknown row permutation. A simple test is developed to show that detection is achievable above $\rho^2 \approx \frac{1}{d}$. For the converse, the truncated second moment method is used to establish that detection is impossible below roughly $\rho^2 \approx \frac{1}{d\sqrt{n}}$. These results are compared to the corresponding recovery problem, where the goal is to decode the row permutation, and a converse bound of roughly $\rho^2 \approx 1 - n^{-4/d}$ has been previously shown. For certain choices of parameters, the detection achievability bound outperforms this recovery converse bound, demonstrating that detection can be easier than recovery in this scenario.
\end{abstract}

\section{Introduction}
Consider the following na\"ive approach to database anonymization: prior to public release, unique identifying information (e.g., names, user IDs) is deleted from a database while other features (e.g., movie ratings) are left unchanged. In the absence of any side information, this approach could be effective for protecting user privacy while providing access to data. However, side information is abundant in the public domain, and the work of Narayanan and Shmatikov~\cite{narayanan2008robust, narayanan2009anonymizing} demonstrated that this na\"ive approach is vulnerable to a de-anonymization attack that exploits the availability of another correlated database with its identifying information intact.

This \textit{database alignment} problem, introduced by Cullina et al.~\cite{cullina2018fundamental}, is a simple probabilistic model of the scenario above. There are two databases, each with $n$ users and $d$ features per user. An unknown, random, uniform permutation matches users in the first database to users in the second database. For a pair of matched database entries, the features are dependent according to a known distribution, and, for unmatched entries, the features are independent. The goal is to \textit{recover} the unknown permutation, i.e., align the databases, and the corresponding information-theoretic question is to characterize the level of dependency required for successful recovery, as a function of $n$ and $d$. The discrete memoryless case was studied in~\cite{cullina2018fundamental}, which derived achievability and converse bounds in terms of mutual information. Subsequent work extended this analysis to the case of Gaussian features~\cite{dai2019database}. 

This paper considers the corresponding \textit{database correlation detection} problem: given two databases, can we determine whether they are correlated? Specifically, we focus on the special case of Gaussian features and formulate the following binary hypothesis test. As above, there are two databases, each with $n$ users and $d$ features per user. Under the alternate hypothesis, the databases are correlated as in the recovery problem with respect to an unknown, random, uniform permutation. Under the null hypothesis, the databases are independent. We propose a simple, efficient test for detecting correlated databases, and use it to derive achievability bounds. We also derive converse results by bounding the (truncated) second moment of the likelihood ratio, which can in turn be linked to the total variation distance between the null and alternate distributions. Although there remains a gap between our achievability and converse bound, we are able to show that, in certain regimes, the achievability bound for detection outperforms the best-known converse for recovery from~\cite{dai2019database}, demonstrating that Gaussian database correlation detection is easier than Gaussian database alignment.

\subsection{Related Work}

Recent efforts have generalized the database alignment work of Cullina et al.~\cite{cullina2018fundamental} along several  directions. As mentioned above, exact recovery of the permutation for correlated Gaussian databases was investigated in~\cite{dai2019database}, and follow-up work extended these bounds to near-exact (i.e., partial) recovery~\cite{dai2020partial}. A typicality-based framework for database alignment was explored in~\cite{shirani2019concentration} while random feature deletions and repetitions were studied in \cite{bakirtas2021columndel} and~\cite{bakirtas2022columnrep}, respectively. The Gaussian database alignment problem is also equivalent to a certain idealized tracking problem studied by \cite{chertkov2010inference, kunisky2021geometricmatching}, inspired by the application of particle tracking to infer the trajectories of objects from sequences of still images. Note that in this setting, the pairings are coupled whereas reconstruction of the planted matching in random bipartite graphs proposed by \cite{moharrami2021planted, ding2021planted} deals with independent random pairings.

Database alignment belongs to the larger family of planted matching problems, where the high-level goal is to recover or detect a hidden combinatorial structure in a high-dimensional setting. While many of these scenarios are known to have pessimistic computational complexity bounds in the worst-case setting, recent efforts have demonstrated that certain scenarios admit efficient algorithms in the average-case setting, see the survey of Wu and Xu~\cite{wu2021statistical} for more details. For instance, consider the graph alignment problem. There are two graphs over $n$ nodes, whose edges are generated in a correlated fashion, and then the node labels are randomly and uniformly permuted. The recovery problem is to use the correlated graphs to find the true node labeling, which is equivalent to a certain quadratic assignment optimization problem. Initial work~\cite{pedarsani2011privacy} on this problem in the information theory literature proposed the correlated Erd$\ddot{\text{o}}$s–R{\'e}nyi graph model with dependent Bernoulli edge pairs. Subsequent work considered the recovery problem for the Gaussian setting~\cite{jiaming2021degree}, specifically correlated Gaussian Wigner matrices. Other recent papers have considered the corresponding detection problem for graph correlation \cite{jiaming2021testinggraphs,mao2021testing} and it is now known~\cite{wu2022settling, ganassali2022sharp} that detecting whether Gaussian graphs are correlated is as difficult as recovering the node labeling. We also mention the extensive work on community recovery and detection for the stochastic block model and refer readers to the survey of Abbe~\cite{abbe2017community} for a finer-grained view.

\subsection{Outline}
The remainder of this paper is organized as follows. In Section~\ref{sec: probstate}, we give formal problem statements for recovery and detection for correlated Gaussian databases. Then, in Section~\ref{sec: mainresults}, we present our achievability and converse bounds for the detection problem, along with the best-known recovery bounds for comparison. Section~\ref{sec: ach-inner} proposes an efficient test and develops achievability results by bounding certain moment generating functions. Section~\ref{sec: converse} presents a converse by bounding the second moment of the likelihood ratio as well as a truncated version. Technical lemmas and proofs are relegated to the appendices.
 \\

\section{Problem Statement} \label{sec: probstate}

In this section, we describe a simple Gaussian model of correlated databases as well as the corresponding detection and recovery problems. To start, we set some notational conventions. \\

\noindent\textbf{Notation.} Random column vectors are denoted by capital letters such as $X$ with transpose $X^\mathsf{T}$. A collection of $n$ random vectors is written as $X^n = (X_1,\ldots,X_n)$. We use $X \stackrel{d}{=} Y$ to denote that the random vectors $X$ and $Y$ have the same distribution. The notation $X_1,\ldots,X_n \iid P_X$ denotes that the random vectors $X_1,\ldots,X_n$ are independent and identically distributed (i.i.d.)~according to $P_X$, or, equivalently, $X^n = (X_1,\ldots,X_n) \sim P_X^{\otimes n}$. We write $\expe_{X \sim P_X}$ to denote the expectation with respect to $X$, which is distributed according to $P_X$. We will sometimes simply write $\expe_{X}$ when the distribution is clear from context. 

The $n\times n$ identity matrix is denoted by $\id_n$, the $n\times n$  all-zeros matrix by $\zeros_n$, and the length-$n$. Let $[n] \triangleq \{1,\ldots,n\}$ and define $S_n$ as the set of all permutations over $[n]$. For a given permutation $\sigma \in S_n$, let $\sigma_i$ denote the value to which $\sigma$ maps $i \in [n]$. 

We use $\mathcal{N}(\mu,{\Sigma})$ to represent the multivariate normal distribution with mean vector $\mu$ and covariance matrix $\Sigma$ and $\chi^2_n$ for the chi-squared distribution with $n$ degrees of freedom. If the probability measure $P$ is absolutely continuous with respect to the probability measure $Q$, then we will (with a slight abuse of notation) use $\frac{P}{Q}$ to represent the Radon-Nikodym derivative.

For two real-valued functions $f(x)$ and $g(x)$, we employ the standard asymptotic notation $f(x) = \mathcal{O}(g(x))$ to mean that there exist constants $C > 0$ and $x_0$ such that $\abs{f(x)} \leq C \abs{g(x)}$ for all $x\geq x_0$. Similarly,  $f(x)= \Omega(g(x))$ means that there exist constants $C > 0$ and $x_0$ such that $\abs{f(x)} \geq C \abs{g(x)}$ for all $x\geq x_0$, $f(x) = o(g(x))$ that for every $c > 0$, there exists $x_0$ such that $| f(x) | \leq c g(x)$ for all $x \geq x_0$, and $f(x)= \omega(g(x))$ that for every $c > 0$, there exists $x_0$ such that $| f(x) | \geq c g(x)$ for all $x \geq x_0$.

\subsection{Probabilistic Database Model}

\begin{figure} \label{fig:H0}
\begin{center}
\begin{tikzpicture}[outer sep=auto]
	\tikzset{every node/.style={shape=rectangle,draw, line width = 1.5pt,minimum height = 15pt,minimum width = 150pt}}
	\tikzset{every edge/.style={draw,line width = 1.5pt, color = LineGray}}
	\node [fill=BoxBlue, draw = LineBlue] at (0,3) (X1) {$X_1$};
	\node [fill=BoxGreen, draw = LineGreen] at (0,2.25) (X2) {$X_2$};
	\node [fill=BoxRed, draw = LineRed] at (0,1.5) (X3) {$X_3$};
	\node [draw = none] at (0,.85) (leftdots) {\Large{$\vdots$}};
	\node [fill=BoxYellow, draw = LineYellow] at (0,0) (Xn) {$X_n$};
	\draw [<->,line width = 1.5pt, color = LineGray] (-3,-0.25) -- (-3,3.25) node[above, color = black, pos = 0.5, draw = none, rotate = 90] {$n$ users};
	\draw [<->,line width = 1.5pt, color = LineGray] (-2.65,3.6) -- (2.65,3.6) node[above, color = black, pos = 0.5, draw = none] {$d$ features};
		\node [draw = none] at (0,-.8) (database1) {\Large{Database $X^n$}};

	\node [fill=BoxCyan, draw = LineCyan] at (9,3) (Y1) {$Y_1$};
	\node [fill=BoxPink, draw = LinePink] at (9,2.25) (Y2) {$Y_2$};
	\node [fill=BoxOrange, draw = LineOrange] at (9,1.5) (Y3) {$Y_3$};
	\node [draw = none] at (9,.85) (rightdots) {\Large{$\vdots$}};
	\node [fill=BoxPurple, draw = LinePurple] at (9,0) (Yn) {$Y_n$};
	\draw [<->,line width = 1.5pt, color = LineGray] (12,-0.25) -- (12,3.25) node[below, color = black, pos = 0.5, draw = none, rotate = 90] {$n$ users};
		\draw [<->,line width = 1.5pt, color = LineGray] (6.35,3.6) -- (11.65,3.6) node[above, color = black, pos = 0.5, draw = none] {$d$ features};
		\node [draw = none] at (9,-.8) (database2) {\Large{Database $Y^n$}};

\end{tikzpicture}
\end{center}
\caption{Under the null hypothesis $H_0$ for the detection problem, the database entries $X_1, \ldots, X_n, Y_1,\ldots Y_n \in \mathbb{R}^d$ are generated i.i.d.~according to $P_X = P_Y= \mathcal{N}^{\otimes d}(0,1)$.}
\end{figure}
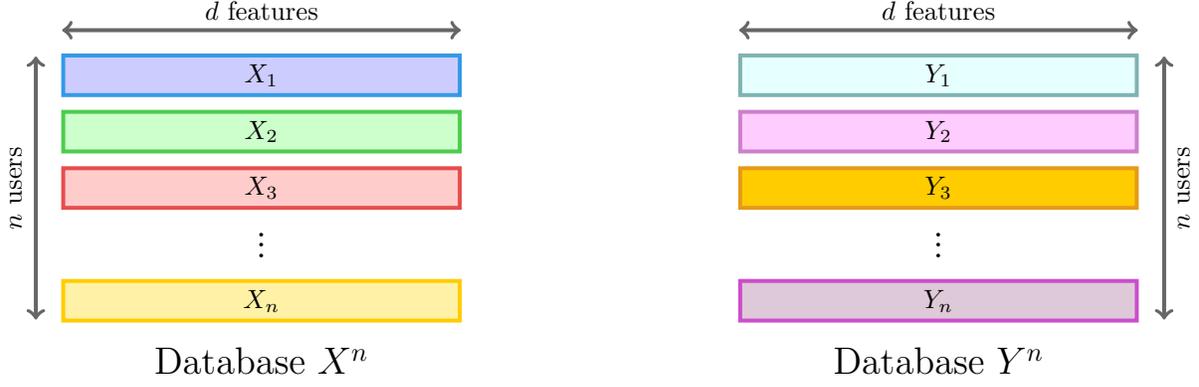

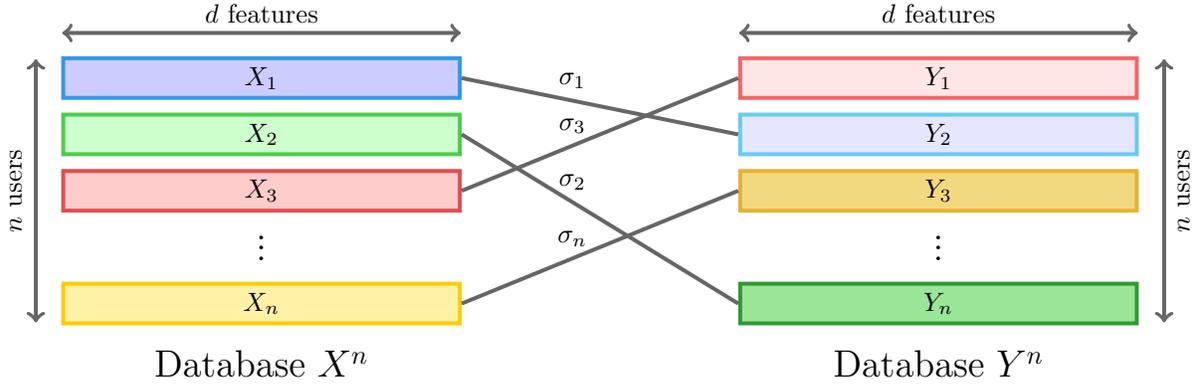
\begin{figure}\label{fig:H1}
\begin{center}
\begin{tikzpicture}[outer sep=auto]
	\tikzset{every node/.style={shape=rectangle,draw, line width = 1.5pt,minimum height = 15pt,minimum width = 150pt}}
	\tikzset{every edge/.style={draw,line width = 1.5pt, color = LineGray}}
	\node [fill=BoxBlue, draw = LineBlue] at (0,3) (X1) {$X_1$};
	\node [fill=BoxGreen, draw = LineGreen] at (0,2.25) (X2) {$X_2$};
	\node [fill=BoxRed, draw = LineRed] at (0,1.5) (X3) {$X_3$};
	\node [draw = none] at (0,.85) (leftdots) {\Large{$\vdots$}};
	\node [fill=BoxYellow, draw = LineYellow] at (0,0) (Xn) {$X_n$};
	\draw [<->,line width = 1.5pt, color = LineGray] (-3,-0.25) -- (-3,3.25) node[above, color = black, pos = 0.5, draw = none, rotate = 90] {$n$ users};
	\draw [<->,line width = 1.5pt, color = LineGray] (-2.65,3.6) -- (2.65,3.6) node[above, color = black, pos = 0.5, draw = none] {$d$ features};
	\node [draw = none] at (0,-.8) (database1) {\Large{Database $X^n$}};

	\node [fill=BoxLightRed, draw = LineLightRed] at (9,3) (Y1) {$Y_1$};
	\node [fill=BoxLightBlue, draw = LineLightBlue] at (9,2.25) (Y2) {$Y_2$};
	\node [fill=BoxDarkYellow, draw = LineDarkYellow] at (9,1.5) (Y3) {$Y_3$};
	\node [draw = none] at (9,.85) (rightdots) {\Large{$\vdots$}};
	\node [fill=BoxDarkGreen, draw = LineDarkGreen] at (9,0) (Yn) {$Y_n$};
	\draw [<->,line width = 1.5pt, color = LineGray] (12,-0.25) -- (12,3.25) node[below, color = black, pos = 0.5, draw = none, rotate = 90] {$n$ users};
		\draw [<->,line width = 1.5pt, color = LineGray] (6.35,3.6) -- (11.65,3.6) node[above, color = black, pos = 0.5, draw = none] {$d$ features};
	\node [draw = none] at (9,-.8) (database2) {\Large{Database $Y^n$}};

	\path (X1.east) edge  node[draw=none,pos = 0.4,above, color = black] (edge1) {$\sigma_1$} (Y2.west)
		(X2.east) edge  node[draw=none,pos = 0.4,above, color = black] (edge2) {$\sigma_2$} (Yn.west)
		(X3.east) edge  node[draw=none,pos = 0.4,above, color = black] (edge3) {$\sigma_3$} (Y1.west)
		(Xn.east) edge  node[draw=none,pos = 0.4,above, color = black] (edgen) {$\sigma_n$} (Y3.west);
\end{tikzpicture}
\end{center}
\caption{Under the alternate hypothesis $H_1$ in the detection problem (or the probabilistic model for the recovery problem), each pair of database entries $(X_i, Y_{\sigma_i}) \in \mathbb{R}^d \times \mathbb{R}^d$ is generated in a dependent fashion for some unknown permutation $\sigma \in S_n$. Specifically, $(X_i, Y_{\sigma_i}) \sim P_{XY} = \mathcal{N}^{\otimes d}\big(\begin{bsmallmatrix}0 \\ 0 \end{bsmallmatrix},\begin{bsmallmatrix}1 & \rho \\ \rho & 1 \end{bsmallmatrix}\big)$. The collection of pairs $(X_1, Y_{\sigma_1}), \ldots, (X_n, Y_{\sigma_n})$ is independent.}
\end{figure}

In our considerations, a \textit{database} is a collection $X^n = (X_1,\ldots,X_n)$ of $n$ i.i.d.~random vectors in $\RR^d$ where $n$ is the number of users (or entries) and $d$ is the number of features. We say that a pair of databases $X^n = (X_1,\ldots,X_n)$ and $Y^n = (Y_1,\ldots,Y_n)$ is \textit{correlated with permutation $\sigma$} if, for some $\sigma \in S_n$, we have that $(X_1,Y_{\sigma_1}),\ldots,(X_n,Y_{\sigma_n}) \iid P_{XY}$ for some joint distribution $P_{XY}$ over $\RR^d \times \RR^d$. We will focus on the special case of \textit{correlated Gaussian databases} where $P_{XY} =\mathcal{N}^{\otimes d}\big(\begin{bsmallmatrix}0 \\ 0 \end{bsmallmatrix},\begin{bsmallmatrix}1 & \rho \\ \rho & 1 \end{bsmallmatrix}\big)$.

\subsection{Detection}
Consider the following binary hypothesis testing problem. Under the \textit{null hypothesis} $H_0$, the Gaussian databases $X^n$ and $Y^n$ are generated independently with $X_1,\ldots,X_n,Y_1,\ldots,Y_n \iid \mathcal{N}(0_d,\id_d)$. See Figure~\ref{fig:H0} for an illustration. (This is equivalent to generating Gaussian correlated databases with $\rho = 0$.) Let $\mathbb{P}_{0}$ denote the resulting distribution over $(X^n,Y^n)$. Under the \textit{alternate hypothesis} $H_1$, the databases $X^n$ and $Y^n$ are correlated with permutation $\sigma$ for some unknown $\sigma \in S_n$ and some known correlation coefficient $\rho \neq 0$. See Figure~\ref{fig:H1} for an illustration and let $\mathbb{P}_{1|\sigma}$ denote the resulting distribution over $(X^n,Y^n)$. Summarizing, 
\begin{align}\label{eq: problem}
	\begin{split}
		H_0:&~~ (X_1,Y_1),\ldots,(X_n,Y_n) \iid \mathcal{N}^{\otimes d}\bigg(\begin{bmatrix}0 \\ 0 \end{bmatrix},\begin{bmatrix}1 & 0 \\ 0 & 1 \end{bmatrix}\bigg)  \\
		H_1:&~~ (X_1,Y_{\sigma_1}),\ldots,(X_n,Y_{\sigma_n}) \iid \mathcal{N}^{\otimes d}\bigg(\begin{bmatrix}0 \\ 0 \end{bmatrix},\begin{bmatrix}1 & \rho \\ \rho & 1 \end{bmatrix}\bigg)
	\end{split}
\end{align} for some permutation $\sigma \in S_n$.  

Given only the databases $X^n$ and $Y^n$ (and not the permutation $\sigma$), a \textit{test} $\phi: \mathbb{R}^{d \times n} \times \mathbb{R}^{d \times n} \rightarrow \{0,1\}$ guesses whether the null, $\phi(X^n,Y^n) = 0$,  or alternate, $\phi(X^n,Y^n) = 1$ hypothesis occurred. For a given $n$ and $d$, the \textit{risk} of a test $\phi$ is 
\begin{align*}
R(\phi) \triangleq  \underbrace{\vphantom{\max_{\sigma \in S_n}} \mathbb{P}_0\big\{\phi(X^n,Y^n) = 1\big\}}_{\text{False Alarm}} + \underbrace{\,\max_{\sigma \in S_n}\mathbb{P}_{1|\sigma} \big\{\phi(X^n,Y^n) = 0\big\}}_{\text{Missed Detection}}
\end{align*}
 and the \textit{minimax risk} is $$R^* \triangleq \inf_{\phi} R(\phi)$$ where the infimum is over all measurable tests. 
 
We will use the shorthand notation $\expe_{0}$ to denote the expectation with respect to $(X^n, Y^n)$ under the null distribution and $\expe_{1|\sigma}$ to denote the expectation with respect to $(X^n, Y^n)$ under the alternate distribution with permutation $\sigma$.
\subsection{Recovery} 
Consider the following recovery problem. A pair of Gaussian correlated databases $X^n$ and $Y^n$ are generated according to 
\begin{align*}
	(X_1,Y_{\sigma_1}),\ldots,(X_n,Y_{\sigma_n}) \iid  \mathcal{N}^{\otimes d}\bigg(\begin{bmatrix}0 \\ 0 \end{bmatrix},\begin{bmatrix}1 & \rho \\ \rho & 1 \end{bmatrix}\bigg)
\end{align*} for some permutation $\sigma \in S_n$. See Figure~\ref{fig:H1} for an illustration. To remain consistent with our detection notation, the resulting distribution is denoted as $\mathbb{P}_{1|\sigma}$. The goal is to decode the permutation $\sigma \in S_n$ given $X^n$ and $Y^n$. Specifically, the \textit{decoder} is a function $\hat{\sigma}: \mathbb{R}^{d \times n} \times \mathbb{R}^{d \times n} \rightarrow S_n$. We define the probability of error 
$$P_{\text{err}}(\hat{\sigma}) \triangleq \max_{\sigma \in S_n} \mathbb{P}_{1|\sigma}\big\{\hat{\sigma}(X^n,Y^n) \neq \sigma\big\}$$
and the \textit{minimax probability of error} $$P^*_{\text{err}} \triangleq \inf_{\hat{\sigma}} P_{\text{err}}(\hat{\sigma})$$ where the infimum is over all measurable decoders.\footnote{Note that prior work~\cite{dai2019database} considers the average-case probability of error with respect to a uniform distribution over permutations, $\sigma \sim \mathrm{Unif}(S_n)$. 
	However, owing to the symmetries of the distribution, these two  models are essentially equivalent, and the results of~\cite{dai2019database} translate directly to minimax bounds.} 
\section{Main Results} \label{sec: mainresults}
In this section, we present our achievability and converse results for the correlated Gaussian database detection problem. For comparison, we reproduce the main achievability and converse theorems from~\cite{dai2019database} for the correlated Gaussian database recovery problem. 

For our test, we first calculate the sum-of-inner-products statistic 
\begin{align}
T \triangleq \sign{\rho} \sum_{i,j} X_i^\mathsf{T}Y^{\vphantom{\mathsf{T}}}_j \label{eq:statistic}
\end{align} We then compare $T$ to a threshold $t$ as follows:
\begin{align}\label{eq: siptest}
	\phi_{T,t}(X^n,Y^n) &= \begin{dcases} 0, &  \sign{\rho} \sum_{i,j} X_i^\mathsf{T}Y^{\vphantom{\mathsf{T}}}_j < t \\  1, &  \sign{\rho} \sum_{i,j} X_i^\mathsf{T}Y^{\vphantom{\mathsf{T}}}_j \geq t \ . \\[-8pt]
	\end{dcases}
\end{align} Let $\phi_T$ be the optimal test over such  functions, i.e., $$R(\phi_T) = \inf_t R(\phi_{T,t})$$ where the minimizer $t\in (\expe_{0}\,T, ~ \min_{\sigma \in S_n} \expe_{1| \sigma} \,T) = (0,|\rho|nd)$ may depend on $n$, $d$, and $\rho$.

To obtain an upper bound on the minimax risk, we apply the Chernoff bound to the probability of false alarm and the probability missed detection, and select the threshold that balances the resulting exponents. The details of this analysis can be found in Section~\ref{sec: ach-inner}, and culminate in the following theorem.

\begin{theorem}[Detection Achievability]\label{thm:detectionachievable} Let $t = \sqrt{\gamma}\frac{dn}{2}$  with $\gamma \in (0,4\rho^2)$. The risk of the sum-of-inner-products test $\phi_T$ for the binary hypothesis testing problem~\eqref{eq: problem} is upper bounded by
	\begin{align*}
		R(\phi_T) &\leq\,  \min_{\gamma \in (0,4\rho^2)}  \exp\left(-\frac{d}{2} g_{\mathrm{FA}}(\gamma)\right)  + \exp\left(-\frac{d}{2} g_{\text{MD}}(\gamma)\right)   \\&\leq 2 \exp\left(-\frac{d \rho^2}{60}\right) 
		\intertext{ where }
		g_{\text{FA}}(\gamma) &\triangleq \sqrt{1 + \gamma} - 1 - \ln\left(\frac{1 + \sqrt{1 + \gamma}}{2}\right) \ ,\\
		g_{\mathrm{MD}}(\gamma) &\triangleq \frac{1}{1 - \rho^2}\Big( \!\sqrt{ (1-\rho^2)^2 +\gamma} \!- \!\sqrt{\rho^2 \gamma}\Big)  - 1 -  \ln\!\bigg( \frac{1 - \rho^2+\sqrt{ (1-\rho^2)^2 +\gamma }}{2}\bigg)  \ .
	\end{align*} 
	
\end{theorem}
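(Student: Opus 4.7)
My plan is to reduce the test statistic to a sum of $d$ independent, identically distributed terms whose moment generating functions are tractable, then apply Chernoff bounds to the false alarm and missed detection probabilities separately.

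First I would simplify the statistic. Since $T = \sign{\rho}\sum_{i,j} X_i^{\mathsf{T}} Y_j = \sign{\rho}\,\bar{X}^{\mathsf{T}}\bar{Y}$ where $\bar{X} = \sum_i X_i$ and $\bar{Y} = \sum_j Y_j$, and since the sum $\sum_j Y_j = \sum_i Y_{\sigma_i}$ is permutation-invariant, the distribution of $T$ under $H_1$ does not depend on $\sigma$; this eliminates the $\max_\sigma$ in the risk. Decomposing coordinate-wise, write $T = \sign{\rho}\sum_{k=1}^d \bar{X}_k \bar{Y}_k$, and rescale by setting $U_k = \bar{X}_k/\sqrt{n}$, $V_k = \bar{Y}_k/\sqrt{n}$, so that the pairs $(U_k,V_k)$ are i.i.d.\ across $k$, each $\mathcal{N}(0,1)$ marginally, independent under $H_0$ and with correlation $\rho$ under $H_1$. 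Assume $\rho>0$ by symmetry, so $T = n\sum_k U_k V_k$ and the test's threshold condition becomes $\sum_k U_k V_k \gtreqless \sqrt{\gamma}\,d/2$.

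Next I would compute the two relevant MGFs. Under $H_0$, a direct Gaussian integral gives $\expe_0[e^{\lambda UV}] = (1-\lambda^2)^{-1/2}$ for $|\lambda|<1$. Under $H_1$, I would use the decomposition $UV = \tfrac14((U{+}V)^2 - (U{-}V)^2)$, noting that $U{+}V \sim \mathcal{N}(0,2(1+\rho))$ and $U{-}V\sim \mathcal{N}(0,2(1-\rho))$ are independent; chi-squared MGFs then yield $\expe_1[e^{-\lambda UV}] = \big((1+\lambda(1+\rho))(1-\lambda(1-\rho))\big)^{-1/2} = (1 + 2\lambda\rho - \lambda^2(1-\rho^2))^{-1/2}$. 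The Chernoff bound then gives
\begin{align*}
 \P_0\{T\ge t\} &\le \exp\!\Big(-\tfrac{d}{2}\big[2\lambda\tfrac{\sqrt\gamma}{2} + \ln(1-\lambda^2)\big]\Big),\\
 \P_{1|\sigma}\{T < t\} &\le \exp\!\Big(-\tfrac{d}{2}\big[\ln(1+2\lambda\rho-\lambda^2(1-\rho^2)) - 2\lambda\tfrac{\sqrt\gamma}{2}\big]\Big),
\end{align*}
which I would then optimize in $\lambda>0$. The false-alarm optimization is a one-variable calculus problem yielding $\lambda^* = (\sqrt{1+\gamma}-1)/\sqrt{\gamma}$, and telescoping the resulting $1-\lambda^{*2}$ with the conjugate $1+\sqrt{1+\gamma}$ produces exactly $g_{\mathrm{FA}}(\gamma)$. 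The missed-detection optimization yields a quadratic in $\lambda$; a short discriminant calculation (using $\rho^2 + (1-\rho^2) = 1$) gives $\lambda^* = \bigl(1-\rho^2 + \sqrt{\gamma}\rho - \sqrt{(1-\rho^2)^2 + \gamma}\bigr)/((1-\rho^2)\sqrt{\gamma})$, and substituting back, together with the first-order condition, simplifies the exponent to $g_{\mathrm{MD}}(\gamma)$. The constraint $\gamma\in(0,4\rho^2)$ is precisely what is needed to keep $\lambda^*$ positive (equivalently, to keep the threshold below the mean of $T$ under $H_1$).

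Finally, for the cleaner bound $2\exp(-d\rho^2/60)$, I would pick a single fixed choice of $\gamma$ (e.g.\ $\gamma = \rho^2$) and bound both $g_{\mathrm{FA}}$ and $g_{\mathrm{MD}}$ from below by $\rho^2/30$ using Taylor expansion around $\gamma=0$ together with elementary monotonicity inequalities to handle the full range $\rho^2\in(0,1)$ (the $\ln$ terms require care since naive Taylor gives the right leading behavior only for small $\rho$). The main obstacle I anticipate is not the Chernoff setup itself but the algebraic bookkeeping for $g_{\mathrm{MD}}$: solving the quadratic for $\lambda^*$ cleanly, and then verifying that the explicit two-term formula for the exponent matches, plus ensuring the universal $1/60$ constant holds uniformly over $\rho\in(-1,1)\setminus\{0\}$ rather than only in the small-$\rho$ regime.
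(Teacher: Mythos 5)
Your route to the two moment generating functions is valid and arguably tidier than the paper's. You observe $T=\sign{\rho}\,\bar{X}^{\mathsf{T}}\bar{Y}$, reduce to $d$ i.i.d.\ bivariate standard normal pairs $(U_k,V_k)$ with correlation $\rho$ under $H_1$, and diagonalize each product via $U_kV_k=\tfrac14\big((U_k+V_k)^2-(U_k-V_k)^2\big)$ to read off $(1\pm 2\lambda\rho-\lambda^2(1-\rho^2))^{-1/2}$ from chi-squared MGFs. The paper (Lemma~\ref{lemma: mgf-T}) instead conditions on $Y^n$, writes $X_i=\rho Y_{\sigma_i}+\sqrt{1-\rho^2}\,Z_i$, and finishes with the $\chi^2_d$ MGF of $\|\sum_j Y_j\|^2/n$; after the reparametrization $\lambda_{\text{you}}=n\lambda_{\text{paper}}$ the two give identical MGFs, so the difference is organizational rather than substantive. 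Your Chernoff setup, the quadratic $\sqrt{\gamma}(1-\rho^2)\lambda^2-2(\sqrt{\gamma}\rho+1-\rho^2)\lambda+2\rho-\sqrt{\gamma}=0$, the discriminant $\gamma+(1-\rho^2)^2$, the optimizers $\lambda^*_{\mathrm{FA}}=(\sqrt{1+\gamma}-1)/\sqrt{\gamma}$ and $\lambda^*_{\mathrm{MD}}$, and the resulting $g_{\mathrm{FA}},g_{\mathrm{MD}}$ all check out and match what the paper obtains via Lemma~\ref{lemma: minimize-g}.

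The genuine gap is exactly where you flagged it: the final uniform bound $R(\phi_T)\le 2\exp(-d\rho^2/60)$. ``Taylor expansion around $\gamma=0$ together with elementary monotonicity inequalities'' does not, as stated, deliver a bound of the form $g_{\mathrm{MD}}(\rho^2)\ge\rho^2/30$ that holds for \emph{all} $\rho^2\in(0,1)$, and you do not supply the concrete inequalities that would make it so. The paper's resolution is structurally different: it does \emph{not} Taylor-expand $g_{\mathrm{MD}}$ directly. Instead it (a) shows the easy lower bound $g_{\mathrm{FA}}(\gamma)\ge\frac{\sqrt{2}-1}{2}\gamma$ from $\ln x\le x-1$ and concavity of $\sqrt{\cdot}$, and (b) establishes the nontrivial \emph{comparison} inequality $g_{\mathrm{MD}}(\rho^2)\ge g_{\mathrm{FA}}(\rho^2)-(\sqrt{2}-1)^2\rho^2$ for $\rho^2\in(0,1]$; combining these gives $g_{\mathrm{MD}}(\rho^2)\ge\big(\tfrac{\sqrt{2}-1}{2}-(\sqrt{2}-1)^2\big)\rho^2=\tfrac{(\sqrt{2}-1)^3}{2}\rho^2>\rho^2/30$. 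Step (b) is Lemma~\ref{lemma: ach-f-func}, which in turn rests on the sandwich inequality $x^2(\sqrt{1+x}-1)\le\sqrt{1+x^3}-1\le x(\sqrt{1+x}-1)$ of Lemma~\ref{lemma: ach-f-lemma} and a Pad\'e-type bound $\ln y\le 2-\frac{4}{y+1}$; it is a careful global estimate, not a local expansion. To complete your proof you would need to supply an argument of this kind, or an equivalent explicit one-variable verification that $g_{\mathrm{MD}}(x)\ge x/30$ on all of $(0,1]$.
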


It follows from the simple upper bound $R(\phi_T) \leq 2 \exp(-\frac{d \rho^2}{60})$  that, if $\rho^2 = \omega\big(1/d \big)$, then $R(\phi_T) \to 0$ as $d\to \infty$. Our converse argument relies on a truncated second-moment method and can be found in Section~\ref{sec: converse}. 

\begin{theorem}[Detection Converse]\label{thm:detectionconverse} For $n\geq e^2$,
	if $$\rho^2 = o\bigg(\frac{1}{d\sqrt{n}}\bigg) \; \text{ and } \; d=\Omega(\ln n ) \ ,$$ then	the minimax risk for the binary hypothesis testing problem~\eqref{eq: problem} goes to 1, i.e., $R^* \to 1$ as $d$, $n\to \infty$. 
\end{theorem}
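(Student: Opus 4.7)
The plan is to reduce the minimax risk to a total-variation problem by placing a uniform prior on $\sigma \in S_n$ and then to bound the resulting TV distance via a truncated second-moment estimate. Define the Bayes-averaged alternate distribution $\bar{\P}_1 \triangleq \tfrac{1}{n!}\sum_{\sigma \in S_n} \P_{1|\sigma}$. Since the maximum over $\sigma$ dominates the average, $R^* \geq 1 - d_{\mathrm{TV}}(\P_0,\bar{\P}_1)$, so it suffices to show $d_{\mathrm{TV}}(\P_0,\bar{\P}_1) \to 0$. Writing the likelihood ratio as $L = \tfrac{1}{n!}\sum_\sigma L_\sigma$ with $L_\sigma = \prod_i \tfrac{p_{XY}(X_i,Y_{\sigma_i})}{p_X(X_i)\, p_Y(Y_{\sigma_i})}$, the vanilla Cauchy--Schwarz bound $d_{\mathrm{TV}} \leq \tfrac{1}{2}\sqrt{\expe_0[L^2]-1}$ is inflated by rare data configurations (e.g., atypically large $\|X_i\|$ or $\|Y_j\|$ that blow up individual density ratios), so I would work with a truncated likelihood $\tilde L \triangleq L \cdot \mathbf{1}_{\mathcal E}$ for the good event $\mathcal E \triangleq \{\|X_i\|^2,\, \|Y_j\|^2 \in [(1-\delta)d,(1+\delta)d]\text{ for all } i,j\}$ with $\delta > 0$ a small constant.

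By Laurent--Massart $\chi^2_d$ concentration and a union bound over the $2n$ entries, $\P_0(\mathcal E^c) = O(n e^{-c\delta^2 d})$, which is $o(1)$ precisely because $d = \Omega(\ln n)$; the bound transfers to $\bar{\P}_1$ since per-row marginals coincide across hypotheses. Using $\expe_0[L]=1$ and $\expe_0[L\cdot\mathbf{1}_{\mathcal E^c}] = \bar{\P}_1(\mathcal E^c)$, one derives
\begin{align*}
d_{\mathrm{TV}}(\P_0,\bar{\P}_1) \;\leq\; \bar{\P}_1(\mathcal E^c) + \tfrac{1}{2}\sqrt{\mathrm{Var}_0(\tilde L)},
\end{align*}
so the remaining task is to establish $\mathrm{Var}_0(\tilde L) = o(1)$.

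Expanding the square gives $\expe_0[\tilde L^2] = \tfrac{1}{(n!)^2}\sum_{\sigma,\tau}\expe_0[\mathbf{1}_{\mathcal E}\, L_\sigma L_\tau]$, and by the Gaussian symmetry each inner expectation depends only on the cycle structure of $\pi \triangleq \sigma\tau^{-1}$: a $k$-cycle of $\pi$ corresponds to a bipartite $2k$-cycle of correlated Gaussians, yielding a truncated Gaussian integral whose untruncated value equals $(1-\rho^{2k})^{-d}$. Under the norm truncation each such factor should be bounded by $\tilde a_k \leq 1 + C \cdot d\rho^{2k}$ for some absolute constant $C$. Collecting permutations by cycle type via the classical exponential generating function for $S_n$ then reduces the estimate to
\begin{align*}
\expe_0[\tilde L^2] \;\leq\; [z^n]\,\frac{1}{1-z}\exp\!\Bigl(\sum_{k\geq 1}(\tilde a_k - 1)\tfrac{z^k}{k}\Bigr),
\end{align*}
which one then bounds cycle by cycle.

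The main technical obstacle is calibrating $\delta$ and estimating each truncated factor $\tilde a_k$ precisely enough to yield the stated threshold. The $\sqrt n$ scaling in the condition $\rho^2 = o(1/(d\sqrt n))$ is expected to arise from pairs $(\sigma,\tau)$ whose ratio is a single transposition: there are $\Theta(n^2)$ such pairs, each contributing an excess of order $(d\rho^2)^2$ beyond the fixed-point baseline after truncation, so requiring the aggregate $\sim n(d\rho^2)^2$ to be $o(1)$ forces $\rho^2 = o(1/(d\sqrt n))$. Verifying that longer-cycle contributions remain subdominant under this calibration, and that the $\chi^2_d$ tail $\bar{\P}_1(\mathcal E^c)$ is absorbed via $d = \Omega(\ln n)$, constitute the remaining, mostly bookkeeping portions of the argument.
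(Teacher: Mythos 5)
Your high-level reduction -- Bayes risk with a uniform prior, truncated second moment, Ingster--Suslina/cycle-type accounting via $E_0[L_\sigma L_\tau]$ depending only on $\sigma\tau^{-1}$, Laurent--Massart for tail control -- is the right skeleton and matches the paper. But the heart of the argument, the design of the truncation event, is where your proposal breaks down. You truncate on per-entry norms $\|X_i\|^2,\|Y_j\|^2 \in [(1\pm\delta)d]$ with $\delta$ a small constant. This controls the wrong quantity. For a fixed point $i$ of $\sigma\tau^{-1}$, the per-index likelihood factor is $Z_i = (1-\rho^2)^{-d}\exp\bigl(\tfrac{-\rho^2(\|X_i\|^2+\|Y_i\|^2)+2\rho X_i^\mathsf{T}Y_i}{1-\rho^2}\bigr)$, and the blow-up in $\expe_0[Z_i^{N_1}]$ comes from rare \emph{inner products} $X_i^\mathsf{T}Y_i \gg \sqrt{d}$ (which are typical under $\P_{1|\sigma}$ but rare under $\P_0$), not from atypical norms. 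Conditioning on norms near $d$ leaves $X_i^\mathsf{T}Y_i$ essentially free: its MGF under the norm-truncated $\P_0$ contributes a factor of order $\exp(2\rho^2 d/(1-\rho^2)^2)$, which, combined with the $(1-\rho^2)^{-d}\exp(-2\rho^2(1-\delta)d/(1-\rho^2))$ you would extract from the norm bounds, gives back $\exp(d\rho^2(1+O(\delta)))$ per fixed point -- no improvement over the untruncated $\expe_0 Z_i = (1-\rho^2)^{-d}$. Consequently your claimed $\tilde a_k \le 1 + Cd\rho^{2k}$ is just the untruncated $(1-\rho^{2k})^{-d}$ re-expanded in the regime $d\rho^2 = o(1)$, so the truncation accomplishes nothing, and the Poisson-type sum $\sum_k \tilde a_1^{\,k}/k!$ does not come down to $1+o(1)$.

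By contrast, the paper's truncation event $\Gamma_\sigma$ in \eqref{eq: truncation} is $\sigma$-dependent and imposes, for every subset $T$ of the fixed-point set $F_\sigma$ of size $k\ge k^*$, a lower bound on $\sum_{i\in T}\|X_i\|^2$, a lower bound on $\sum_{i\in T}\|Y_i\|^2$, \emph{and an upper bound on} $\sign{\rho}\sum_{i\in T}X_i^\mathsf{T}Y_{\sigma_i}$ -- precisely the inner-product control your event lacks. The thresholds $w_k = dk - 2\sqrt{d}k\,r_k$ and $v_k = |\rho|dk + 4|\rho|\sqrt{d}k\,s_k$ are deviations of order $\sqrt{d}k$ around the $\P_{1|\sigma}$-means, and the minimal subset size $k^* = 13\sqrt{n}$ is the crucial tuning knob. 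Your heuristic attributes the $\sqrt{n}$ to transposition pairs $\sigma\tau^{-1}$, but these are exponentially rare ($\Theta(n^2/n!)$ under a uniform prior) and do not drive anything. In the paper the $\sqrt n$ arises from balancing the two cases in Lemma~\ref{lemma: truncated-second}: for permutations with $N_1^{\sigma} \le k^*$ the crude bound $\bigl(\tfrac{1+\rho^2}{1-\rho^2}\bigr)^{dk^*/2}$ gives the constraint $\rho^2 = o(1/(dk^*))$, while for $N_1^\sigma > k^*$ the term $\psi$ needs $\rho^2 = o(k^*/(dn))$; optimizing $k^*$ to equate $1/(dk^*)$ and $k^*/(dn)$ yields $k^* \asymp \sqrt n$ and hence $\rho^2 = o(1/(d\sqrt n))$. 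To complete a proof along your lines you would need to (i) replace the truncation event with one controlling the inner products over subsets of fixed points, (ii) introduce and optimize the cut-off parameter $k^*$, and (iii) verify the first-moment bound $\expe_0\tilde L \to 1$ via $\P_{1|\sigma}(\Gamma_\sigma)\to 1$ using both Laurent--Massart and a Hanson--Wright-type Gaussian chaos bound -- at which point you are reconstructing the paper's argument rather than pursuing a genuinely different one.
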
 

For the sake of comparison, we now revisit the key results from~\cite{dai2019database} for the recovery problem.

\begin{theorem}[Recovery Achievability, {\cite[Theorem 1]{dai2019database}}] \label{thm:recoveryachievable}
	The probability of error of the maximum likelihood (ML) decoder $\hat{\sigma}_{\text{ML}} = \argmax_{\sigma \in S_n} \mathbb{P}_{1|\sigma}\left(X^n,Y^n\right)$ is upper bounded by
	\begin{align*}
		P_{\text{err}}(\hat{\sigma}_{\text{ML}}) \leq n(1-\rho^2)^{\frac{d}{4}}\frac{ 1-\big(n(1-\rho^2)^{\frac{d}{4}}\big)^n}{1-n(1-\rho^2)^{\frac{d}{4}}}
	\end{align*} 
\end{theorem}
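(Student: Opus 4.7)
The plan is a union-bound analysis of the ML decoder, stratified by how many positions a candidate permutation differs from the true one. First I would simplify the ML rule. Writing out the Gaussian density and using that $\sum_i \norm{X_i}^2$ and $\sum_i \norm{Y_i}^2$ are invariant to the permutation, the log-likelihood of a candidate $\sigma'$ reduces, up to an additive constant independent of $\sigma'$, to $\frac{\rho}{1-\rho^2}\sum_{i=1}^n X_i^{\mathsf{T}} Y_{\sigma'_i}$. Hence, for $\rho>0$, $\hat\sigma_{\text{ML}} = \argmax_{\sigma' \in S_n} \sum_i X_i^{\mathsf{T}} Y_{\sigma'_i}$ (the $\rho<0$ case follows by flipping a sign). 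By the relabeling symmetry of the model, I may assume the true permutation is $\sigma = \mathrm{id}$ when bounding $P_{\text{err}}$.

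Next I would set up the union bound. An error occurs iff there exists $\sigma' \neq \mathrm{id}$ with $\sum_{i \in I} X_i^{\mathsf{T}}(Y_{\sigma'_i} - Y_i) \geq 0$, where $I = \{i : \sigma'_i \neq i\}$ is the set of displaced indices. Letting $k = |I|$, the permutation $\sigma'$ must restrict to a derangement on $I$ and fix its complement, so the number of such candidates is $\binom{n}{k}\, D_k \leq n^k$, where $D_k$ is the number of derangements of $[k]$.

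The main technical step is to bound the pairwise error probability $\P\bigl(\sum_{i \in I} X_i^{\mathsf{T}}(Y_{\sigma'_i} - Y_i) \geq 0\bigr)$ for a fixed $\sigma'$ and a fixed $k$-set $I$. I would apply a Chernoff bound: since the $d$ feature dimensions are independent, the MGF factors as $d$ copies of the MGF of a bilinear form in $2k$ correlated scalar Gaussians, which has a closed-form determinantal expression $\det(I_{2k} - 2\lambda A C)^{-1/2}$, where $A$ encodes the bilinear form and $C$ is the $2k \times 2k$ covariance built from $k$ independent copies of $\bigl[\begin{smallmatrix}1 & \rho \\ \rho & 1\end{smallmatrix}\bigr]$. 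Decomposing $\sigma'|_I$ into disjoint cycles, this determinant factors across cycles; for a single $k$-cycle the matrix $AC$ is block-circulant and is diagonalizable in the discrete Fourier basis, and optimizing over $\lambda$ should yield the pairwise bound $(1-\rho^2)^{dk/4}$. As a sanity check for $k=2$, the identity $X_1^{\mathsf{T}} Y_1 + X_2^{\mathsf{T}} Y_2 - X_1^{\mathsf{T}} Y_2 - X_2^{\mathsf{T}} Y_1 = (X_1-X_2)^{\mathsf{T}}(Y_1-Y_2)$ together with the orthogonal change of variables $U\pm V$ reduces the pairwise error to an $F$-type tail whose Chernoff bound evaluates to exactly $(1-\rho^2)^{d/2}$.

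Combining the pieces yields
\begin{align*}
P_{\text{err}}(\hat\sigma_{\text{ML}}) \;\leq\; \sum_{k=2}^n n^k (1-\rho^2)^{dk/4} \;\leq\; \sum_{k=1}^n \bigl[n(1-\rho^2)^{d/4}\bigr]^k,
\end{align*}
and summing this truncated geometric series produces the stated closed form. The main obstacle I expect is the MGF/eigenvalue computation for a general cycle structure: one must verify both that the Chernoff optimum produces exactly the exponent $1/4$ per displaced index and that this exponent is independent of the cycle length, since otherwise the geometric series would not assemble into the simple stated form.
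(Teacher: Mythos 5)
The paper does not prove this theorem --- it is quoted verbatim from Dai, Cullina, and Kiyavash~\cite{dai2019database} --- so there is no internal proof to compare against. Your sketch nevertheless gives a valid route to the stated bound and, as far as I can tell, mirrors the strategy of that reference: reduce ML to maximizing $\sum_i X_i^{\mathsf{T}}Y_{\sigma'_i}$, stratify the union bound by the number $k$ of displaced indices, count candidates by $\binom{n}{k}D_k \leq n^k$, and control the pairwise error through the cycle decomposition of the relative permutation.

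The one claim worth tightening is that the per-cycle Chernoff optimum ``produces exactly the exponent $1/4$ per displaced index.'' It does not, except at cycle length two; for longer cycles the exponent is strictly better, and all you need is the one-sided bound. Concretely, the Bhattacharyya coefficient (Chernoff at $\lambda=1/2$) for a single feature coordinate and an $m$-cycle works out to
\begin{align*}
(1-\rho^2)^{m/2}\prod_{k=0}^{m-1}\left(1-\rho^2\cos^2\tfrac{\pi k}{m}\right)^{-1/2},
\end{align*}
so the uniform bound $(1-\rho^2)^{m/4}$ is equivalent to $\prod_{k=0}^{m-1}\big(1-\rho^2\cos^2(\pi k/m)\big)\geq(1-\rho^2)^{m/2}$. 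That inequality follows at once because $c\mapsto\ln(1-\rho^2 c)$ is concave on $[0,1]$ and hence lies above its chord $c\ln(1-\rho^2)$, while $\sum_{k=0}^{m-1}\cos^2(\pi k/m)=m/2$. With this in place the pairwise bound multiplies across cycles and across features to $(1-\rho^2)^{dk/4}$ for any $\sigma'$ with $k$ displaced indices, and the geometric-series step (after relaxing the lower limit from $k=2$ to $k=1$) finishes exactly as you wrote it.
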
 Therefore, if $\rho^2 = 1- o(n^{-\frac{4}{d}})$, then the ML decoder returns the true permutation with high probability.

\begin{theorem}[Recovery Converse, {\cite[Theorem 2]{dai2019database}}] \label{thm:recoveryconverse}
	The minimax probability of error is lower bounded by  
	\begin{align*}
		P^*_{\text{err}} \geq 1 - \big( n (1-\rho^2)^{\frac{d}{4}(1+\epsilon(d))}\big)^{-2} - 4\big( n (1-\rho^2)^{\frac{d}{4}(1+\epsilon(d))}\big)^{-1}
	\end{align*} where $\epsilon(d)\to 0$ as $d\to\infty$. \end{theorem}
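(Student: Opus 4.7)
The plan is to apply a second-moment argument to the number of \emph{confusable transpositions} for the maximum likelihood (ML) decoder. Since the minimax risk is at least the Bayes risk under any prior, and the uniform prior on $S_n$ is invariant under the problem's symmetries so that the Bayes risk equals $\mathbb{P}_{1|\mathrm{id}}\{\hat{\sigma}_{\mathrm{ML}}\neq\mathrm{id}\}$, it suffices to lower bound the latter. A direct expansion of the Gaussian log-likelihood shows that after the $\|X_i\|^2,\|Y_j\|^2$ terms cancel, the log-likelihood ratio between $\mathrm{id}$ and its composition with the transposition $\tau_{ij}$ is proportional to $\rho(X_i-X_j)^{\mathsf{T}}(Y_i-Y_j)$. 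Hence ML must fail whenever the event
\[
A_{ij} \;=\; \bigl\{\rho(X_i-X_j)^{\mathsf{T}}(Y_i-Y_j) \leq 0\bigr\}
\]
occurs for some $i<j$. Setting $Z = \sum_{i<j}\mathbf{1}_{A_{ij}}$, Chebyshev's inequality yields $P^*_{\mathrm{err}} \geq \mathbb{P}(Z\geq 1) \geq 1 - \mathrm{Var}(Z)/(\mathbb{E}[Z])^2$, reducing the proof to a two-moment analysis of $Z$.

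For the first moment, assume WLOG $\rho>0$ and decompose $A\pm B$ with $A=X_1-X_2$, $B=Y_1-Y_2$ into orthogonal Gaussian parts whose normalized squared norms are independent $\chi^2_d$ variables $V_\pm$. The event $A_{12}$ becomes $\{(1+\rho)V_+ \leq (1-\rho)V_-\}$, and a Chernoff bound on this chi-square ratio, matched by a Bahadur--Rao-type sharp lower bound, gives $p := \mathbb{P}(A_{12}) = (1-\rho^2)^{d/2}\cdot C(d)$ with subexponential $C(d)$. Defining $\lambda = n(1-\rho^2)^{d/4\cdot(1+\epsilon(d))}$ with $\epsilon(d)\to 0$ chosen to absorb $C(d)$, we obtain $\mathbb{E}[Z] = \binom{n}{2}p \asymp \lambda^2$, so the diagonal contribution to the variance is $\mathcal{O}(1/\mathbb{E}[Z]) \leq 1/\lambda^2$ after folding the polynomial prefactors into $\epsilon(d)$.

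For the covariance, pairs of index-pairs fall into two classes: disjoint pairs $(i,j),(k,l)$ with $\{i,j\}\cap\{k,l\}=\emptyset$ contribute zero covariance (their indicators depend on disjoint, independent Gaussian users), while the $\Theta(n^3)$ overlapping pairs each contribute $p_2-p^2$ with
\[
p_2 \;=\; \mathbb{P}(A_{12}\cap A_{13}) \;=\; \mathbb{E}\bigl[\phi(X_1,Y_1)^2\bigr], \qquad \phi(x,y) = \mathbb{P}(A_{12}\mid X_1=x, Y_1=y),
\]
by conditional independence of $A_{12}, A_{13}$ given the shared user. The key estimate to establish is $p_2 \leq p^{3/2}\cdot \mathrm{poly}(d)$. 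I would prove this by writing $\phi(x,y)$ as the probability of an inequality between two independent noncentral chi-squares (with noncentralities $\|x+y\|^2/(2+2\rho)$ and $\|x-y\|^2/(2-2\rho)$), applying a conditional Chernoff bound $\phi(x,y) \leq \exp(-d\,J(x,y))$ for an explicit rate function $J$, and then applying Laplace's method to both $\mathbb{E}[\phi]$ and $\mathbb{E}[\phi^2]$ against the Gaussian density of $(X_1,Y_1)$. The optimizers of the two Laplace integrals differ, and comparing their exponential rates yields the extra factor of $p^{1/2}$ separating $p_2$ from $p^2$.

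Putting the estimates together, $\mathrm{Var}(Z)/(\mathbb{E}[Z])^2 \leq 1/\lambda^2 + 4p_2/(np^2) \leq 1/\lambda^2 + 4/\lambda$ after absorbing polynomial corrections into $\epsilon(d)$, giving $P^*_{\mathrm{err}} \geq 1 - \lambda^{-2} - 4\lambda^{-1}$ as claimed. The main obstacle is the sharp two-moment comparison $p_2 \leq p^{3/2}\cdot\mathrm{poly}(d)$: crude one-sided Chernoff estimates leave sub-exponential factors that blow up when squared, so the conditional tail estimate requires a Bahadur--Rao-type refinement, and the Laplace integration over $(X_1,Y_1)$ must be carried out carefully so that all polynomial prefactors fit into a single $\epsilon(d)\to 0$ in the stated regime.
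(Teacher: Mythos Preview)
This theorem is not proved in the paper you were given: it is quoted verbatim from Dai, Cullina, and Kiyavash~\cite[Theorem~2]{dai2019database} and is included only for comparison with the detection bounds. There is therefore no ``paper's own proof'' here to compare against.

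That said, your plan is essentially the standard route for this type of converse, and it matches the architecture of the original argument in~\cite{dai2019database}: reduce to the Bayes-optimal (ML) decoder by symmetry, observe that ML fails whenever some transposition $\tau_{ij}$ beats the identity, and apply a second-moment method to $Z=\sum_{i<j}\mathbf{1}_{A_{ij}}$. Your first-moment calculation is correct (the event $A_{12}$ reduces to a ratio of independent $\chi^2_d$ variables and $p=(1-\rho^2)^{d/2}$ up to a sub-exponential factor, so $\mathbb{E}[Z]\asymp\lambda^2$), and the covariance decomposition into disjoint and single-overlap pairs is also right.

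The one place where your sketch is genuinely incomplete is exactly where you flag it: the estimate $p_2=\mathbb{E}[\phi(X_1,Y_1)^2]\leq p^{3/2}\cdot\mathrm{poly}(d)$. This is the heart of the proof and does require a careful conditional large-deviations computation; a naive Chernoff bound on $\phi$ followed by squaring loses too much. In~\cite{dai2019database} this is handled by an explicit analysis of the joint tail of the two overlapping quadratic forms, yielding precisely the extra $p^{1/2}$ factor. Your Laplace-method outline points in the right direction, but to turn it into a proof you would need to (i) identify the saddle point of the tilted integral for $\mathbb{E}[\phi^2]$ and verify it gives rate $\tfrac{3}{4}\ln\tfrac{1}{1-\rho^2}$ per dimension, and (ii) control the Hessian so that the prefactor is at most polynomial in $d$ uniformly over the relevant range of $\rho$. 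Without those two steps the bound $\mathrm{Var}(Z)/(\mathbb{E}[Z])^2\leq \lambda^{-2}+4\lambda^{-1}$ remains an unproved target rather than a conclusion.
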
 
Consequently, if $\rho^2 = 1-\omega(n^{-\frac{4}{d}})$, then the probability of error of any decoder is close to $1$.

In Figures~\ref{fig:riskvsd} and~\ref{fig:riskvsn}, we have the plotted upper and lower bounds on the squared correlation coefficient $\rho^2$ required to attain a specified risk $R$ that follow from Theorems~\ref{thm:detectionachievable} and~\ref{thm:detectionconverse}. The quantitative form of the converse result in Theorem~\ref{thm:detectionconverse} follows from Lemmas~\ref{lemma: truncated-first} and~\ref{lemma: truncated-second} by choosing the parameters appropriately. For comparison, we have also plotted the achievable and converse results from the recovery problem from Theorems~\ref{thm:recoveryachievable} and~\ref{thm:recoveryconverse}. Specifically, any recovery scheme can be converted into a detection scheme by first estimating the permutation as $\hat{\sigma}$ and then thresholding the statistic $\mathrm{sign}(\rho) \sum_{i} X_i^T Y_{\hat{\sigma}_i}$. These plots show that, in certain regimes, detection is possible at a lower squared correlation coefficient $\rho^2$ than recovery.

\begin{figure}[b!]
	\centering
	\begin{minipage}{0.485\columnwidth}
		\centering
%
%
%
\definecolor{mycolor1}{rgb}{0.00000,1.00000,1.00000}%
\begin{tikzpicture}

\begin{axis}[%
width=2.6in,
height=2.53in,
scale only axis,
separate axis lines,
every outer x axis line/.append style={white!15!black},
every x tick label/.append style={font=\color{white!15!black}},
x tick label style={/pgf/number format/.cd,%
          scaled x ticks = false,
          set thousands separator={},
          fixed},
xmin=0,
xmax=10000,
xlabel={$d$, Number of Features},
every outer y axis line/.append style={white!15!black},
every y tick label/.append style={font=\color{white!15!black}},
ymode=log,
ymin=1e-05,
ymax=1,
yminorticks=true,
ylabel={$\rho{}^\text{2}$, Squared Correlation Coefficient},
title={Number of Users $n = 10000$, Risk $R = 0.1$},
legend style={draw=white!15!black,fill=white,legend cell align=left}
]
\addplot [color=LineGreen,dotted,line width=2.0pt]
  table[row sep=crcr]{%
18.4206807439524	0.930130004882812\\
222.126381136933	0.198029518127441\\
425.832081529913	0.108736038208008\\
629.537781922894	0.0749120712280273\\
833.243482315875	0.0571327209472656\\
1036.94918270886	0.046173095703125\\
1240.65488310184	0.0387411117553711\\
1444.36058349482	0.0333690643310547\\
1648.0662838878	0.0293064117431641\\
1851.77198428078	0.0261249542236328\\
2055.47768467376	0.0235662460327148\\
2259.18338506674	0.0214643478393555\\
2462.88908545972	0.0197067260742188\\
2666.5947858527	0.0182151794433594\\
2870.30048624568	0.0169334411621094\\
3074.00618663866	0.0158205032348633\\
3277.71188703164	0.0148448944091797\\
3481.41758742462	0.013981819152832\\
3685.1232878176	0.013214111328125\\
3888.82898821058	0.0125265121459961\\
4092.53468860356	0.011906623840332\\
4296.24038899654	0.0113458633422852\\
4499.94608938952	0.0108346939086914\\
4703.65178978251	0.0103683471679688\\
4907.35749017549	0.00994014739990234\\
5111.06319056847	0.00954532623291016\\
5314.76889096145	0.00918102264404297\\
5518.47459135443	0.00884437561035156\\
5722.18029174741	0.00853061676025391\\
5925.88599214039	0.00823879241943359\\
6129.59169253337	0.00796604156494141\\
6333.29739292635	0.00771045684814453\\
6537.00309331933	0.00747108459472656\\
6740.70879371231	0.00724601745605469\\
6944.41449410529	0.0070343017578125\\
7148.12019449827	0.00683498382568359\\
7351.82589489125	0.00664615631103516\\
7555.53159528423	0.00646781921386719\\
7759.23729567721	0.00629806518554688\\
7962.94299607019	0.00613784790039062\\
8166.64869646317	0.00598526000976562\\
8370.35439685616	0.00583934783935547\\
8574.06009724914	0.00570106506347656\\
8777.76579764212	0.0055694580078125\\
8981.4714980351	0.00544357299804688\\
9185.17719842808	0.00532341003417969\\
9388.88289882106	0.00520801544189453\\
9592.58859921404	0.00509738922119141\\
9796.29429960702	0.00499153137207031\\
10000	0.00489044189453125\\
};
\addlegendentry{Recovery Achievable};

\addplot [color=black!70!white,dash pattern=on 10pt off 11pt,line width=2.0pt]
  table[row sep=crcr]{%
18.4206807439524	0.793833689808614\\
222.126381136933	0.128414002454846\\
425.832081529913	0.0693268325275271\\
629.537781922894	0.0474719408486928\\
833.243482315875	0.0360923984209643\\
1036.94918270886	0.0291132310266358\\
1240.65488310184	0.0243957064798149\\
1444.36058349482	0.0209938007344178\\
1648.0662838878	0.0184245294252141\\
1851.77198428078	0.0164155418157523\\
2055.47768467376	0.0148015870980349\\
2259.18338506674	0.0134765821582141\\
2462.88908545972	0.012369306449652\\
2666.5947858527	0.0114301681303532\\
2870.30048624568	0.0106235730733573\\
3074.00618663866	0.00992331171764049\\
3277.71188703164	0.00930965748678148\\
3481.41758742462	0.0087674790622444\\
3685.1232878176	0.00828497595971223\\
3888.82898821058	0.00785280982613323\\
4092.53468860356	0.00746349413314895\\
4296.24038899654	0.00711095683247409\\
4499.94608938952	0.00679022136544027\\
4703.65178978251	0.00649717027231989\\
4907.35749017549	0.00622836748062483\\
5111.06319056847	0.00598092295409614\\
5314.76889096145	0.00575238837222769\\
5518.47459135443	0.00554067584608076\\
5722.18029174741	0.00534399394631913\\
5925.88599214039	0.00516079688918225\\
6129.59169253337	0.00498974382762918\\
6333.29739292635	0.00482966597836976\\
6537.00309331933	0.00467953987980352\\
6740.70879371231	0.00453846548707704\\
6944.41449410529	0.00440564811337207\\
7148.12019449827	0.00428038345191795\\
7351.82589489125	0.00416204508253426\\
7555.53159528423	0.00405007399481772\\
7759.23729567721	0.0039439697581537\\
7962.94299607019	0.00384328304425752\\
8166.64869646317	0.0037476092665607\\
8370.35439685616	0.00365658314655382\\
8574.06009724914	0.00356987405321985\\
8777.76579764212	0.00348718199020304\\
8981.4714980351	0.00340823412806557\\
9185.17719842808	0.00333278179715957\\
9388.88289882106	0.00326059787128197\\
9592.58859921404	0.00319147448412027\\
9796.29429960702	0.00312522103012969\\
10000	0.00306166240935046\\
};
\addlegendentry{Recovery Converse};

\addplot [color=LineBlue,solid,line width=2.0pt]
  table[row sep=crcr]{%
18.4206807439524	1\\
222.126381136933	0.109429266371638\\
425.832081529913	0.0566873905504654\\
629.537781922894	0.0382538241335444\\
833.243482315875	0.0288670221884682\\
1036.94918270886	0.023179482040648\\
1240.65488310184	0.0193643376043695\\
1444.36058349482	0.0166274688188838\\
1648.0662838878	0.0145685217771221\\
1851.77198428078	0.0129633076155239\\
2055.47768467376	0.011676676156574\\
2259.18338506674	0.010622449651704\\
2462.88908545972	0.00974275261873912\\
2666.5947858527	0.00899765153334704\\
2870.30048624568	0.00835842978453814\\
3074.00618663866	0.00780398263585967\\
3277.71188703164	0.00731850418073081\\
3481.41758742462	0.00688993710920686\\
3685.1232878176	0.00650875057214631\\
3888.82898821058	0.00616754294870516\\
4092.53468860356	0.00586034439387087\\
4296.24038899654	0.00558227740110303\\
4499.94608938952	0.00532938574616936\\
4703.65178978251	0.00509843499272492\\
4907.35749017549	0.00488669282299133\\
5111.06319056847	0.00469179543752098\\
5314.76889096145	0.00451187044984106\\
5518.47459135443	0.00434522875087681\\
5722.18029174741	0.00419045171267872\\
5925.88599214039	0.00404634469871015\\
6129.59169253337	0.0039117879820835\\
6333.29739292635	0.00378591416066452\\
6537.00309331933	0.00366788528427225\\
6740.70879371231	0.00355699012502098\\
6944.41449410529	0.00345260090694147\\
7148.12019449827	0.00335418539573573\\
7351.82589489125	0.003261200397259\\
7555.53159528423	0.00317322936080516\\
7759.23729567721	0.00308989947901558\\
7962.94299607019	0.00301083303900676\\
8166.64869646317	0.00293569001398667\\
8370.35439685616	0.00286422491475071\\
8574.06009724914	0.00279615560419091\\
8777.76579764212	0.00273124566399996\\
8981.4714980351	0.0026692801243634\\
9185.17719842808	0.0026100630855732\\
9388.88289882106	0.00255341564925689\\
9592.58859921404	0.00249917411319804\\
9796.29429960702	0.00244718839137983\\
10000	0.0023973206271363\\
};
\addlegendentry{Detection Achievable};

\addplot [color=LineRed,dash pattern=on 2pt off 4pt on 8pt off 4pt,line width=2.0pt]
  table[row sep=crcr]{%
18.4206807439524	0.0261611938476562\\
222.126381136933	0.00242900848388672\\
425.832081529913	0.00127410888671875\\
629.537781922894	0.000863075256347656\\
833.243482315875	0.000652313232421875\\
1036.94918270886	0.000524520874023438\\
1240.65488310184	0.000438690185546875\\
1444.36058349482	0.000376701354980469\\
1648.0662838878	0.000329971313476562\\
1851.77198428078	0.000293731689453125\\
2055.47768467376	0.000265121459960938\\
2259.18338506674	0.000241279602050781\\
2462.88908545972	0.00022125244140625\\
2666.5947858527	0.000204086303710938\\
2870.30048624568	0.000189781188964844\\
3074.00618663866	0.000177383422851562\\
3277.71188703164	0.000165939331054688\\
3481.41758742462	0.000156402587890625\\
3685.1232878176	0.000147819519042969\\
3888.82898821058	0.000140190124511719\\
4092.53468860356	0.000132560729980469\\
4296.24038899654	0.000126838684082031\\
4499.94608938952	0.000121116638183594\\
4703.65178978251	0.000115394592285156\\
4907.35749017549	0.000110626220703125\\
5111.06319056847	0.000105857849121094\\
5314.76889096145	0.000102043151855469\\
5518.47459135443	9.82284545898438e-05\\
5722.18029174741	9.5367431640625e-05\\
5925.88599214039	9.1552734375e-05\\
6129.59169253337	8.86917114257812e-05\\
6333.29739292635	8.58306884765625e-05\\
6537.00309331933	8.29696655273438e-05\\
6740.70879371231	8.0108642578125e-05\\
6944.41449410529	7.82012939453125e-05\\
7148.12019449827	7.62939453125e-05\\
7351.82589489125	7.34329223632812e-05\\
7555.53159528423	7.15255737304688e-05\\
7759.23729567721	6.96182250976562e-05\\
7962.94299607019	6.77108764648438e-05\\
8166.64869646317	6.67572021484375e-05\\
8370.35439685616	6.4849853515625e-05\\
8574.06009724914	6.29425048828125e-05\\
8777.76579764212	6.19888305664062e-05\\
8981.4714980351	6.00814819335938e-05\\
9185.17719842808	5.91278076171875e-05\\
9388.88289882106	5.7220458984375e-05\\
9592.58859921404	5.62667846679688e-05\\
9796.29429960702	5.53131103515625e-05\\
10000	5.43594360351562e-05\\
};
\addlegendentry{Detection Converse};

\end{axis}
\end{tikzpicture}%
		\caption{Upper and lower bounds on the squared correlation coefficient $\rho^2$ needed to attain risk $R = 0.1$ for $n = 10000$ users, with respect to the number of features $d$.}
		\label{fig:riskvsd}
	\end{minipage}\hfill
	\begin{minipage}{0.485\columnwidth}
		\centering
%
%
%
\definecolor{mycolor1}{rgb}{0.00000,1.00000,1.00000}%
\begin{tikzpicture}

\begin{axis}[%
width=2.6in,
height=2.53in,
scale only axis,
separate axis lines,
every outer x axis line/.append style={white!15!black},
every x tick label/.append style={font=\color{white!15!black}},
xmin=0,
xmax=20000,
x tick label style={/pgf/number format/.cd,%
          scaled x ticks = false,
          set thousands separator={},
          fixed},
xlabel={$n$, Number of Users},
every outer y axis line/.append style={white!15!black},
every y tick label/.append style={font=\color{white!15!black}},
ymode=log,
ymin=1e-05,
ymax=1,
yminorticks=true,
ylabel={$\rho{}^\text{2}$, Squared Correlation Coefficient},
title={Number of Features $d = 1000$, Risk $R = 0.1$},
legend style={at={(0.352321428571429,0.013319047619048)},anchor=south west,draw=white!15!black,fill=white,legend cell align=left}
]
\addplot [color=LineGreen,dotted,line width=2.0pt]
  table[row sep=crcr]{%
100	0.0301361083984375\\
506.122448979592	0.0364065170288086\\
912.244897959184	0.0386743545532227\\
1318.36734693878	0.0400896072387695\\
1724.48979591837	0.0411195755004883\\
2130.61224489796	0.04193115234375\\
2536.73469387755	0.0425987243652344\\
2942.85714285714	0.0431680679321289\\
3348.97959183673	0.0436620712280273\\
3755.10204081633	0.0440998077392578\\
4161.22448979592	0.0444927215576172\\
4567.34693877551	0.0448484420776367\\
4973.4693877551	0.0451736450195312\\
5379.59183673469	0.0454740524291992\\
5785.71428571429	0.0457515716552734\\
6191.83673469388	0.0460100173950195\\
6597.95918367347	0.0462532043457031\\
7004.08163265306	0.0464811325073242\\
7410.20408163265	0.0466957092285156\\
7816.32653061224	0.0468988418579102\\
8222.44897959184	0.0470924377441406\\
8628.57142857143	0.0472755432128906\\
9034.69387755102	0.0474510192871094\\
9440.81632653061	0.0476188659667969\\
9846.9387755102	0.0477790832519531\\
10253.0612244898	0.0479326248168945\\
10659.1836734694	0.0480813980102539\\
11065.306122449	0.0482234954833984\\
11471.4285714286	0.0483608245849609\\
11877.5510204082	0.0484933853149414\\
12283.6734693878	0.0486211776733398\\
12689.7959183673	0.0487451553344727\\
13095.9183673469	0.0488643646240234\\
13502.0408163265	0.048980712890625\\
13908.1632653061	0.0490932464599609\\
14314.2857142857	0.0492029190063477\\
14720.4081632653	0.0493097305297852\\
15126.5306122449	0.049412727355957\\
15532.6530612245	0.0495138168334961\\
15938.7755102041	0.0496120452880859\\
16344.8979591837	0.0497074127197266\\
16751.0204081633	0.0498008728027344\\
17157.1428571429	0.049891471862793\\
17563.2653061224	0.0499801635742188\\
17969.387755102	0.0500679016113281\\
18375.5102040816	0.0501527786254883\\
18781.6326530612	0.0502357482910156\\
19187.7551020408	0.0503168106079102\\
19593.8775510204	0.0503959655761719\\
20000	0.0504741668701172\\
};
\addlegendentry{Recovery Achievable};

\addplot [color=black!70!white,dash pattern=on 10pt off 11pt,line width=2.0pt]
  table[row sep=crcr]{%
100	0.0121592127579635\\
506.122448979592	0.0185396805643344\\
912.244897959184	0.0208474827818615\\
1318.36734693878	0.022287238464867\\
1724.48979591837	0.0233358352836389\\
2130.61224489796	0.0241608337857673\\
2536.73469387755	0.0248409284452382\\
2942.85714285714	0.0254194351745436\\
3348.97959183673	0.0259227564347037\\
3755.10204081633	0.026368179625859\\
4161.22448979592	0.026767642385667\\
4567.34693877551	0.0271297345788589\\
4973.4693877551	0.0274608435109627\\
5379.59183673469	0.0277658478512829\\
5785.71428571429	0.0280485581520549\\
6191.83673469388	0.0283120074808934\\
6597.95918367347	0.0285586494803736\\
7004.08163265306	0.0287904971263797\\
7410.20408163265	0.0290092222914557\\
7816.32653061224	0.0292162286902196\\
8222.44897959184	0.0294127063138446\\
8628.57142857143	0.0295996727187562\\
9034.69387755102	0.0297780048039096\\
9440.81632653061	0.0299484635903079\\
9846.9387755102	0.0301117137740263\\
10253.0612244898	0.0302683393220927\\
10659.1836734694	0.0304188560348736\\
11065.306122449	0.0305637217565038\\
11471.4285714286	0.0307033447426961\\
11877.5510204082	0.0308380905710588\\
12283.6734693878	0.0309682878883023\\
12689.7959183673	0.0310942332215998\\
13095.9183673469	0.0312161950311974\\
13502.0408163265	0.0313344171434654\\
13908.1632653061	0.0314491216746734\\
14314.2857142857	0.0315605115335342\\
14720.4081632653	0.0316687725732965\\
15126.5306122449	0.0317740754506717\\
15532.6530612245	0.031876577238244\\
15938.7755102041	0.0319764228285785\\
16344.8979591837	0.0320737461615022\\
16751.0204081633	0.0321686713006238\\
17157.1428571429	0.0322613133807819\\
17563.2653061224	0.0323517794445529\\
17969.387755102	0.0324401691830543\\
18375.5102040816	0.0325265755938827\\
18781.6326530612	0.0326110855670694\\
19187.7551020408	0.0326937804082981\\
19593.8775510204	0.032774736307278\\
20000	0.0328540247580271\\
};
\addlegendentry{Recovery Converse};

\addplot [color=LineBlue,solid,line width=2.0pt]
  table[row sep=crcr]{%
100	0.0240385162152998\\
506.122448979592	0.0240385162152998\\
912.244897959184	0.0240385162152998\\
1318.36734693878	0.0240385162152998\\
1724.48979591837	0.0240385162152998\\
2130.61224489796	0.0240385162152998\\
2536.73469387755	0.0240385162152998\\
2942.85714285714	0.0240385162152998\\
3348.97959183673	0.0240385162152998\\
3755.10204081633	0.0240385162152998\\
4161.22448979592	0.0240385162152998\\
4567.34693877551	0.0240385162152998\\
4973.4693877551	0.0240385162152998\\
5379.59183673469	0.0240385162152998\\
5785.71428571429	0.0240385162152998\\
6191.83673469388	0.0240385162152998\\
6597.95918367347	0.0240385162152998\\
7004.08163265306	0.0240385162152998\\
7410.20408163265	0.0240385162152998\\
7816.32653061224	0.0240385162152998\\
8222.44897959184	0.0240385162152998\\
8628.57142857143	0.0240385162152998\\
9034.69387755102	0.0240385162152998\\
9440.81632653061	0.0240385162152998\\
9846.9387755102	0.0240385162152998\\
10253.0612244898	0.0240385162152998\\
10659.1836734694	0.0240385162152998\\
11065.306122449	0.0240385162152998\\
11471.4285714286	0.0240385162152998\\
11877.5510204082	0.0240385162152998\\
12283.6734693878	0.0240385162152998\\
12689.7959183673	0.0240385162152998\\
13095.9183673469	0.0240385162152998\\
13502.0408163265	0.0240385162152998\\
13908.1632653061	0.0240385162152998\\
14314.2857142857	0.0240385162152998\\
14720.4081632653	0.0240385162152998\\
15126.5306122449	0.0240385162152998\\
15532.6530612245	0.0240385162152998\\
15938.7755102041	0.0240385162152998\\
16344.8979591837	0.0240385162152998\\
16751.0204081633	0.0240385162152998\\
17157.1428571429	0.0240385162152998\\
17563.2653061224	0.0240385162152998\\
17969.387755102	0.0240385162152998\\
18375.5102040816	0.0240385162152998\\
18781.6326530612	0.0240385162152998\\
19187.7551020408	0.0240385162152998\\
19593.8775510204	0.0240385162152998\\
20000	0.0240385162152998\\
};
\addlegendentry{Detection Achievable};

\addplot [color=LineRed,dash pattern=on 2pt off 4pt on 8pt off 4pt,line width=2.0pt]
  table[row sep=crcr]{%
100	0.00541877746582031\\
506.122448979592	0.002410888671875\\
912.244897959184	0.00179862976074219\\
1318.36734693878	0.00149440765380859\\
1724.48979591837	0.00130939483642578\\
2130.61224489796	0.00117683410644531\\
2536.73469387755	0.00107955932617188\\
2942.85714285714	0.00100135803222656\\
3348.97959183673	0.000939369201660156\\
3755.10204081633	0.000887870788574219\\
4161.22448979592	0.000843048095703125\\
4567.34693877551	0.000804901123046875\\
4973.4693877551	0.000771522521972656\\
5379.59183673469	0.000741004943847656\\
5785.71428571429	0.000715255737304688\\
6191.83673469388	0.000691413879394531\\
6597.95918367347	0.000669479370117188\\
7004.08163265306	0.000650405883789062\\
7410.20408163265	0.000631332397460938\\
7816.32653061224	0.000615119934082031\\
8222.44897959184	0.000599861145019531\\
8628.57142857143	0.000585556030273438\\
9034.69387755102	0.00057220458984375\\
9440.81632653061	0.000559806823730469\\
9846.9387755102	0.000547409057617188\\
10253.0612244898	0.000536918640136719\\
10659.1836734694	0.00052642822265625\\
11065.306122449	0.000516891479492188\\
11471.4285714286	0.000507354736328125\\
11877.5510204082	0.000498771667480469\\
12283.6734693878	0.000491142272949219\\
12689.7959183673	0.000482559204101562\\
13095.9183673469	0.000474929809570312\\
13502.0408163265	0.000468254089355469\\
13908.1632653061	0.000460624694824219\\
14314.2857142857	0.000454902648925781\\
14720.4081632653	0.000448226928710938\\
15126.5306122449	0.0004425048828125\\
15532.6530612245	0.000435829162597656\\
15938.7755102041	0.000431060791015625\\
16344.8979591837	0.000425338745117188\\
16751.0204081633	0.000420570373535156\\
17157.1428571429	0.000414848327636719\\
17563.2653061224	0.000410079956054688\\
17969.387755102	0.000405311584472656\\
18375.5102040816	0.000401496887207031\\
18781.6326530612	0.000396728515625\\
19187.7551020408	0.000392913818359375\\
19593.8775510204	0.000388145446777344\\
20000	0.000384330749511719\\
};
\addlegendentry{Detection Converse};

\end{axis}
\end{tikzpicture}%
		\caption{Upper and lower bounds on the squared correlation coefficient $\rho^2$ needed to attain risk $R = 0.1$ for $d = 1000$ features, with respect to the number of users $n$.}
		\label{fig:riskvsn}
	\end{minipage}
\end{figure}

\begin{remark}
As discussed above, the particle tracking problem studied in~\cite{kunisky2021geometricmatching} is equivalent to the correlated Gaussian database recovery problem from~\cite{dai2019database}. Specifically, for a given permutation $\sigma$, a  correlated standard   normal pair $(X_i, Y_{\sigma_i})$ can be written as $\rho^{-1}Y_{\sigma_i} = X_i + \sqrt{\rho^{-2}-1}\, Z_i$ where $X_i$ and $Z_i$ are independent standard normal variables and $ \sqrt{\rho^{-2}-1}$ controls the deviation of the noise $Z_i$. In this setting, the achievability bounds from~\cite{kunisky2021geometricmatching} show that, for $d = o(\ln n)$, $\rho^2 = (1+o(n^{-\frac{4}{d}}))^{-1}$ suffices, and, for $d = \omega(\ln n)$, $\rho^{-2} = \big(\frac{1}{4}-\epsilon\big)\mathcal{O}\big(\frac{d}{\ln n}\big)+1$ for some $\epsilon>0$ suffices. These bounds are looser than those from~\cite{dai2019database}, which demonstrated a sharp threshold at $\frac{d}{\ln n} \ln \frac{1}{1-\rho^2}\approx 4$. 
\end{remark}

\begin{remark}
In the graph alignment setting, detection and recovery problems exhibit the same asymptotic behavior, in contrast to the distinct asymptotics of detection and recovery for database alignment. That is, for two graphs over $n$ nodes with edges that are jointly drawn from a $\rho$-correlated bivariate standard normal distribution, the risk of detection and the error probability of recovery have a sharp transition at the threshold $\frac{n \rho^2}{\ln n}\approx 4$. For more information, see \cite[Theorem 1]{jiaming2021testinggraphs}, \cite[Theorem 1]{wu2022settling} and  \cite[Theorem 1]{ganassali2022sharp}. 
\end{remark}

\section{Achievability}\label{sec: ach-inner}
Our achievability bound follows directly from the Chernoff bound, since the moment generating function (MGF) for our proposed sum-of-inner-products statistic $T = \sign{\rho} \sum_{i,j} X_i^\mathsf{T}Y^{\vphantom{\mathsf{T}}}_j$ can be calculated explicitly under $H_0$ and $H_1$. 

\begin{lemma}\label{lemma: mgf-T} For $-\frac{1}{n(1-\abs{\rho})} < \lambda < \frac{1}{n(1+\abs{\rho})}$, the MGF under $H_1$ is
	\begin{align*}
		\expe_{1|\sigma}[e^{\lambda T}] &= \left(1-2n\lambda|\rho| - n^2\lambda^2 (1-\rho^2)\right)^{-d/2} \ . \end{align*} For $\abs{\lambda }< \frac{1}{n}$, the MGF under $H_0$ is 
	\begin{align*}
		\expe_{0}[e^{\lambda T}] &= \big(\expe_{1|\sigma}[e^{\lambda T}]\big) \big|_{\rho=0}  = \big(1 - n^2\lambda^2 \big)^{-d/2} \ .
	\end{align*}
\end{lemma}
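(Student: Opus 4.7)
The plan is to exploit the fact that the test statistic telescopes into a single bilinear form. Writing $U \triangleq \sum_{i=1}^n X_i$ and $V \triangleq \sum_{j=1}^n Y_j$, we have $\sum_{i,j} X_i^\mathsf{T} Y^{\vphantom{\mathsf{T}}}_j = U^\mathsf{T} V$, so $T = \mathrm{sign}(\rho)\,U^\mathsf{T} V$. Crucially, $U$ and $V$ are each invariant under relabeling of the entries of $X^n$ and $Y^n$, so the distribution of $T$ under $H_{1|\sigma}$ does not depend on $\sigma$, and I may take $\sigma = \mathrm{id}$ without loss of generality.

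Under $H_{1|\mathrm{id}}$, the vector $(U,V) \in \RR^{2d}$ is zero-mean Gaussian, and a short covariance computation shows that the $d$ coordinate pairs decouple: for each $k \in [d]$, $(U_k, V_k) \sim \mathcal{N}\!\bigl(0,\, n\bigl[\begin{smallmatrix} 1 & \rho \\ \rho & 1 \end{smallmatrix}\bigr]\bigr)$, independently across $k$. Hence $T = \mathrm{sign}(\rho)\sum_{k=1}^d U_k V_k$ is a sum of $d$ i.i.d.\ terms, and the MGF factors as $\expe_{1|\sigma}[e^{\lambda T}] = M(\lambda)^d$, where $M$ is the MGF of a single summand $\mathrm{sign}(\rho) U_k V_k$.

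To compute $M(\lambda)$, I would diagonalize the bivariate Gaussian by setting $P = U_k + V_k$ and $Q = U_k - V_k$, which are independent with $P \sim \mathcal{N}(0, 2n(1+\rho))$ and $Q \sim \mathcal{N}(0, 2n(1-\rho))$, and satisfy $U_k V_k = (P^2 - Q^2)/4$. Using the chi-squared MGF $\expe[e^{tZ^2}] = (1 - 2t\,\mathrm{Var}(Z))^{-1/2}$ on each factor yields
\begin{equation*}
M(\lambda) = \bigl[(1 - n(1+\rho)\lambda)(1 + n(1-\rho)\lambda)\bigr]^{-1/2} = \bigl[1 - 2n\rho\lambda - n^2(1-\rho^2)\lambda^2\bigr]^{-1/2}
\end{equation*}
when $\rho \geq 0$. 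When $\rho < 0$, the sign flip replaces $\lambda$ with $-\lambda$ in this expression, which reverses the sign of the linear coefficient; in either case the linear coefficient equals $-2n|\rho|\lambda$, yielding the claimed formula after raising to the $d$th power. The stated interval for $\lambda$ is exactly the set on which the quadratic $1 - 2n|\rho|\lambda - n^2(1-\rho^2)\lambda^2$ is positive, whose roots simplify to $-1/(n(1-|\rho|))$ and $1/(n(1+|\rho|))$ using $1-\rho^2 = (1-|\rho|)(1+|\rho|)$. The null formula follows by specializing $\rho = 0$, since the entire reduction above remains valid and recovers independent $(U_k,V_k) \sim \mathcal{N}(0, nI_2)$.

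The only delicate point I anticipate is tracking the $\mathrm{sign}(\rho)$ bookkeeping cleanly across the two sign regimes so as to present a unified formula; everything else is an essentially one-dimensional Gaussian MGF calculation lifted to $d$ coordinates via independence.
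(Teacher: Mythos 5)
Your proof is correct, and it takes a genuinely different route from the paper's. The paper proceeds by conditioning: it writes $X_i \stackrel{d}{=} \rho Y_{\sigma_i} + \sqrt{1-\rho^2}\,Z_i$, first integrates out $Z^n$ given $Y^n$ (a Gaussian MGF), then observes $\sum_{i,j}Y_j^\mathsf{T}Y_{\sigma_i} = \|\sum_j Y_j\|^2$ by bijectivity and integrates out $Y^n$ via the $\chi^2_d$ MGF. You instead collapse the statistic to $T = \sign{\rho}\,U^\mathsf{T}V$ with $U = \sum_i X_i$, $V = \sum_j Y_j$ at the outset, note that $\sigma$ drops out immediately since $V$ is permutation-invariant, and observe that the $d$ coordinate pairs $(U_k,V_k)$ are i.i.d.\ bivariate Gaussians with covariance $n\begin{smallbmatrix}1 & \rho\\ \rho & 1\end{smallbmatrix}$. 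You then diagonalize the bilinear form via $P=U_k+V_k$, $Q=U_k-V_k$, reducing to a difference of two independent scaled $\chi^2_1$ variables per coordinate. Both routes are correct and short; your version trades the paper's asymmetric two-stage conditional integration for a symmetric rotation-to-principal-axes argument, which arguably makes the dependence on $\rho$ and the validity interval more transparent (the interval endpoints appear directly as where one of the two $\chi^2$ factors diverges), while the paper's version sidesteps the per-coordinate reduction and works directly in $\RR^d$. The $\sign{\rho}$ bookkeeping you flag is handled cleanly in both: the paper folds it into $|\rho|$ from the start, and your substitution $\lambda \mapsto -\lambda$ for $\rho<0$ is equivalent.
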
 
\begin{proof} Observe that under $\mathbb{P}_{1|\sigma}$, $X_i \stackrel{d}{=} \rho Y_{\sigma_i}+\sqrt{1-\rho^2}Z_i$ with $Z_i \iid \mathcal{N}(0_d,\id_d)$ and independent of $Y_{\sigma_i}$ for any $i \in [n]$.  For some $\lambda$, we get
	\begin{align*}
		\expe_{1|\sigma}[e^{\lambda T} ]
		&= \expe_{1|\sigma}\Big[ \exp \Big(\lambda \,\sign{\rho}\sum_{i,j}^n X_i^\mathsf{T} Y_j^{\vphantom{T}}\Big)\Big] \\ 
		&= \expe_{Y^n, Z^n\,\iid \mathcal{N}(0_d,\id_d)} \Big[\exp \Big(\lambda \sum_{i,j}^n (|\rho| Y_{\sigma_i}+\sign{\rho}\sqrt{1-\rho^2}Z_i)^\mathsf{T} Y_j^{\vphantom{\mathsf{T}}}\Big)\Big] \\ 
		& \stackrel{\text{(i)}}{=} \expe \bigg[ \exp \Big( \lambda |\rho|\sum_{i,j} Y_j^\mathsf{T} Y_{\sigma_i}^{\vphantom{T}}\Big) \expe \Big[\exp \Big( \lambda\,\sign{\rho}\sqrt{1-\rho^2}\,\sum_{i} Z_i^\mathsf{T}\Big(\sum_j Y_j^{\vphantom{\mathsf{T}}}\Big) \Big) \mid Y^n \Big] \bigg] \\
		&	\stackrel{\text{(ii)}}{=} \expe \bigg[ \exp \Big( \lambda |\rho|\sum_{i,j} Y_j^\mathsf{T} Y_{\sigma_i}^{\vphantom{T}}\Big)  \prod_{i} \expe \Big[\exp \Big( \lambda\,\sign{\rho}\sqrt{1-\rho^2}\, 	Z_i^\mathsf{T}\Big(\sum_j Y_j^{\vphantom{\mathsf{T}}}\Big) \Big) \mid Y^n \Big] \bigg] \\
		& \stackrel{\text{(iii)}}{=} \expe\bigg[ \exp\bigg( \lambda |\rho|\sum_{i,j}  Y_{j}^\mathsf{T} Y_{\sigma_i}^{\vphantom{T}}+n \frac{\lambda^2 (1-\rho^2)}{2} \bigg|\bigg|{\sum_j Y_j}\bigg|\bigg|^2 \bigg) \bigg]
	\end{align*} 
	where (i) follows by the law of total expectation, (ii) the conditional independence of $Z^n$ given $Y^n$ and (iii) the moment generating function of a normal variable. Due to the bijectivity of a permutation, we  observe that $\sum_{i,j} Y_j^\mathsf{T} Y_{\sigma_i}^{\vphantom{T}}=\norm{\sum_j Y_j}^2 $. To simplify the notation, let $Y\stackrel{d}{=}\sum_j Y_j$ so that  $\frac{Y}{\sqrt{n}} \sim \mathcal{N}(0,\id_d)$. Plugging back into the last equation and using the moment generating function of the $\chi^2_d$ distribution, we get
	\begin{align*}
		\expe_{1|\sigma}[e^{\lambda T} ] &= \expe \bigg[\exp \bigg(  \norm{\frac{Y}{\sqrt{n}}}^2 \Big( n\lambda |\rho| + n^2\, \frac{\lambda^2 (1-\rho^2)}{2} \Big) \bigg)\bigg]  \\ 
		& = \left(1-2n\lambda|\rho| - n^2\lambda^2 (1-\rho^2)\right)^{-d/2} 
	\end{align*}
	for $1-2n\lambda|\rho| - n^2\lambda^2 (1-\rho^2) >0$. By completing the square, we can rewrite the constraint as
	\begin{align*}
		1-2n\lambda|\rho| - n^2\lambda^2 (1-\rho^2) >0  
		&\Leftrightarrow \left( \lambda + \frac{\abs{\rho}}{n(1-\rho^2)} \right) ^2 < \frac{1}{n^2(1-\rho^2)} + \frac{\rho^2}{n^2(1-\rho^2)^2} \\
		&\Leftrightarrow  \frac{-1}{n(1-\abs{\rho})} < \lambda < \frac{1}{n(1+\abs{\rho})} \ .
	\end{align*} 
	For $\expe_0[e^{\lambda T}]$, we simply note that the null model is equal to the alternate model evaluated at zero correlation, $\P_0 = (\P_{1 | \sigma})|_{\rho = 0}$, and thus $\expe_0[e^{\lambda T}] = \big(\expe_{1|\sigma}[e^{\lambda T}]\big)\big|_{\rho = 0}$.
\end{proof}

We can now apply the Chernoff bound to establish our achievability result.

\begin{proof}[Proof of Theorem~\ref{thm:detectionachievable}] From the definition of the threshold test~\eqref{eq: siptest} with $t \in (\expe_{0}\,T,\expe_{1| \sigma}\, T) = (0,|\rho|nd)$, we have that
	\begin{align*} 
		R(\phi_T) &= \min_t \Big(\P_0\{T \geq t\}  + \max_{\sigma \in S_n}\P_{1|\sigma} \{T < t\}  \Big)  \ . 
\intertext{Next, we apply the Chernoff bound by using the moment generating functions in Lemma~\ref{lemma: mgf-T}. For the false alarm probability, we have}
		\P_0\{\phi_{T,t}=1\} = \P_0\{T \geq t\} &\leq\;\, \min_{\lambda > 0}\;\, \exp\Big(-\lambda t + \ln\big( \expe_0[e^{\lambda T}]\big)\Big) \\ &= \min_{0 <  \lambda <\frac{1}{n}} \exp \Big( -\lambda t +\frac{d}{2} \ln\frac{1}{1-n^2 \lambda^2} \Big) \\ &=\exp\Big(-\frac{d}{2} g_{\mathrm{FA}}(\gamma) \Big) 
		\intertext{	where the last step follows from plugging in the optimal value $\lambda^*_{\text{FA}} = -\frac{d}{2t} + \sqrt{\frac{1}{n^2} + (\frac{d}{2t})^2}$ (See Lemma~\ref{lemma: minimize-g}), defining $\gamma = (\frac{2t}{dn})^2 \in (0,4\rho^2)$, and simplifying. Similarly, for the missed detection probability, we have}
		\P_{1|\sigma} \{\phi_{T,t}=0\}  = \P_{1|\sigma} \{T < t\} 
		&\leq\;\; \min_{\lambda > 0 }\;\; \exp\Big(\lambda t + \ln\big( \expe_{1| \sigma} [e^{\lambda T}]\big)\Big)\\
		&=   \min_{\; 0 < \lambda < \frac{1}{n(1-\abs{\rho})}} \exp\Big( \lambda t - \frac{d}{2} \ln\big(1+2n\lambda|\rho| - n^2\lambda^2 (1-\rho^2)\, \big) \Big)  \\
		&=  \min_{\; 0 < \lambda < \frac{1}{n(1-\abs{\rho})}} \exp \left(  \lambda t + \frac{d}{2} \ln \frac{1-\rho^2}{1- \big(n(1-\rho^2)\lambda -|\rho|\big)^2} \right) \\
		&=\exp\Big(-\frac{d}{2} g_{\mathrm{MD}}(\gamma) \Big) 
	\end{align*}
	where the last step follows from plugging in the optimal value $\lambda^*_{\text{MD}} = \frac{|\rho|}{n(1-\rho^2)} + \frac{d}{2t} - \sqrt{(\frac{1}{n(1-\rho^2)})^2 + (\frac{d}{2t})^2}$, defining $\gamma = (\frac{2t}{dn})^2 \in (0,4\rho^2)$ (See Lemma~\ref{lemma: minimize-g}), and simplifying. 
	
	First, we observe that by using the inequality $\ln x \leq x-1$ and the concavity of the square root function, we get
	\begin{equation}\label{eq:fa-exponent}
	g_{\text{FA}}(\gamma) \geq \frac{\sqrt{2}-1}{2}\gamma \ .
	\end{equation}
	
	Now, the aim becomes to pick $\gamma \in (0, 4\rho^2)$  (and consequently to pick $t$) such that the  difference $g_{\mathrm{FA}} -g_{\mathrm{MD}}= 0$, i.e., to pick the threshold $t$ equates both type of errors. Since it is challenging to find the exact value, we put upper bound on it by setting $\gamma = \rho^2$. Consequently, in Lemma~\ref{lemma: ach-f-func}, we show that 
	\begin{equation*}
g_{\text{MD}}(\rho^2) \geq g_{\text{FA}}(\rho^2) -(\sqrt{2}-1)^2\rho^2
\end{equation*}
	and combining this  with~\eqref{eq:fa-exponent}, we get
		\begin{equation*}
		g_{\text{MD}}(\rho^2) \geq \frac{\sqrt{2}-1}{2}\rho^2 -(\sqrt{2}-1)^2\rho^2 \geq \frac{\rho^2}{30} \ .
	\end{equation*}
	Finally, for this chosen $t = \sqrt{\gamma}\frac{dn}{2}= |\rho|\frac{dn}{2}$, the concise (but loose) upper bound $2 \exp\big(-\frac{d \rho^2}{60}\big)$ follows. 
\end{proof} 

\section{Converse}\label{sec: converse}
We begin with the well-known fact that the minimax risk is lower bounded by the Bayes risk. For a prior distribution $\pi$ over the set of permutations $S_n$, we define the corresponding Bayes risk as
$$R_\pi^*\triangleq  \inf_{\phi}\, \mathbb{P}_0\big[ \phi(X^n,Y^n) = 1 \big]+ \expe_{\sigma \sim \pi}\, \mathbb{P}_{1|\sigma} \big[ \phi(X^n,Y^n) = 0 \big] \ .$$

Then, $R^* \geq  R_\pi^*$ for any prior $\pi$. The Bayes risk stems from a binary hypothesis test between simple hypotheses, $\P_0$ for the null and the mixture distribution $\P_1=\mathbb{E}_{\sigma \sim \pi }\, \P_{1|\sigma}$ for the alternate. Since $\P_0$ and $\P_1$ are mutually absolutely continuous, we define their likelihood ratio as $L \triangleq \P_1 /\P_0$. It follows that 
\begin{equation} \label{eq: risklower}
	1\!-\! R^* \!\leq  1\!-\! R_\pi^* \! = d_{\mathrm{TV}}(\P_0,\P_1) \! \overset{\text{(i)}}{=\vphantom{\leq}} \expe_{0}\abs{L \!-\!1} \overset{\text{(ii)}}{\leq} \! \sqrt{\expe_{0} L^2 \!-\!1} 
\end{equation}
where (i) uses the fact that the total variation distance satisfies $d_{\mathrm{TV}}(P,Q) = \half \expe_Q \big\vert{ \frac{P}{Q}-1}\big\vert $ and (ii) is due to Cauchy-Schwarz inequality.

This second-order moment method  (i.e., upper bounding $\expe_{0}\, L^2$)  is a common approach to find the lower bound for a hypothesis testing problems (see, e.g.,
\cite{wu2021statistical,jiaming2021testinggraphs, arias2014commdetect}). The composite law under $H_1$ makes it challenging to compute $\mathbb{E}_0\, L^2 = \mathbb{E}_0 (\, \mathbb{E}_{\sigma \sim\pi} \,\mathbb{P}_{1|\sigma} \, /\, \mathbb{P}_0)^2$. The lemma below gives an alternative expression by using Fubini's theorem to exchange the order of integration. To the best of our knowledge, this approach was introduced by Ingster and Suslina in the context of a  Gaussian location model (see \cite[Equation (3.66)]{ingster2003nonparametric}).

\begin{lemma}[Ingster-Suslina method]\label{lemma: ingster-suslina}
	Let $\mathbb{P}_{1|\sigma}$ and $\mathbb{P}_0$ be distributions with $\sigma $  from a parameter space. Given a prior on the space  of $\sigma$, define the  mixture distribution as $\mathbb{P}_1 = \mathbb{E}_{\sigma \sim \pi}\,  \mathbb{P}_{1|\sigma}$. 	Then, with $\sigma, \tilde{\sigma} \iid \pi$ and $G(\sigma, \tilde{\sigma}) = \displaystyle \int \frac{\mathbb{P}_{1|\sigma} \mathbb{P}_{1|\tilde{\sigma}}}{\mathbb{P}_0} $, we have $$\mathbb{E}_0\, L^2 = \mathbb{E}_0 \left(\frac{\mathbb{E}_{\sigma \sim \pi }\, \mathbb{P}_{1|\sigma} }{\mathbb{P}_0}\right)^2 = \expe_{\sigma, \tilde{\sigma} {\sim} \pi}\,{G(\sigma, \tilde{\sigma})} \ .$$ 
\end{lemma}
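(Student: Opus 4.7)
The plan is to write $L = \mathbb{E}_{\sigma\sim\pi}[\mathbb{P}_{1|\sigma}/\mathbb{P}_0]$, square it by introducing an independent copy $\tilde{\sigma} \iid \pi$, and then swap the order of the expectation over $(\sigma,\tilde{\sigma})$ with the integration under $\mathbb{P}_0$. Since the integrand is a product of nonnegative quantities (Radon–Nikodym derivatives), the exchange is justified by Tonelli's theorem rather than needing the full absolute-integrability hypothesis of Fubini.

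Concretely, I would proceed in four short steps. First, since $\mathbb{P}_1 = \mathbb{E}_{\sigma\sim\pi}\,\mathbb{P}_{1|\sigma}$, we have $L = \mathbb{E}_{\sigma\sim\pi}[\mathbb{P}_{1|\sigma}/\mathbb{P}_0]$ as Radon-Nikodym derivatives. Second, square $L$ using an independent copy $\tilde{\sigma}$ so that
\begin{equation*}
L^2 = \left(\mathbb{E}_{\sigma\sim\pi}\frac{\mathbb{P}_{1|\sigma}}{\mathbb{P}_0}\right)\!\left(\mathbb{E}_{\tilde{\sigma}\sim\pi}\frac{\mathbb{P}_{1|\tilde{\sigma}}}{\mathbb{P}_0}\right) = \mathbb{E}_{\sigma,\tilde{\sigma}\iid\pi}\!\left[\frac{\mathbb{P}_{1|\sigma}\,\mathbb{P}_{1|\tilde{\sigma}}}{\mathbb{P}_0^{\,2}}\right].
\end{equation*}
Third, take $\mathbb{E}_0$, which corresponds to integration against $\mathbb{P}_0$, and apply Tonelli to swap the $\pi\otimes\pi$-expectation with the $\mathbb{P}_0$-integral:
\begin{equation*}
\mathbb{E}_0\,L^2 = \int \mathbb{E}_{\sigma,\tilde{\sigma}}\!\left[\frac{\mathbb{P}_{1|\sigma}\,\mathbb{P}_{1|\tilde{\sigma}}}{\mathbb{P}_0^{\,2}}\right]\mathbb{P}_0 = \mathbb{E}_{\sigma,\tilde{\sigma}}\!\left[\int \frac{\mathbb{P}_{1|\sigma}\,\mathbb{P}_{1|\tilde{\sigma}}}{\mathbb{P}_0}\right] = \mathbb{E}_{\sigma,\tilde{\sigma}\iid\pi}\,G(\sigma,\tilde{\sigma}).
\end{equation*}
Fourth, conclude by noting the inner integral is exactly $G(\sigma,\tilde{\sigma})$ as defined in the lemma.

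The only point requiring any care is the justification of the order swap. Because $\mathbb{P}_{1|\sigma}/\mathbb{P}_0$ is a nonnegative measurable function of both $(X^n,Y^n)$ and $\sigma$, the product $\mathbb{P}_{1|\sigma}\mathbb{P}_{1|\tilde{\sigma}}/\mathbb{P}_0^{\,2}$ is nonnegative, and Tonelli's theorem applies without integrability assumptions. The resulting value $\mathbb{E}_0\,L^2$ may be infinite, but the identity still holds in $[0,\infty]$, which is all that is needed for the converse in Section~\ref{sec: converse} (one proves the bound is meaningful by later showing that a truncated version of $L$ gives a finite second moment). I don't anticipate any substantive obstacle here — the lemma is essentially a bookkeeping identity, and its value lies in reducing the computation of $\mathbb{E}_0\,L^2$ over a high-dimensional mixture to a pairwise computation $G(\sigma,\tilde{\sigma})$ that depends only on the overlap structure of two permutations, which is what will be exploited later.
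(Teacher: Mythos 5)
Your proof is correct and matches the paper's approach: the paper does not write out the argument, but explicitly attributes the lemma to an exchange of integration order via Fubini's theorem (citing Ingster and Suslina), which is exactly the step you carry out. Your observation that Tonelli suffices here, since the integrand is nonnegative and so the identity holds in $[0,\infty]$ without any a priori integrability hypothesis, is a small but worthwhile refinement over the paper's appeal to ``Fubini.''
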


This leads to our first converse result.
\begin{lemma}\label{lemma: converse-unconditional}
	$$R^*\geq 1 - \sqrt{(1-\rho^2)^{-dn}-1}$$
	Consequently, if $\rho^2 = o\big(\frac{1}{dn}\big)$, then $R^* \to 1$ as $n\to \infty$ or $d\to \infty$.
\end{lemma}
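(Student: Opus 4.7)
The plan is to invoke the Ingster--Suslina identity (Lemma~\ref{lemma: ingster-suslina}) with the uniform prior $\pi = \mathrm{Unif}(S_n)$ on permutations. Combined with the reduction $R^* \geq R_\pi^*$ and the inequality~\eqref{eq: risklower}, the task reduces to upper bounding
\[
\expe_0 L^2 \;=\; \expe_{\sigma, \tilde\sigma \sim \pi}\, G(\sigma, \tilde\sigma).
\]

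Rather than compute $G(\sigma, \tilde\sigma)$ exactly for general pairs of permutations, I would apply the Cauchy--Schwarz inequality under $\P_0$:
\[
G(\sigma, \tilde\sigma) \;=\; \expe_0\!\left[\frac{\P_{1|\sigma}}{\P_0} \cdot \frac{\P_{1|\tilde\sigma}}{\P_0}\right] \;\leq\; \sqrt{G(\sigma, \sigma)\,G(\tilde\sigma, \tilde\sigma)}.
\]
By the permutation-symmetry of the correlated Gaussian model, the diagonal value $G(\sigma, \sigma) = 1 + \chi^2(\P_{1|\sigma} \,\|\, \P_0)$ does not depend on $\sigma$, so it suffices to evaluate it for a single permutation, say $\sigma = \mathrm{id}$.

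This chi-squared is straightforward because, under both $\P_{1|\mathrm{id}}$ and $\P_0$, the $dn$ coordinate pairs $(X_{i,k}, Y_{i,k})$ are mutually independent, so the chi-squared factorizes into $dn$ copies of the single-pair bivariate Gaussian integral, which a routine Gaussian calculation shows equals $(1-\rho^2)^{-1}$ per pair. Hence $G(\sigma,\sigma) = (1-\rho^2)^{-dn}$ for every $\sigma$, so $\expe_0 L^2 \leq (1-\rho^2)^{-dn}$, and plugging this into~\eqref{eq: risklower} yields the stated lower bound on $R^*$. The asymptotic consequence then follows since $\rho^2 = o(1/(dn))$ implies $-dn\ln(1-\rho^2) = dn\rho^2(1+o(1)) \to 0$, so $(1-\rho^2)^{-dn} \to 1$ and the lower bound tends to $1$.

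There is no real obstacle here: the entire argument amounts to one application of Ingster--Suslina, one Cauchy--Schwarz, and one Gaussian integral. The key simplification is using Cauchy--Schwarz to avoid evaluating $G(\sigma, \tilde\sigma)$ for $\sigma \neq \tilde\sigma$; the more delicate calculation involving the cycle structure of $\sigma^{-1}\tilde\sigma$ (and an appropriate truncation) would be required to push down to the sharper $\rho^2 = o(1/(d\sqrt{n}))$ threshold of Lemma~\ref{lemma: truncated-second}, whereas the present crude bound already suffices for the weaker $1/(dn)$ threshold claimed here.
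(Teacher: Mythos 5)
Your argument is correct and arrives at the same bound $\expe_0 L^2 \leq (1-\rho^2)^{-dn}$, but by a genuinely different route. The paper computes $G(\sigma,\tilde\sigma)$ exactly by reducing (via permutation symmetry) to $G(\sigma, \mathrm{id})$, factoring the integrand over the cycles of $\sigma$, and evaluating each cycle contribution via the circulant determinant in Lemma~\ref{lemma: cycle-exp}, giving $\prod_{k}(1-\rho^{2k})^{-dN_k^\sigma}$; it then bounds this by $(1-\rho^2)^{-dn}$ using $(1-x)^k \leq 1-x^k$. You instead apply Cauchy--Schwarz to $G(\sigma,\tilde\sigma) \leq \sqrt{G(\sigma,\sigma)G(\tilde\sigma,\tilde\sigma)}$, observe that the diagonal is $\sigma$-independent, and evaluate $G(\mathrm{id},\mathrm{id}) = (1-\rho^2)^{-dn}$ directly as a product of $dn$ independent single-pair chi-squared factors (each equal to $(1-\rho^2)^{-1}$, which your claim correctly gives and which follows from the paper's Lemma~\ref{lemma: gaussian-mgf-2} with $a = \frac{\rho^2}{1-\rho^2}$, $b=\frac{2\rho}{1-\rho^2}$, noting $(1+2a)^2-b^2=1$). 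In fact, Jensen's inequality applied to $L^2 = (\expe_\sigma L_\sigma)^2 \leq \expe_\sigma L_\sigma^2$ gives the same $\sigma$-independent bound even without invoking Ingster--Suslina or Cauchy--Schwarz. What you lose relative to the paper is the exact cycle-level identity $\expe_0 Z_C = (1-\rho^{2|C|})^{-d}$: this identity shows that off-diagonal terms (long cycles) contribute strictly less, which is precisely the structure the truncated argument in Lemma~\ref{lemma: truncated-second} exploits to push the threshold down to $o(1/(d\sqrt{n}))$. Your shortcut is cleaner for the present lemma but would not generalize to that refinement, as you correctly note.
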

The proof uses a version of \cite[Proposition 1]{jiaming2021testinggraphs}, Lemma \ref{lemma: ingster-suslina} with $\pi = \mathrm{Unif}(S_n)$ and Equation \eqref{eq: risklower}. See Appendix~\ref{app: unconditional-converse} for details.

\subsection{Truncation}\label{sec: truncated}

To sharpen the converse from $o\big(\frac{1}{dn}\big)$ to $o\big(\frac{1}{d\sqrt{n}}\big)$, we truncate the likelihood ratio by conditioning on a carefully-chosen event that occurs with high probability under $H_1$, following the approach of \cite{arias2014commdetect, jiaming2021testinggraphs}.
For an event $\Gamma_\sigma$, define the truncated likelihood ratio as
$$\tilde{L} = \expe_{\sigma\sim\pi}\, {L_\sigma \mathbbm{1}_{\Gamma_\sigma}} \text{ where } L_\sigma = \frac{\mathbb{P}_{1|\sigma}}{\mathbb{P}_0} \ .$$
By the triangle and Cauchy-Schwarz inequalities combined with the fact that $\tilde{L} \leq L = \expe_{\sigma\sim\pi}\, L_\sigma $, we get
\begin{align}\label{eq: risk-truncated}
	1- R^* \leq 1-R^*_\pi & = \mathbb{E}_0 \,|L-1| \leq  \mathbb{E}_0 \,|\tilde{L}-1| + \mathbb{E}_0 (L-\tilde{L} ) \nonumber \\
	&= \sqrt{\mathbb{E}_0 \,\tilde{L}^2  -1+2\big(1- \mathbb{E}_0\, \tilde{L} \big) } + \big(1- \mathbb{E}_0\, \tilde{L} \big)  \ .
\end{align}
Thus, if we simultaneously lower bound $\mathbb{E}_0\, \tilde{L}$ and upper bound on $\mathbb{E}_0\, \tilde{L}^2$, we can obtain a converse bound.
Observe that by Fubini's theorem, the truncated first-order moment becomes $$\mathbb{E}_0 \,\tilde{L} = \mathbb{E}_0\, \mathbb{E}_{\sigma\sim\pi} L_\sigma \mathbbm{1}_{\Gamma_\sigma}  
= \mathbb{E}_{\sigma\sim\pi} \, \mathbb{E}_0\, L_\sigma \mathbbm{1}_{\Gamma_\sigma}   
= \mathbb{E}_{\sigma\sim\pi} \, \mathbb{P}_{1|\sigma}(\Gamma_\sigma) \ .$$ 
Setting $\pi = \mathrm{Unif}(S_n)$, we will select an event $\Gamma_\sigma$ that occurs with high probability under $H_1$ for any $\sigma \in S_n$, which in turns shows that $\mathbb{E}_0\, \tilde{L}\to 1$. Moreover, Lemma \ref{lemma: ingster-suslina} can be used to compute $\mathbb{E}_0\, \tilde{L}^2$ as before.

For a permutation $\sigma \in S_n$, let $F_\sigma = \{i \in [n]: i=\sigma_i\}$ be the set of corresponding fixed points and let $N_1^\sigma = |F_\sigma|$. For any $\sigma$,  define the truncation event $\Gamma_\sigma$ as
\begin{equation}\label{eq: truncation}
\Gamma_\sigma \triangleq \bigcap_{k=k^*}^{N_1^\sigma}\,\bigcap_{\substack{T \subseteq F_\sigma \\ |T|=k}} \left\{\,\sum_{i\in T} X_i^\mathsf{T} X^{\vphantom{\mathsf{T}}}_i > w_k ,\,\sum_{i\in T} Y_i^\mathsf{T} Y^{\vphantom{\mathsf{T}}}_i > w_k,\, 
 \sign{\rho} \sum_{i\in T} X_i^\mathsf{T}Y_{\sigma_i}^{\vphantom{\mathsf{T}}} < v_k\right\}
\end{equation}
where $k^*\in [n],\, w_k >0$ and  $v_k $ for $k=k^*,\ldots,N_1^\sigma$ will be chosen later. 

In the following, we will bound the first and second  moments of by the truncated likelihood ratio  $\tilde{L} = \expe_{\sigma\sim\mathrm{Unif}(S_n)}\, \frac{\mathbb{P}_{1|\sigma}}{\mathbb{P}_0} \mathbbm{1}_{\Gamma_\sigma}$ where $\P_0$ and $\P_{1|\sigma}$ are the likelihoods of the binary hypothesis testing problem~\eqref{eq: problem} with prior distribution $\mathrm{Unif}(S_n)$ on $S_n$.

\begin{lemma}[Truncated first-order moment]\label{lemma: truncated-first}
	For $d\geq 4\ln \frac{en}{\;k^*}$, let
	\begin{align*}
		&w_k = dk-2\sqrt{d}k\,r_k \quad \text{and} \quad v_k = \abs{\rho} dk + 4\abs{\rho} \sqrt{d}k \,s_k
	\end{align*}
	for  all $k=k^*,\ldots, n$. Then, we have
	$$1-\mathbb{E}_0\, \tilde{L} \leq \frac{4e^{-k^*\min\{\psi_1,\psi_2\}}}{1-e^{-\min\{\psi_1,\psi_2\}}} $$
	where
	$$ \psi_1 = \min_{k=k^*,\ldots, n}\, - \ln \frac{e n}{k} +r_k^2 \quad\text{and}$$
	$$\psi_2 = \min_{k=k^*,\ldots, n} \frac{|\rho| \sqrt{d} \,s_k }{4}\min \left\{\frac{1}{|\rho|},\, \frac{2}{\sqrt{1-\rho^2}},\, \frac{s_k }{|\rho|\sqrt{d}},\, \frac{4|\rho| s_k }{(1-\rho^2)\sqrt{d}}\right\} - \ln \frac{e n}{k} \ . $$
	Consequently, if $k^*=\omega(1)$ and $\min\{\psi_1,\psi_2\}>0$, then $\expe_0 \,\tilde{L} \to 1$ as $n\to \infty$	with $\expe_0 \,\tilde{L}  \geq 1-\mathcal{O}(1)$.
\end{lemma}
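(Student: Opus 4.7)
The plan is to exploit Fubini and a union bound to reduce the problem to elementary $\chi^2$ and Gaussian tail estimates. By Fubini's theorem,
\[
1 - \mathbb{E}_0\,\tilde{L} \;=\; \mathbb{E}_{\sigma \sim \pi}\bigl[\,1 - \mathbb{P}_{1|\sigma}(\Gamma_\sigma)\bigr] \;=\; \mathbb{E}_{\sigma \sim \pi}\,\mathbb{P}_{1|\sigma}(\Gamma_\sigma^c),
\]
so it is enough to bound $\mathbb{P}_{1|\sigma}(\Gamma_\sigma^c)$ uniformly in $\sigma \in S_n$. Since $\Gamma_\sigma$ is an intersection over $k\in\{k^*,\ldots,N_1^\sigma\}$, over subsets $T \subseteq F_\sigma$ with $|T|=k$, and over three sub-events, a union bound lets me handle each triple $(k,T,\text{event type})$ separately.

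For a given $T \subseteq F_\sigma$ of size $k$, the marginals of $\sum_{i \in T}\|X_i\|^2$ and $\sum_{i\in T}\|Y_i\|^2$ under $\mathbb{P}_{1|\sigma}$ are $\chi^2_{dk}$, so Laurent--Massart ($\mathbb{P}[\chi^2_m \leq m - 2\sqrt{mx}] \leq e^{-x}$ with $m = dk$, $x = k r_k^2$) bounds each lower tail by $e^{-k r_k^2}$. For the cross term, since $i \in F_\sigma$ forces $\sigma_i = i$, the pair $(X_i,Y_i)$ is jointly Gaussian with correlation $\rho$, and I decompose $Y_i = \rho X_i + \sqrt{1-\rho^2}\,W_i$ with $W_i \stackrel{\mathrm{iid}}{\sim} \mathcal{N}(0_d,\id_d)$ independent of $X_i$. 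This yields
\[
S_T \;:=\; \mathrm{sign}(\rho)\sum_{i\in T} X_i^\mathsf{T} Y_i \;=\; |\rho|\,Z_T \;+\; \sqrt{(1-\rho^2)Z_T}\,G,
\]
where $Z_T := \sum_{i\in T}\|X_i\|^2 \sim \chi^2_{dk}$ and $G \sim \mathcal{N}(0,1)$ is independent of $Z_T$. Writing $\{S_T \geq v_k\}$ as the union of $\{|\rho|(Z_T-dk) \geq 2|\rho|\sqrt{d}k s_k\}$ and $\{\sqrt{(1-\rho^2)Z_T}\,G \geq 2|\rho|\sqrt{d}k s_k\}$, and further splitting the latter on $\{Z_T \leq 2dk\}$ versus $\{Z_T > 2dk\}$, I apply a Bernstein-type $\chi^2$ upper-tail to the $Z_T$-deviations and a conditional Gaussian tail bound to the $G$-deviation. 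The four resulting sub-regimes---sub-Gaussian versus sub-exponential for each of the two tails---correspond precisely to the four arguments of the minimum defining $\psi_2$, yielding a per-$T$ bound of the form $e^{-k \psi_{2,k}}$ for some $\psi_{2,k} \geq \psi_2$.

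To finish, I union-bound over $T \subseteq F_\sigma$ with $|T|=k$, costing a binomial factor $\binom{|F_\sigma|}{k} \leq (en/k)^k$, so each per-event bound $e^{-k \alpha_k}$ becomes $e^{-k(\alpha_k - \ln(en/k))}$. By definition, $r_k^2 - \ln(en/k) \geq \psi_1$ and $\psi_{2,k} - \ln(en/k) \geq \psi_2$, and summing the three event types (four terms after the split in step (iii)) gives a per-$k$ bound of at most $4 e^{-k\min(\psi_1,\psi_2)}$. Summing the resulting geometric series over $k \geq k^*$ produces the stated bound $4 e^{-k^*\min(\psi_1,\psi_2)}/(1 - e^{-\min(\psi_1,\psi_2)})$, from which $\mathbb{E}_0\,\tilde L \to 1$ follows as soon as $k^* = \omega(1)$ and $\min(\psi_1,\psi_2) > 0$.

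The main obstacle is the cross-term tail in step (iii): the quadratic-plus-Gaussian decomposition $|\rho| Z_T + \sqrt{(1-\rho^2)Z_T}\,G$ admits neither a pure $\chi^2$ nor a pure Gaussian bound, and both its constituent tails exhibit a sub-Gaussian/sub-exponential transition in $s_k$, which is exactly what produces the fourfold minimum in $\psi_2$. The standing hypothesis $d \geq 4 \ln(en/k^*)$ is what guarantees $w_k > 0$ for every $k \geq k^*$ (so the Laurent--Massart bound applies throughout) and permits the binomial entropy $\ln(en/k)$ to be cleanly absorbed into the per-$k$ exponents.
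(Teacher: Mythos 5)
Your proposal follows the paper's overall plan exactly at the outer layer: Fubini to reduce to bounding $\mathbb{P}_{1|\sigma}(\Gamma_\sigma^c)$ uniformly, a union bound over $k$, $T\subseteq F_\sigma$, and event type, a per-$T$ concentration bound, a binomial $(en/k)^k$ penalty, and a geometric sum over $k\geq k^*$. The treatment of the two norm events is also identical (Laurent--Massart with $m=dk$, $x=kr_k^2$). The genuine difference is how you handle the cross-term event. The paper writes $\mathrm{sign}(\rho)\sum_{i\in T}X_i^{\mathsf T}Y_{\sigma_i}$ (after the decomposition $X_i = \rho Y_{\sigma_i} + \sqrt{1-\rho^2}Z_i$) as a single quadratic form $\xi^{\mathsf T}A\xi$ in a $2dk$-dimensional standard Gaussian $\xi=(Y_\sigma,Z)$ and invokes the Gaussian-chaos (Hanson--Wright) inequality of its Lemma~\ref{gaussian-chaos}, separating $A$ into its diagonal part ($\|\alpha\|_2^2=\rho^2 dk$, $\|\alpha\|_\infty=|\rho|$) and off-diagonal part ($\|\lambda\|_2^2=\tfrac{(1-\rho^2)dk}{2}$, $\|\lambda\|_\infty=\tfrac{\sqrt{1-\rho^2}}{2}$), which is exactly where the four branches of $\psi_2$ come from. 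You instead condition on the $X$ side and write the cross term as $|\rho|Z_T+\sqrt{(1-\rho^2)Z_T}\,G$ with $Z_T\sim\chi^2_{dk}$ and $G\sim\mathcal N(0,1)$ independent, then split on $\{Z_T\leq 2dk\}$. This is more elementary (it avoids diagonalizing the off-diagonal block and can be run with only scalar $\chi^2$ and Gaussian tails), but it does not reproduce the $\psi_2$ of the statement: your diagonal $\chi^2$-tail and your conditional-Gaussian tail do map onto the $\tfrac{1}{|\rho|}$, $\tfrac{s_k}{|\rho|\sqrt d}$, and $\tfrac{4|\rho|s_k}{(1-\rho^2)\sqrt d}$ branches (up to constants), but the remaining branch of $\psi_2$, $\tfrac{2}{\sqrt{1-\rho^2}}$, arises in the paper as $1/\|\lambda\|_\infty$ and has no counterpart in your decomposition; in its place your $\mathbb P[Z_T>2dk]$ term gives an $e^{-\Omega(dk)}$ tail, which is a different functional form in $\rho,s_k$. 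So the sentence ``the four resulting sub-regimes correspond precisely to the four arguments of the minimum defining $\psi_2$'' overclaims. Your route would still give a bound of the same shape (and is arguably tighter in some parameter regimes), but to land on the \emph{stated} $\psi_2$ and the constant $4$ --- which in the paper arises as $2$ (from the two norm events and Laurent--Massart) plus $2$ (the prefactor of the Gaussian-chaos lemma) --- one should apply the paper's Lemma~\ref{gaussian-chaos} to the $2dk\times 2dk$ quadratic form rather than the conditional-Gaussian split.
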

\begin{proof}[Proof Sketch]
	As discussed before, it is enough to upper bound $ \mathbb{P}_{1|\sigma}(\Gamma_\sigma^C)$ for any $\sigma$ where $\Gamma_\sigma^C$ is the complement of $\Gamma_\sigma$. Upon applying the union bound, we get
	$$	\mathbb{P}_{1|\sigma}(\,\Gamma_\sigma^C\,)\leq  \sum_{k=k^*}^{N_1^\sigma} \, \sum_{\substack{T \subseteq F_\sigma \\ |T|=k}} 2 \mathbb{P}_{1|\sigma}\Big(\sum_{i \in T} \norm{X_i}^2 \leq w_k\Big)+  \mathbb{P}_{1|\sigma}\Big( \sign{\rho}\sum_{i \in T} X_i^\mathsf{T}Y_{\sigma_i}^{\vphantom{\mathsf{T}}} \geq v_k\Big) \ . $$
	Then, the proof follows by applying two concentration inequalities: Laurent-Massart lemma  (Lemma~\ref{laurent-massart})  and Gaussian chaos  (Lemma~\ref{gaussian-chaos}), applied to the first and second terms, respectively. See Appendix~\ref{app: truncated-first} for details.
\end{proof}

\begin{lemma}[Truncated second-order moment]\label{lemma: truncated-second}
	If the conditions of Lemma ~\ref{lemma: truncated-first} hold  for $\expe_0 \,\tilde{L} \to 1$, then we have
	$$\mathbb{E}_0 \,\tilde{L}^2  \leq \exp\left( \frac{dn}{2} \left(\frac{\rho^2}{1-\rho^2}\right)^2+ dk^*\frac{\rho^2}{1-\rho^2}\right)  + \frac{e^{-k^*\psi}}{1-e^{-\psi}}$$
	where
		$$	\psi = \min_{k=k^*,\ldots, n}  -\frac{dn}{2k} \frac{\rho^4}{1-\rho^4} - d\frac{\rho^2}{1-\rho^2} +\frac{2\rho^2}{1-\rho^2} \Big(\frac{w_k }{k}- \frac{v_k}{k|\rho|}  \Big) + \ln \frac{k}{e} \ .$$
	Consequently, by choosing $k^*=13\sqrt{n}$, if $n\geq e^2$  and
	$$\rho^2  =o\left(\frac{1}{d\sqrt{n}}\right) ,$$ 
	then  $\expe_0 \,\tilde{L}^2 \to 1$ as $n\to \infty$ and $d\to\infty$ with $\expe_0 \,\tilde{L} ^2 \leq  1+\mathcal{O}(1)$.
\end{lemma}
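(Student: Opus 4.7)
The plan is to apply the Ingster-Suslina identity (Lemma~\ref{lemma: ingster-suslina}) with prior $\mathrm{Unif}(S_n)$, weaken $\mathbbm{1}_{\Gamma_\sigma}\mathbbm{1}_{\Gamma_{\tilde\sigma}} \leq \mathbbm{1}_{\Gamma_\sigma}$, and decompose the resulting double sum over $(\sigma,\tilde\sigma)\in S_n^2$ according to the size $N := |\{i:\sigma_i=\tilde\sigma_i\}|$ of their ``overlap'' set $A$. By definition $N$ is the number of fixed points of $\tau := \sigma^{-1}\tilde\sigma$, which is uniform on $S_n$ given either marginal, so by the derangement count and Stirling $\mathbb{P}(N^{\mathrm{overlap}} = N) \leq 1/N! \leq (e/N)^N$. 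Using exchangeability of $\mathbb{P}_0$ under permutations of the $Y$-indices, I first relabel each pair so that $\sigma|_A = \tilde\sigma|_A = \mathrm{id}|_A$; this guarantees $A \subseteq F_\sigma$, and in particular, whenever $N\geq k^*$, makes $T = A$ an admissible choice in the intersection defining $\Gamma_\sigma$.

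Next I factor $L_\sigma L_{\tilde\sigma} = \prod_{i\in A}l(X_i,Y_i)^2 \cdot \prod_{i\notin A}l(X_i,Y_{\sigma_i})l(X_i,Y_{\tilde\sigma_i})$, where $l(x,y) = p_{XY}(x,y)/(p_X(x)p_Y(y))$. In the canonical form the $A$- and $A^{\mathrm{c}}$-indexed variables are disjoint, so the two factors are $\mathbb{P}_0$-independent. A direct Gaussian integral over each connected component of the bipartite variable graph yields the cycle-product formula $\mathbb{E}_0[L_\sigma L_{\tilde\sigma}] = \prod_{\text{cycles } C \text{ of } \tau}(1-\rho^{2|C|})^{-d} = (1-\rho^2)^{-Nd}\prod_{|C|\geq 2}(1-\rho^{2|C|})^{-d}$. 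Combining $-\log(1-x)\leq x/(1-x)$ with $\sum_{|C|\geq 2}\rho^{2|C|} \leq \rho^4(n-N)/2$ bounds the non-overlap product by $\exp\big(\frac{dn}{2}(\frac{\rho^2}{1-\rho^2})^2\big)$.

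For the \emph{small-overlap} case $N < k^*$, I drop $\mathbbm{1}_{\Gamma_\sigma}$ entirely and use $(1-\rho^2)^{-Nd}\leq \exp(k^*d\rho^2/(1-\rho^2))$; summing over $N < k^*$ (with total probability $\leq 1$) yields the first summand of the claim. For the \emph{large-overlap} case $N \geq k^*$, the truncation applied to $T=A$ forces pointwise $\sum_{i\in A}\|X_i\|^2 > w_N$, $\sum_{i\in A}\|Y_i\|^2 > w_N$, and $\sign{\rho}\sum_{i\in A}X_i^{\mathsf{T}}Y_i < v_N$. Substituting these into the explicit identity
\begin{align*}
\log\prod_{i\in A}l(X_i,Y_i)^2 = -Nd\log(1-\rho^2) + \frac{2\rho\sum_{i\in A} X_i^{\mathsf{T}} Y_i - \rho^2\big(\sum_{i\in A}\|X_i\|^2 + \sum_{i\in A}\|Y_i\|^2\big)}{1-\rho^2}
\end{align*}
produces a pointwise upper bound on the overlap factor. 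Combining this with the independent non-overlap expectation and the counting bound $(e/N)^N$, each $N$ contributes at most $\exp(-N\psi_N)$, where $\psi_N$ matches precisely the argument of the minimum in the definition of $\psi$ (the $-\frac{dn}{2k}\frac{\rho^4}{1-\rho^4}$ term appears by absorbing the uniform non-overlap bound into the per-$N$ exponent, while the $\frac{2\rho^2}{1-\rho^2}(\frac{w_k}{k}-\frac{v_k}{k|\rho|})$ term comes directly from the three truncation inequalities). Summing the geometric series $\sum_{N\geq k^*}e^{-N\psi}$ yields the second summand $\frac{e^{-k^*\psi}}{1-e^{-\psi}}$.

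The main obstacle is the symmetry/bookkeeping step: one has to verify carefully that the $Y$-relabeling preserves $\mathbb{P}_0$ and transforms $\Gamma_\sigma$ consistently with the canonical form $\sigma|_A = \mathrm{id}|_A$, and to justify the cycle-product identity for all cycle lengths (the $|C|=1$ and $|C|=2$ cases are short Gaussian integrals; longer cycles follow by the same change-of-variables but require careful tracking of chained cross-terms). Given these pieces, the asymptotic claim is routine: setting $k^* = 13\sqrt{n}$ in the regime $\rho^2 = o(1/(d\sqrt{n}))$ with $n \geq e^2$, one checks $\frac{dn}{2}(\frac{\rho^2}{1-\rho^2})^2 = o(1)$ and $k^* d\frac{\rho^2}{1-\rho^2} = o(1)$, so the first term tends to $1$; and $\psi \geq \frac{1}{2}\ln n - o(1)$ uniformly for $k\geq k^*$ (for any slow-growing choice of $r_k, s_k$ as in Lemma~\ref{lemma: truncated-first}), so $k^*\psi \to \infty$ and the second term tends to $0$, yielding $\mathbb{E}_0\tilde L^2 \leq 1 + \mathcal{O}(1)$.
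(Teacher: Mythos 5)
Your proposal follows the paper's proof essentially step for step: Ingster--Suslina, weaken the truncation indicator, reduce to $\tilde\sigma = \mathrm{id}$ (equivalently, relabel so that $\sigma|_A = \mathrm{id}|_A$), factorize $L_\sigma L_{\tilde\sigma}$ over the cycles of $\tau = \sigma^{-1}\tilde\sigma$ using the analogue of Lemma~\ref{lemma: cycle-exp}, substitute the three pointwise truncation inequalities into the overlap factor, and handle small and large overlap separately via the derangement tail $(e/N)^N$. The asymptotic check with $k^* = 13\sqrt{n}$ is also the same. So as a blueprint this is the right proof.

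One point deserves more care than either you or the paper gives it, and it is exactly the ``symmetry/bookkeeping step'' you flag as the main obstacle. The truncation event in \eqref{eq: truncation} ranges only over $T\subseteq F_\sigma$, with the third constraint on $\sum_{i\in T}X_i^{\mathsf T}Y_{\sigma_i} = \sum_{i\in T}X_i^{\mathsf T}Y_i$. A $Y$-index relabeling $\phi_\nu:\,Y_j\mapsto Y_{\nu(j)}$ sends $L_\sigma \mapsto L_{\nu\sigma}$, but it sends the constraints of $\Gamma_\sigma$ to constraints on the pairs $(X_i, Y_{\nu(i)})$ for $i\in T\subseteq F_\sigma$ --- which is \emph{not} $\Gamma_{\nu\sigma}$, because the index set $F_\sigma$ and the pairing $i\mapsto i$ do not transform to $F_{\nu\sigma}$ and $i\mapsto i$ unless $\nu$ fixes $T$ pointwise. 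Consequently, the paper's chain $\expe_{\sigma,\tilde\sigma}\expe_0 L_\sigma L_{\tilde\sigma}\mathbbm{1}_{\mathcal{E}_{F_\sigma}} = \expe_\sigma\expe_0 L_\sigma L_{\mathrm{id}}\mathbbm{1}_{\mathcal{E}_{F_\sigma}}$ (and your ``relabel so $\sigma|_A = \tilde\sigma|_A = \mathrm{id}|_A$, then take $T=A$'') is not literally justified by a change of variables: for $\tilde\sigma\neq\mathrm{id}$ the event $\mathcal{E}_{F_\sigma}$ constrains $\{(X_i,Y_i):i\in F_\sigma\}$, but the one-cycles of $L_\sigma L_{\tilde\sigma}$ live on $\{(X_i, Y_{\sigma_i}): i\in A\}$ and generically $A\not\subseteq F_\sigma$, so the event is not supported on the overlap factor at all. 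The clean fix is to redefine $\Gamma_\sigma$ with the intersection over \emph{all} $T\subseteq[n]$, $|T|\geq k^*$, with constraints on $(X_i, Y_{\sigma_i})$ for $i\in T$. That version is genuinely covariant ($\Gamma_\sigma\circ\phi_\nu = \Gamma_{\nu\sigma}$), so the reduction to $\tilde\sigma=\mathrm{id}$ is valid, and Lemma~\ref{lemma: truncated-first} is unaffected since its union bound already replaces $\binom{N_1^\sigma}{k}$ by $\binom{n}{k}\leq (en/k)^k$. If you carry this modified definition through, your proof goes through exactly as you sketch and reproduces $\psi$ as stated.

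Two small quantitative remarks. First, your bound $\sum_{|C|\geq 2}\rho^{2|C|}\leq\rho^4(n-N)/2$ combined with $-\log(1-x)\leq x/(1-x)$ actually gives $\frac{dn}{2}\frac{\rho^4}{1-\rho^4}$ (the form appearing in $\psi$), and only then the weaker $\frac{dn}{2}\big(\frac{\rho^2}{1-\rho^2}\big)^2$ used for term~(I); keep the tighter form for term~(II) so $\psi$ matches the lemma verbatim. Second, the assertion ``for any slow-growing choice of $r_k,s_k$'' is too loose: $r_k,s_k$ are pinned down to within constants by the positivity conditions in Lemma~\ref{lemma: truncated-first} (roughly $r_k,s_k\gtrsim\sqrt{\ln(en/k)}$, $s_k\gtrsim\sqrt{(1-\rho^2)/\rho^2}\sqrt{\ln(en/k)}$), and the $\frac{2\rho^2}{1-\rho^2}(\frac{w_k}{k}-\frac{v_k}{k|\rho|})$ contribution to $\psi$ is negative of order $\sqrt{\rho^2 d\ln(en/k)}$; one must actually check, as the paper does via \eqref{eq: converse-truncated-wkvk} and Lemma~\ref{lemma: ineq_sqrt}, that this is dominated by $\ln(k/e)$ under $\rho^2 = o(1/(d\sqrt n))$ and $k\geq 13\sqrt n$.
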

\begin{proof}[Proof Sketch]
	The proof follows by examining the behavior of $\expe_0 \,\tilde{L} ^2$ for small and large number of fixed points of a permutation separately as 
	$$\expe_0 \,\tilde{L} ^2 =\underbrace{ \expe_0 \,\tilde{L} ^2  \mathbbm{1}\{ N_1^\sigma \leq k^*\}}_{\triangleq \,(\text{I})} + \underbrace{\expe_0 \,\tilde{L} ^2 \mathbbm{1}\{ N_1^\sigma > k^*\}}_{\triangleq \,(\text{II})}  \ .$$
	Then, the aim is converted into finding two functions of $k^*$, call them $f_1(k^*)$ and $f_2(k^*)$, such that 
	\begin{align*}
		\text{for } \rho^2< f_1(k^*),& \quad (\text{I})= 1+\mathcal{O}(1) \text{ and} \\
		\text{for }  \rho^2< f_2(k^*), & \quad (\text{II}) = \mathcal{O}(1)
	\end{align*}
	both hold. Combining them, we find the desired condition:
	$$ \text{For }\rho^2 < \min\{f_1(k^*),\, f_2(k^*)\}, \; \expe_0 \,\tilde{L} ^2= \text{(I)}+\text{(II)} = 1+\mathcal{O}(1) \ . $$
	In order to find a converse region of  $\rho^2$ as large as possible -- as tight as possible -- we need to choose $k^*$ as the maximizer of $\min\{f_1(k^*),\, f_2(k^*)\}$. See Appendix~\ref{app: truncated-second} for details.
\end{proof}

We can now establish the tighter converse result. 
\begin{proof}[Proof of Theorem~\ref{thm:detectionconverse}]
	Apply Equation~\eqref{eq: risk-truncated} along with Lemma~\ref{lemma: truncated-first} and Lemma~\ref{lemma: truncated-second} by setting the prior distribution as $\pi=\mathrm{Unif}(S_n)$, if $\rho^2 = o\big(\frac{1}{d\sqrt{n}}\big)$ and $d=\Omega\big(\ln \frac{n}{\,k^*}\big) = \Omega(\ln n)$ for $k^*= 13\sqrt{n}$, then one gets 
	$R^*\to 1$ as $d\to\infty$ and $n\to \infty$.
\end{proof}

\section*{Acknowledgments}
The authors would like to thank Reviewer 2 for our ISIT submission for catching an error in our converse bound, which is corrected here and in the final ISIT paper.

\addcontentsline{toc}{section}{References}
\bibliographystyle{ieeetr}
\bibliography{ref.bib}

\appendix
\renewcommand{\theequation}{\thesection.\arabic{equation}}

\section{Omitted Proofs from Section~\ref{sec: converse}}\label{app: converse}
Throughout the section, when we write $\mathcal{N}(X)$ it will be understood as the density function of $X\sim \mathcal{N}(0,\id_d)$ and  $\mathcal{N}_{\rho}(Y|X)$ will be understood as the density function of $Y|X \sim \mathcal{N}(\rho X , (1-\rho^2)\id_d)$. Observe that  $\Sigma_\rho^{-1} = \frac{1}{1-\rho^2}\begin{smallbmatrix}1 & -\rho \\ -\rho & 1 \end{smallbmatrix}$ and under $\mathbb{P}_{1|\sigma}$, $$X_i\,|Y_{\sigma_i}\iid\mathcal{N}(\rho Y_{\sigma_i}, (1-\rho^2)\id_d) \text{ for } i \in [n] \quad \text{and} \quad  X_i \perp \!\!\! \perp Y_j \text{ for } j\neq \sigma_i \ .$$
\subsection{Subexponential random variable}
First, we recall the definition of subexponential random variables  (see, for example, \cite[Proposition 2.9]{wainwright2019high}), which we will need for the square of subgaussian.
\begin{definition}\label{def: subexp}
	$X$ is $\mathrm{subexponential}(\nu,b)$  with nonnegative  $\nu,b$ if the following holds:
	$$\expe \exp({\delta(X-\expe X)} )\leq \exp\left(\frac{\nu^2\delta^2}{2}\right) \text{ for all } \abs{\delta}< \frac{1}{b} \ .$$
	Moreover, it has the tail bound for $t>0$,
	$$\mathbb{P}(X - \expe X \geq t)\leq 
	\exp\left(-\frac{t}{2}\min\left\{\frac{t}{\nu^2},\frac{1}{b} \right\}\right)  \ .$$
\end{definition}

Now, we present two inequalities for the concentration of square of Gaussian random variable are presented. Lemma~\ref{laurent-massart} is a tighter version of subexponential inequality for the sum of independent Gaussian squares whereas Lemma~\ref{gaussian-chaos} is a special case of Hanson-Wright inequality.
\begin{lemma}[Laurent-Massart {\cite[Lemma~1]{laurent-massart}} ]\label{laurent-massart}
	Let $X_1,\ldots, X_d \iid\mathcal{N}(0,1)$ and $\alpha=(\alpha_1,\ldots, \alpha_d) \in \mathbb{R}^{d}$ with nonnegative entries. For $t>0$, 
	
	$$\mathbb{P}\left( \sum_{i=1}^d \alpha^{\vphantom{2}}_i(X_i^2-1) \leq - t\right) \leq \exp\left(-\frac{t^2}{4\norm{\alpha}_2^2}\right)$$
\end{lemma}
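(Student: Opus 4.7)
The plan is to prove this via a Chernoff bound on the lower tail of $S = \sum_{i=1}^d \alpha_i (X_i^2 - 1)$, exploiting the fact that each $X_i^2$ is $\chi^2_1$ with an explicit moment generating function. The nonnegativity of the $\alpha_i$ will be essential for making the Chernoff optimization unconstrained.

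First I would rewrite $\P(S \leq -t) = \P(-S \geq t)$ and, for $\lambda > 0$, apply Markov to $e^{-\lambda S}$. Since the $X_i$ are independent and a direct Gaussian integral gives $\expe[e^{-\lambda(X^2-1)}] = e^{\lambda}(1+2\lambda)^{-1/2}$ for $\lambda \geq -1/2$, I obtain
\[
	\ln \expe\bigl[e^{-\lambda S}\bigr] \;=\; \sum_{i=1}^d \Bigl(\lambda \alpha_i - \tfrac{1}{2}\ln(1+2\lambda\alpha_i)\Bigr),
\]
which is well-defined for every $\lambda \geq 0$ precisely because each $\alpha_i$ is nonnegative.

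Next I would establish the per-coordinate quadratic bound $\lambda\alpha_i - \frac{1}{2}\ln(1+2\lambda\alpha_i) \leq \lambda^2 \alpha_i^2$. This reduces to the elementary inequality $\ln(1+x) \geq x - x^2/2$ for $x \geq 0$, verified by differentiating $f(x) = \ln(1+x) - x + x^2/2$ and noting $f(0) = 0$ and $f'(x) = x^2/(1+x) \geq 0$. Summing over $i$ yields $\ln\expe[e^{-\lambda S}] \leq \lambda^2 \norm{\alpha}_2^2$, so that
\[
	\P(S \leq -t) \;\leq\; \exp\bigl(-\lambda t + \lambda^2 \norm{\alpha}_2^2\bigr).
\]

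Finally I would optimize over $\lambda > 0$; the minimizer is $\lambda^\star = t/(2\norm{\alpha}_2^2)$, and substituting back gives the claimed bound $\exp(-t^2/(4\norm{\alpha}_2^2))$. There is no real obstacle here: the only subtle point is keeping track of the direction of the inequality so that the logarithmic bound is valid for all $\lambda \geq 0$ without an upper constraint, allowing unrestricted optimization. A different $\chi^2$-style inequality (Bernstein-type with an additional linear-in-$t$ term) would be needed for the corresponding upper tail, but is not required for the one-sided statement at hand.
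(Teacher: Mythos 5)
Your proof is correct, and it reproduces the standard Chernoff-bound argument from Laurent and Massart's original paper (the per-coordinate inequality $\ln(1+x) \ge x - x^2/2$ for $x \ge 0$ applied to the log-MGF of $1 - X_i^2$, followed by the unconstrained optimization $\lambda^\star = t/(2\norm{\alpha}_2^2)$). The paper does not supply its own proof of this lemma --- it is cited directly as \cite[Lemma~1]{laurent-massart} --- so there is no in-paper argument to compare against; your derivation is a faithful self-contained reconstruction of the cited result.
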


\begin{lemma}[Gaussian Chaos]\label{gaussian-chaos}
	$X \sim \mathcal{N}(0,\id_d)$, $A=O+D$ symmetric matrix where $D = \text{diag}(\alpha)$ with $\alpha_i=A_{ii}$ for $i\in[d]$ and $O$ is the off-diagonal part of $A$  with $\lambda=(\lambda_1,\ldots,\lambda_d)$ eigenvalues of $O$. For $t>0$,
	$$\mathbb{P} \left(X^\mathsf{T}AX - \expe X^\mathsf{T}AX \geq t \right) \leq 2\exp\left(-\frac{t}{16}\min\left\{\frac{t}{2\norm{\alpha}_2^2}, \, \frac{1}{ \norm{\alpha}_\infty^{\vphantom{2}}} ,\, \frac{t}{2\norm{\lambda}_2^2}, \, \frac{1}{ \norm{\lambda}_\infty^{\vphantom{2}} }\right\} \right)$$
\end{lemma}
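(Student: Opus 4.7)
The plan is to split $X^{\mathsf{T}} A X$ into its diagonal and off-diagonal contributions, $X^{\mathsf{T}} A X = X^{\mathsf{T}} D X + X^{\mathsf{T}} O X$, and control each piece by reducing it to a weighted centered chi-squared sum to which Definition~\ref{def: subexp} applies. A union bound splits the deviation level $t$ equally between the two pieces; the factor of $2$ in the statement comes from this union bound, and the four terms in the minimum arise because the diagonal piece contributes the $\alpha$-dependent terms while the off-diagonal piece contributes the $\lambda$-dependent terms.

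For the diagonal piece, I would write $X^{\mathsf{T}} D X - \expe X^{\mathsf{T}} D X = \sum_i \alpha_i (X_i^2 - 1)$. Each $X_i^2 - 1$ is a centered $\chi_1^2$ variable; a direct computation of its MGF, $\expe e^{\delta(X_i^2 - 1)} = e^{-\delta}(1-2\delta)^{-1/2}$, gives the sub-exponential parameters $(\nu, b) = (2, 4)$, so $\alpha_i (X_i^2 - 1)$ is $\mathrm{subexponential}(2|\alpha_i|, 4|\alpha_i|)$ and the sum is $\mathrm{subexponential}(2\norm{\alpha}_2, 4\norm{\alpha}_\infty)$. Applying the sub-exponential tail bound at level $t/2$ from Definition~\ref{def: subexp} yields exactly $\exp\bigl(-\tfrac{t}{16}\min\{\tfrac{t}{2\norm{\alpha}_2^2}, \tfrac{1}{\norm{\alpha}_\infty}\}\bigr)$.

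For the off-diagonal piece, the key observation is that $O$ is symmetric with zero diagonal, hence $\mathrm{tr}(O) = \sum_i \lambda_i = 0$ and in particular $\expe X^{\mathsf{T}} O X = 0$. Diagonalizing $O = U^{\mathsf{T}} \mathrm{diag}(\lambda) U$ with $U$ orthogonal, and using the rotational invariance $UX \stackrel{d}{=} X$ of the standard Gaussian, I would rewrite $X^{\mathsf{T}} O X \stackrel{d}{=} \sum_i \lambda_i Y_i^2$ for $Y_i \iid \mathcal{N}(0,1)$. Since $\sum_i \lambda_i = 0$, this in turn equals $\sum_i \lambda_i (Y_i^2 - 1)$, which is the same kind of weighted centered chi-squared sum as before, but with weights $\lambda_i$ in place of $\alpha_i$. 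Repeating the sub-exponential tail argument at level $t/2$ gives $\exp\bigl(-\tfrac{t}{16}\min\{\tfrac{t}{2\norm{\lambda}_2^2}, \tfrac{1}{\norm{\lambda}_\infty}\}\bigr)$.

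Combining the two bounds via the union bound produces the claimed inequality with the prefactor $2$ and the four-term minimum. The only delicate step is the clean MGF bound $\expe e^{\delta(X_i^2-1)} \le e^{2\delta^2}$ valid for $|\delta| < 1/4$ (i.e., verifying the $(\nu,b) = (2,4)$ parameters), which is a short Taylor-expansion check of $-\delta - \tfrac{1}{2}\ln(1-2\delta)$ against $2\delta^2$; nothing else in the argument requires care beyond the rotational-invariance trick that converts the generic symmetric zero-diagonal matrix $O$ into an explicit linear combination of independent chi-squared variables.
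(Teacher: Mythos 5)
Your proposal is correct and follows essentially the same route as the paper's proof: split $X^{\mathsf{T}}AX$ into diagonal and off-diagonal parts, apply a union bound at level $t/2$, identify $\sum_i\alpha_i(X_i^2-1)$ as $\mathrm{subexponential}(2\norm{\alpha}_2,4\norm{\alpha}_\infty)$, use orthogonal diagonalization of $O$ together with $\mathrm{tr}(O)=0$ and rotational invariance to reduce $X^{\mathsf{T}}OX$ to $\sum_i\lambda_i(X_i^2-1)$, and combine the two tails (the paper makes the last step explicit via $e^{-a}+e^{-b}\le 2e^{-\min\{a,b\}}$, which is what actually produces the prefactor $2$ and the four-term minimum after the union bound).
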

\begin{proof}
	First, we separate it into diagonal and off-diagonal parts as
	\begin{align*}
		X^\mathsf{T}AX - \expe X^\mathsf{T}AX &= X^\mathsf{T}OX - \expe X^\mathsf{T}OX + X^\mathsf{T}DX - \expe X^\mathsf{T}DX\\
		& = \sum_{i\neq j} O_{ij}(X_iX_j - \expe X_i X_j) +\sum_{i=1}^d \alpha^{\vphantom{2}}_i(X_i^2 -\expe X_i^2) \\ 
		&= \sum_{i\neq j} O_{ij}X_iX_j +\sum_{i=1}^d \alpha^{\vphantom{2}}_i(X_i^2 -1)
	\end{align*}
	Since $\{W+Z \geq t\} \subseteq \{W\geq \gamma t\}\cup\{Z\geq (1-\gamma)t\} $ for any $W,Z$ and $\gamma \in [0,1]$, by union bound with $\gamma=\half$
	$$\mathbb{P} \left(X^\mathsf{T}AX - \expe X^\mathsf{T}AX \geq t \right)  \leq \, \mathbb{P} \bigg(\sum_{i\neq j} O_{ij}X_iX_j  \geq \frac{t}{2} \bigg)  + \mathbb{P} \bigg(\sum_{i} \alpha^{\vphantom{2}}_i(X_i^2 -1) \geq \frac{t}{2} \bigg)  \ .$$
	
	By a simple calculation, one can show that $X_1^2$ is $\mathrm{subexponential}(2,4)$ which implies that $\sum \alpha^{\vphantom{2}}_i(X_i^2 -1)$ is $\mathrm{subexponential}\left(2\norm{\alpha}_2 ,4\norm{\alpha}_\infty \right)$. Therefore, by Definition~\ref{def: subexp}, the first summand can be bounded as
	\begin{equation}\label{eq: gaussian-chaos-first}
		\mathbb{P} \left(\sum_{i=1}^d \alpha_i(X_i^2 -1) \geq \frac{t}{2} \right) \leq \exp\left(-\frac{t}{16}\min\left\{\frac{t}{2\norm{\alpha}_2^2}, \, \frac{1}{ \norm{\alpha}}_\infty \right\}\right) \ .
	\end{equation} 
	
	For the other part, since $O$ is symmetric, it's diagonalizable and we can apply eigendecomposition as $O=B^\mathsf{T}\Lambda B$ where $\lambda_i$s are the eigenvalues of $O$ forming the matrix $\Lambda = \mathrm{diag}(\lambda_1,\ldots,\lambda_d)$ and $B$ is the orthonormal matrix having the columns as the corresponding eigenvectors. Hence, we can rewrite the off-diagonal random part as
	
	$$X^\mathsf{T}OX = \sum_{i=1}^d \lambda_i \Big(\sum_{j=1}^d B_{ij}X_j\Big)^2 \stackrel{\mathrm{d} }{=} \sum_{i=1}^d \lambda^{\vphantom{2}}_iX_i^2 = \sum_{i=1}^d \lambda^{\vphantom{2}}_i (X_i^2-1)$$
	where we used the facts $ \sum_{j}B_{ij}X_j\iid \mathcal{N}(0,1)$ for $i\in [n]$ and $\sum_{i} \lambda_i = \mathrm{tr}(\Lambda) = \mathrm{tr}(\Lambda B B^{-1}) = \mathrm{tr}(B^{-1}\Lambda B ) = \mathrm{tr}(O)=0$ (As a side note, this implies that a symmetric matrix with zero diagonal cannot be positive definite.). Hence, again 
	$X^\mathsf{T}OX = \sum_{i=1}^d \lambda^{\vphantom{2}}_i (X_i^2-1)$  is $\mathrm{subexponential}\left(2\norm{\lambda}_2 ,4\norm{\lambda}_\infty \right)$,  and we have 
	
	\begin{equation}\label{eq: gaussian-chaos-second}
		\mathbb{P} \left(X^\mathsf{T}OX  \geq \frac{t}{2} \right) \leq \exp\left(-\frac{t}{16}\min\left\{\frac{t}{2\norm{\lambda}_2^2}, \, \frac{1}{ \norm{\lambda}_\infty^{\vphantom{2}} }\right\}\right)  \ .
	\end{equation}
	By summing \eqref{eq: gaussian-chaos-first} with \eqref{eq: gaussian-chaos-second}  and using  $e^{-a} + e^{-b} \leq 2e^{-\min\{a,b\}}$, we get the desired result. 
\end{proof}

For self-containment of the paper, the moment generating functions of two functions of Gaussian random vectors are given below.
\begin{lemma}\label{lemma: gaussian-mgf}
	For $X \sim \mathcal{N}(0,\id_d)$, $b\in \mathbb{R}^d$, $\id_d-R$ positive definite and symmetric,   $$\expe\,{\exp \left( \half X^\mathsf{T}R X+X^\mathsf{T} b \right) }=  \exp\left(\half{b^\mathsf{T}(\id_d-R)^{-1}b}\right) \mathrm{det}(\id_d-R)^{-1/2} \ .$$
\end{lemma}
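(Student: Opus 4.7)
The plan is to reduce the expectation to a standard Gaussian normalization integral via completion of the square. Writing the expectation explicitly with the standard normal density gives
\begin{align*}
\expe\exp\!\Bigl(\tfrac{1}{2}X^\mathsf{T} R X + X^\mathsf{T} b\Bigr) = (2\pi)^{-d/2}\!\int_{\RR^d}\!\exp\!\Bigl(-\tfrac{1}{2}X^\mathsf{T}(\id_d-R)X + X^\mathsf{T} b\Bigr)\dif X,
\end{align*}
so the integrand carries the symmetric positive-definite matrix $A \triangleq \id_d - R$ in the quadratic term and a linear term $X^\mathsf{T} b$.

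The first step is to complete the square. Since $A$ is symmetric and positive definite, $A^{-1}$ exists, and one checks directly that
\begin{align*}
-\tfrac{1}{2}X^\mathsf{T} A X + X^\mathsf{T} b = -\tfrac{1}{2}(X - A^{-1}b)^\mathsf{T} A (X - A^{-1}b) + \tfrac{1}{2} b^\mathsf{T} A^{-1} b.
\end{align*}
Pulling the data-independent factor $\exp(\tfrac{1}{2} b^\mathsf{T} A^{-1} b)$ outside the integral and performing the translation $Y = X - A^{-1} b$ (the Lebesgue measure is translation-invariant) reduces the problem to evaluating the Gaussian normalization integral $(2\pi)^{-d/2}\int \exp(-\tfrac{1}{2} Y^\mathsf{T} A Y)\dif Y$.

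The second step is to recognize this remaining integral as the normalization constant of the $\mathcal{N}(\zeros,A^{-1})$ density, which equals $\det(A)^{-1/2} = \det(\id_d-R)^{-1/2}$; alternatively, diagonalize $A = Q^\mathsf{T}\Lambda Q$ with $Q$ orthogonal and $\Lambda$ diagonal with positive eigenvalues $\lambda_1,\dots,\lambda_d$, substitute $Z = Q Y$ (which has unit Jacobian), and factor into $d$ one-dimensional Gaussian integrals each contributing $\lambda_i^{-1/2}$. Assembling the pieces yields the claimed identity
\begin{align*}
\expe\exp\!\Bigl(\tfrac{1}{2}X^\mathsf{T} R X + X^\mathsf{T} b\Bigr) = \exp\!\Bigl(\tfrac{1}{2} b^\mathsf{T}(\id_d-R)^{-1}b\Bigr)\det(\id_d-R)^{-1/2}.
\end{align*}
There is no genuine obstacle here; the only subtlety worth flagging is that the positive-definiteness hypothesis on $\id_d - R$ is exactly what is needed to guarantee that the translated integral converges (otherwise the exponent is unbounded above) and that $A^{-1}$ used in the completion of the square exists.
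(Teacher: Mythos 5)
Your proof is correct and follows essentially the same approach as the paper: complete the square in the exponent and reduce to a Gaussian normalization integral. The only cosmetic difference is that the paper factors $\id_d - R = U^\mathsf{T}U$ via Cholesky decomposition to complete the square, whereas you complete the square directly with $A^{-1}$ and evaluate the remaining integral via orthogonal diagonalization (or by recognizing the $\mathcal{N}(\zeros, A^{-1})$ normalization constant).
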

\begin{proof}
	Since $\id_d-R$ is positive definite and symmetric, by Cholesky decomposition, there exists a matrix $U$ such that $\id_d-R = U^\mathsf{T}U$. 
	Now, let us complete the square for the following expression:
	\begin{align}
		\half x^\mathsf{T}R x+x^\mathsf{T} b - \half x^\mathsf{T}x &= -\half \left( x^\mathsf{T}(\id_d-R)x  - 2x^\mathsf{T} b \right) \nonumber \\ 
		& = -\half \left( \norm{Ux}^2  - 2(Ux)^\mathsf{T} U^{-\mathsf{T}} b \right) \nonumber \\ 
		& = -\half \left( \norm{Ux-U^{-\mathsf{T}} b}^2 - \norm{U^{-\mathsf{T}} b}^2 \right)  \label{eq: mgf-gauss-middle}
	\end{align}
	By probability density function of $X$ and \eqref{eq: mgf-gauss-middle},
	\begin{align*}
		\expe\,{\exp \left(\half X^\mathsf{T} RX+X^\mathsf{T} b \right) } & =  \int_{-\infty}^\infty\ldots \int_{-\infty}^\infty (2 \pi)^{-d/2} \exp\left( \half x^\mathsf{T}R x+x^\mathsf{T} b - \half x^\mathsf{T}x \right) \dif x\\
		&  = \int_{-\infty}^\infty\ldots \int_{-\infty}^\infty (2 \pi)^{-d/2}  \exp \left( -\frac{   \norm{Ux-U^{-\mathsf{T}} b}^2}{2} + \frac{ \norm{U^{-\mathsf{T}} b}^2}{2}  \right)\dif x \\
		& = \exp \left(  \half{b^\mathsf{T}\left(U^\mathsf{T}U\right)^{-1}b } \right)  \int_{-\infty}^\infty \ldots \int_{-\infty}^\infty (2 \pi)^{-d/2} \\
		& \qquad \qquad  \cdot  \exp \left( -\half \left(x-\left(U^\mathsf{T}U\right)^{-1}b \right)^\mathsf{T} \left(\left(U^\mathsf{T}U\right)^{-1}\right)^{-1} \left(x-\left(U^\mathsf{T}U\right)^{-1}b \right) \right)\dif x \\
		&=\exp \left(  \half{b^\mathsf{T}\left(\id_d-R\right)^{-1}b } \right) \mathrm{det}\left((\id_d-R)^{-1}\right)^{1/2}
	\end{align*}
	where $\id_d-R \succ 0$ is needed for the integral to converge.
\end{proof}

\begin{lemma}\label{lemma: gaussian-mgf-2}
	For $X,Y \iid \mathcal{N}(0,\id_d)$ and $1+2a > |b|$,   
	$$\,\expe \exp\left(-a\norm{X}^2 -a\norm{Y}^2 +bX^\mathsf{T} Y\right) =  \left((1+2a)^2 -b^2\right)^{-d/2} \ .$$
\end{lemma}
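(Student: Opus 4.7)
The plan is to reduce this two-vector MGF to a single application of Lemma~\ref{lemma: gaussian-mgf} by stacking the independent Gaussians. Concretely, let $Z = \begin{bsmallmatrix} X \\ Y \end{bsmallmatrix} \in \mathbb{R}^{2d}$; since $X, Y \iid \mathcal{N}(0,\id_d)$, we have $Z \sim \mathcal{N}(0, \id_{2d})$. The exponent can be written as a pure quadratic form
\begin{equation*}
-a\norm{X}^2 - a\norm{Y}^2 + bX^\mathsf{T}Y \;=\; -\tfrac{1}{2} Z^\mathsf{T} M Z, \qquad M \triangleq \begin{bmatrix} 2a\,\id_d & -b\,\id_d \\ -b\,\id_d & 2a\,\id_d \end{bmatrix},
\end{equation*}
which can be verified by expanding the block matrix product.

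I would then apply Lemma~\ref{lemma: gaussian-mgf} to $Z$ with $R = -M$ and linear term $b = 0$, yielding
\begin{equation*}
\expe \exp\bigl(-\tfrac{1}{2} Z^\mathsf{T} M Z\bigr) \;=\; \det(\id_{2d} + M)^{-1/2},
\end{equation*}
provided $\id_{2d} + M \succ 0$. The matrix $\id_{2d} + M$ has block form $\begin{bsmallmatrix} (1+2a)\id_d & -b\,\id_d \\ -b\,\id_d & (1+2a)\id_d \end{bsmallmatrix}$, and since all four blocks are scalar multiples of $\id_d$ (hence mutually commuting), the standard determinant identity for $2\times 2$ block matrices with commuting entries gives
\begin{equation*}
\det(\id_{2d} + M) \;=\; \det\!\bigl((1+2a)\id_d - (-b)\id_d\bigr)\,\det\!\bigl((1+2a)\id_d + (-b)\id_d\bigr) \;=\; \bigl((1+2a)^2 - b^2\bigr)^{d}.
\end{equation*}

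Finally, positive definiteness of $\id_{2d}+M$ is equivalent to both $1+2a+b > 0$ and $1+2a-b > 0$, i.e., $1+2a > |b|$, which is exactly the hypothesis. Taking the $-1/2$ power of the determinant yields $((1+2a)^2 - b^2)^{-d/2}$, as claimed. There is no real obstacle here: the step that does any actual work is recognizing the block structure so that Lemma~\ref{lemma: gaussian-mgf} applies in $2d$ dimensions; everything else is a short determinant computation.
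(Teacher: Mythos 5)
Your proof is correct, but it takes a genuinely different route from the paper. The paper conditions on $Y$ (law of total expectation), applies Lemma~\ref{lemma: gaussian-mgf} in $d$ dimensions to integrate out $X$, and then applies the lemma a second time to the resulting function of $\norm{Y}^2$; along the way it separately tracks the two positivity conditions $1+2a>0$ and $|1+2a|>|b|$ needed for each integral. You instead stack $X,Y$ into $Z \in \mathbb{R}^{2d}$ and invoke Lemma~\ref{lemma: gaussian-mgf} exactly once, reducing everything to a block determinant. Your computation checks out: $Z^\mathsf{T}MZ = 2a\norm{X}^2 + 2a\norm{Y}^2 - 2bX^\mathsf{T}Y$, so the exponent is $-\tfrac12 Z^\mathsf{T}MZ$; $\id_{2d}+M$ has the block form you claim; and for commuting blocks of the form $\begin{bsmallmatrix} A & B \\ B & A \end{bsmallmatrix}$, $\det = \det(A+B)\det(A-B) = \big((1+2a)^2 - b^2\big)^d$. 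Your positivity check is also cleaner: the eigenvalues of $\id_{2d}+M$ are $(1+2a)\pm b$, each with multiplicity $d$, so $\id_{2d}+M \succ 0$ iff $1+2a > |b|$, which recovers the stated hypothesis in one step rather than as the conjunction of two nested conditions. What the paper's approach buys is that it stays entirely within $d$-dimensional statements and mirrors the conditioning computations used elsewhere in the appendix (e.g., Lemma~\ref{lemma: cycle-exp}); what yours buys is brevity, a single invocation of the key lemma, and a more transparent derivation of the validity region.
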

\begin{proof}
	We have 
	\begin{flalign*}
\expe_{X,Y \, \iid \, \mathcal{N}(0,\id_d)}\exp\left(-a\norm{X}^2 -a\norm{Y}^2 +bX^\mathsf{T} Y\right)& \stackrel{\text{(i)}}{=} \mathbb{E}_Y \left[\exp\left(-a\norm{Y}^2 \right)\cdot\expe_X\Big[\exp\left(-a\norm{X}^2 +bX^\mathsf{T} Y\right) \,\middle| \,Y\,\Big] \, \right]\\
	&\stackrel{\text{(ii)}}{=}(1+2a)^{-d/2}\cdot \expe_Y \bigg[\exp\left(\norm{Y}^2\left(-a + \frac{b^2}{2(1+2a)}\right)\right)\bigg]\\
	&\stackrel{\text{(iii)}}{=}(1+2a)^{-d/2}\left(1-2\left(-a + \frac{b^2}{2(1+2a)}\right)\right)^{-d/2}\\
	&=\left((1+2a)^2 -b^2\right)^{-d/2} 
\end{flalign*}
where (i)  is by the law of total expectation, (ii) and (iii) are by Lemma~\ref{lemma: gaussian-mgf} with $1+2a>0$ is needed for the first integral and $|1+2a|>|b|$ is needed for the second integral to converge.
\end{proof}

\subsection{Some results about permutations}\label{app: perms}

In this section, we state some technical details about permutations which have a similar flavor as those in \cite{jiaming2021testinggraphs}. For a permutation $\sigma$, let $\mathcal{C}_\sigma$ be the \textit{set of its cycles} and for any $ C \in \mathcal{C}_\sigma$ let $\mathcal{O}_C$ be the \textit{orbit }of cycle $C $ of length $|C| \in [n]$. Moreover, if $N_k^\sigma$ is the number of $k$-cycles at $\sigma$, then $N_k^\sigma \in \left[ \floor{\frac{n}{k}}\right] \cup \{0\}$ and $\sum_{k \in [n]} k\cdot N_k^\sigma  = n$ should hold.

For a fixed $\sigma$ and $i\in[n]$, denote $Z_i$ as the product of likelihood ratios of pairs $(X_i,Y_i)$ and $(X_i,Y_{\sigma_i})$, i.e.,
$$Z_{i} =\frac{\mathcal{N}_{\rho}\left(Y_{\sigma_i} \middle|  X_i\right)}{\mathcal{N}\left(Y_{\sigma_i} \right)} \cdot \frac{\mathcal{N}_{\rho}\left(Y_{i \vphantom{\sigma_i}} \middle|  X_i\right)}{\mathcal{N}\left(Y_{i \vphantom{\sigma_i}}  \right)} $$ and let $Z_C$ be the product of the $Z_i$ in cycle $C\in \mathcal{C}_\sigma$ which turns out to be 
$$Z_C =  \prod_{i \in \mathcal{O}_{C}} \, Z_i= \frac{1}{(1-\rho^2)^{d|C|}} \exp\left(-\frac{1}{1-\rho^2}\sum_{i \in \mathcal{O}_C}\left( \rho^2 \norm{Y_i}^2  + \rho^2 \norm{X_i}^2 -\rho X_i^\mathsf{T}(Y_i+Y_{\sigma_i}) \right)\right) \ .$$
Observe that for independent random variables $X_1, \ldots, X_n,Y_1, \ldots , Y_n$  and different cycles $C_1,C_2 \in \mathcal{C}_\sigma $ of a given permutation $\sigma\in S_n$, since the cycles are disjoint, we observe that $Z_{C_1}$ and $Z_{C_2}$ do not possess a common random variable. Hence,  $Z_{C_1} \perp \!\!\! \perp Z_{C_2}$.

The expectation of $Z_C$ under $H_0$ is computed as following. For a different version of the proof, follow the steps of { \cite[Proposition 1]{jiaming2021testinggraphs}}.
\begin{lemma}\label{lemma: cycle-exp}
	$\mathbb{E}_{0}\, Z_C =\left(1-\rho^{2 |C|}\right)^{-d}$.
\end{lemma}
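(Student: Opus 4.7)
The plan is to reduce to the single-feature case $d=1$ by coordinate-wise independence, and then to evaluate the resulting Gaussian integral with the standard determinant formula $\expe[\exp(-\half W^\mathsf{T} M W)] = \det(\id+M)^{-1/2}$. The factor $1-\rho^{2|C|}$ in the answer will ultimately come from the cyclotomic identity $\prod_{j=0}^{\ell-1}(1-\rho^2 \omega^j) = 1-\rho^{2\ell}$, where $\ell = |C|$ and $\omega = e^{2\pi i/\ell}$.

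First I would observe that both the standard Gaussian density and the exponent defining $Z_C$ factor into a product over the $d$ feature coordinates of each $X_i, Y_i$, so $\expe_0 Z_C = (\expe_0 Z_C^{(1)})^d$ with $Z_C^{(1)}$ denoting the scalar-feature analogue. From here on set $d=1$ and, relabelling the orbit, suppose $\mathcal{O}_C = \{1,\dots,\ell\}$ with $\sigma_k = k+1$ modulo $\ell$. Under $\P_0$, the stacked vector $W = (X_1,\dots,X_\ell,Y_1,\dots,Y_\ell)^\mathsf{T}$ is standard Gaussian in $\RR^{2\ell}$. Using the cycle identity $\sum_k Y_{\sigma_k}^2 = \sum_k Y_k^2$, the exponent of $Z_C^{(1)}$ can be written as $-\half W^\mathsf{T} M W$ with
$$M = \frac{1}{1-\rho^2}\begin{pmatrix} 2\rho^2\, \id_\ell & -\rho(\id_\ell + S) \\ -\rho(\id_\ell + S^\mathsf{T}) & 2\rho^2\, \id_\ell \end{pmatrix},$$
where $S$ is the $\ell \times \ell$ cyclic shift matrix.

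Then I would apply the Gaussian MGF identity above to get $\expe_0 Z_C^{(1)} = (1-\rho^2)^{-\ell}\det(\id+M)^{-1/2}$. The central step is evaluating this determinant. Pulling out a common factor of $(1-\rho^2)^{-1}$ from each block of $\id + M$, then applying the Schur complement formula together with the identity $(\id + S^\mathsf{T})(\id + S) = 2\,\id + S + S^\mathsf{T}$ (which uses orthogonality of $S$), one obtains
$$\det(\id + M) = (1-\rho^2)^{-2\ell}\, \det\!\big[(1+\rho^4)\,\id_\ell - \rho^2(S + S^\mathsf{T})\big].$$
Since $S$ is diagonalized by the discrete Fourier transform with eigenvalues $\omega^j$, the eigenvalues of the bracketed matrix are $(1-\rho^2\omega^j)(1-\rho^2\omega^{-j}) = |1-\rho^2\omega^j|^2$, and the cyclotomic identity yields $\det[\cdot] = (1-\rho^{2\ell})^2$. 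Substituting back, $\expe_0 Z_C^{(1)} = (1-\rho^2)^{-\ell}\cdot(1-\rho^2)^{\ell}(1-\rho^{2\ell})^{-1} = (1-\rho^{2\ell})^{-1}$, and raising to the $d$-th power gives the stated result.

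The only real obstacle is the bookkeeping in the determinant calculation: one has to correctly assemble the block-circulant structure and not get lost in indices. Once the Schur complement is set up and $S$ is handled via its DFT diagonalization, the roots-of-unity identity does all the work. As a minor side point, positive-definiteness of $\id + M$ (which is needed for the Gaussian integral to converge) also drops out of the eigenvalue computation, since $|1-\rho^2\omega^j|^2 > 0$ for $|\rho|<1$.
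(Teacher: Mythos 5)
Your proof is correct. It is essentially the same argument as the paper's, organized slightly differently: the paper evaluates the Gaussian integral in two stages (first conditioning on $Y^n$ and integrating out the $X_i$'s via Lemma~\ref{lemma: gaussian-mgf}, then integrating out the $Y_i$'s and encountering the circulant matrix $\id_{|C|}-R$), whereas you set up the joint $2\ell\times 2\ell$ block quadratic form for $(X,Y)$ and integrate in a single shot, then reduce by a Schur complement. These are the same computation in disguise — the Schur-complement step is precisely what conditional Gaussian marginalization does — and both routes land on the same $\ell\times\ell$ matrix: your $(1+\rho^4)\id_\ell - \rho^2(S+S^\mathsf{T})$ equals $(1-\rho^4)(\id_{|C|}-R)$ in the paper's notation. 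The key lemma is likewise shared: the paper invokes the circulant-determinant formula via Fourier coefficients, you diagonalize the cyclic shift $S$ by the DFT; both then use $\prod_{j=0}^{\ell-1}(1-\rho^2\omega^j)=1-\rho^{2\ell}$. Your observation that positive definiteness of $\id+M$ falls out of the same eigenvalue computation (together with the trivially positive top-left block) is a nice bonus and is not spelled out in the paper.
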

\begin{proof}
	To simplify the expression, we first condition on one of the databases, without loss of generality condition on $Y^n$. By applying the independence of $X_i$s under $H_0$ and the moment generating functions in Lemma~\ref{lemma: gaussian-mgf}, we get
	\begin{flalign*}
		\expe_0 \left[ Z_C \, |Y^n \right] = & \, \frac{1}{(1-\rho^2)^{d|C|}} \exp\left(-\frac{\rho^2}{1-\rho^2}\sum_{i \in \mathcal{O}_C} \norm{Y_i}^2 \right)  \\
		& \qquad \qquad \cdot \prod_{i \in \mathcal{O}_C}\expe_0 \left[ \exp\left( -\frac{1}{1-\rho^2}\left(  \rho^2 \norm{X_i}^2 -\rho X_i^\mathsf{T}(Y_i+Y_{\sigma_i}) \right)\right)\middle|\, Y^n\right] \\
		=& \,\frac{1}{(1-\rho^2)^{d|C|}} \exp\left(-\frac{\rho^2}{1-\rho^2}\sum_{i \in \mathcal{O}_C} \norm{Y_i}^2 \right)  \\
		& \qquad \qquad \cdot \left(1 + 2\frac{\rho^2}{1-\rho^2}\right)^{-d|C|/2}\exp \left(\frac{\frac{\rho^2}{(1-\rho^2)^2}}{2\left(1 + 2\frac{\rho^2}{1-\rho^2}\right)}\sum_{i \in \mathcal{O}_C}\norm{Y_i + Y_{\sigma_i}}^2\right)  \\
		= &\, (1-\rho^4)^{-\frac{d|C|}{2}}\exp\left(\sum_{i \in \mathcal{O}_C}-\frac{\rho^4}{1-\rho^4} \norm{Y_i}^2 + \frac{\rho^2}{1-\rho^4}\,Y_i^\mathsf{T}Y_{\sigma_i}\right) \ .
	\end{flalign*}
	Therefore, by the law of total expectation,
	\begin{flalign}
		(1-\rho^4)^{\frac{d|C|}{2}}\, \expe_0\, Z_C  & = (1-\rho^4)^{\frac{d|C|}{2}}\, \expe_0\,\expe_0\left[ Z_C \, | Y^n\right] \nonumber\\ &= \expe_{Y_i \, \iid\, \mathcal{N}(0,\id_d)} \exp\left(\frac{\rho^2}{1-\rho^4} \sum_{i \in \mathcal{O}_C} -\rho^2\norm{Y_i}^2 + Y_i^\mathsf{T}Y_{\sigma_i}\right) \nonumber\\
		&= \left(\expe_{Y_i\, \iid\, \mathcal{N}(0,1)} \exp\left(\frac{\rho^2}{1-\rho^4} \sum_{i \in \mathcal{O}_C} -\rho^2 Y_i^2 + Y_iY_{\sigma_i}\right) \right)^d \nonumber\\
		& =\left(\expe_{Y\sim\, \mathcal{N}(0,\id_{|C|})}\exp \left(\half Y^\mathsf{T}RY\right)\right)^d = \left(\det(\id_{|C|}-R)\right)^{-\frac{d}{2}}\label{eq: cycle-circulant}
	\end{flalign}
	where $R$ is a circulant matrix  with entries
	$$R = \frac{\rho^2}{1-\rho^4}\begin{bmatrix}
		-2\rho^2 & 1 & 0 & \cdots & 1 \\
		1 & -2\rho^2 & 1 & \cdots & 1 \\
		0 & 1 & -2\rho^2 &\cdots & 0\\
		0 & 0 & 1  & \cdots & 0\\
		\vdots & \vdots&\vdots & \ddots& \vdots \\
		0 & 0 & 0 & \cdots & 0 \\
		1  & 0 & 0 & \cdots  & -2\rho^2
	\end{bmatrix}_{|C|\times |C|}$$
which we represent as $R = \frac{\rho^2}{1-\rho^4}(	-2\rho^2, 1 ,0 , \ldots, 0 ,1)$ and the last step comes from Lemma~\ref{lemma: gaussian-mgf}.

	Since $\id_{|C|}-R$ is circulant $\frac{1}{1-\rho^4}(	1+\rho^4, -\rho^2 ,0 , \ldots, 0 ,-\rho^2)$, its determinant can be calculated by using the Fourier coefficients as  ($i^2=-1$) (see \cite[Theorem 1]{circulant})
	\begin{flalign*}
		\det(\id_{|C|}-R) = &\prod_{k=0}^{|C|-1} \frac{1+\rho^4}{1-\rho^4} -\frac{\rho^2}{1-\rho^4}e^{\frac{k}{|C|}2\pi i} - \frac{\rho^2}{1-\rho^4}e^{\frac{k(|C|-1)}{|C|}2\pi i}\\
		=& \,(1-\rho^4)^{-|C|} \prod_{k=0}^{|C|-1} 1+\rho^4-\rho^2\left(e^{\frac{k}{|C|}2\pi i} +e^{-\frac{k}{|C|}2\pi i} \right)\\
		=& \,(1-\rho^4)^{-|C|}  \prod_{k=0}^{|C|-1} \left(1-\rho^2 e^{\frac{k}{|C|}2\pi i}\right)\left(1-\rho^2 e^{-\frac{k}{|C|}2\pi i}\right) = (1-\rho^4)^{-|C|}(1-\rho^{2|C|} )^2 
	\end{flalign*}
where we used that  one can write $1-z^n = \prod_{k=0}^{n-1}(1-z\cdot z_k^{-1})$ with $z_k =e^{\frac{k}{n} 2 \pi i} $ for $k=0,\ldots,n-1$ are $n$ complex roots of  $1=z^n$.
	By putting this into \eqref{eq: cycle-circulant}, we get the desired result.
\end{proof}

\subsubsection{Derangements}
As a side note, let us state two technical combinatoric facts. See, for example, \cite{concretemath} for more details.

First, from Stirling's approximation, we have the following upper bound for binomial coefficient: 
$$\binom{n}{k} \leq \left(\frac{e n}{k}\right)^k \ .$$

 Secondly, let $!n$ be the number of derangements  over $[n]$, i.e., permutations without a fixed point which is defined as $!n = n! \sum_{k=0}^n \frac{(-1)^k}{k!}$. Then, we have  $\frac{!n}{n!} = \sum_{k=0}^n \frac{(-1)^k}{k!}\leq 1$ for $n\geq1$. For $\sigma \sim \mathrm{Unif}(S_n)$, the probability of its number of fixed points being equal to $k\in[n-1]\cup \{0\}$ is $$\mathbb{P}\left[ N_1^\sigma = k\right] = \binom{n}{k}\frac{!(n-k) }{n!}\leq \binom{n}{k} \frac{(n-k)!}{n!} = \frac{1}{k!}$$ which is a decreasing function in $k$. 

\subsection{Second-order moment method \texorpdfstring{\boldmath{$\mathbb{E}_0 \,{L^2}$}}{E0L2}}\label{app: unconditional-converse}
\begin{proof}[Proof of Lemma~\ref{lemma: converse-unconditional}]
	For any fixed $\sigma \in S_n$, we have
	\begin{flalign}
		\mathbb{E}_{0} \, \prod_{i=1}^n \, Z_i
		&\stackrel{\hphantom{\text{(iii)}}}{=}  \mathbb{E}_{0}\, \prod_{C \in \mathcal{C}_\sigma} \, \prod_{i \in \mathcal{O}_C} \, Z_i \nonumber\\
		&\stackrel{\hphantom{\text{(iii)}}}{=} \mathbb{E}_{0} \prod_{C \in \mathcal{C}_\sigma}  Z_C \nonumber\\
		&\stackrel{\hphantom{\,}\text{(i)\hphantom{\,}}}{=} \prod_{C \in \mathcal{C}_\sigma} \mathbb{E}_{0} \, Z_C \nonumber\\
		&\stackrel{\hphantom{\,}\text{(ii)\hphantom{ }}}{=}   \prod_{k=1}^n \,\prod_{\substack{C \in \mathcal{C}_\sigma :\\ k\text{-cycle}}} \,\frac{1}{\left(1-\rho^{2 k}\right)^d}\nonumber \\
		&  \stackrel{\hphantom{\text{(iii)}}}{=}  \prod_{k=1}^n \frac{1}{\left(1-\rho^{2 k}\right)^{dN_k^\sigma}}\nonumber\\
		&\stackrel{\text{(iii)}}{\leq}   \prod_{k=1}^n \frac{1}{\left(1-\rho^{2 }\right)^{dkN_k^\sigma}} \nonumber\\
		& \stackrel{\hphantom{\text{(iii)}}}{=} \frac{1}{\left(1-\rho^{2 }\right)^{dn}} \label{eq: cycle-exp-result}
	\end{flalign}
	where (i) is due to independence $Z_{C_1} \perp \!\!\! \perp Z_{C_2}$ of different cycles $C_1,C_2 \in \mathcal{C}_\sigma $, (ii) is by Lemma~\ref{lemma: cycle-exp} and (iii) is by the chain of inequalities $(1-x)^k \leq 1-x \leq 1-x^k$ for $k\geq 1$ and $x \in [0,1]$.
	 Applying  Ingster-Suslina method Lemma \ref{lemma: ingster-suslina}, we get a reformulation of the second-order moment of the likelihood ratio under $H_0$ as
	\begin{align}\label{eq: second-order-middle}
		\expe_0\, L^2 &= \expe_{\sigma, \tilde{\sigma} \, \iid \, \pi} \, G(\sigma,\tilde{\sigma})  = \expe_{\tilde{\sigma}\sim \pi}\, \expe_{\sigma \sim \pi}\left[\, G(\sigma,\tilde{\sigma})\,|\, \tilde{\sigma}\right] 
\end{align}
with $$G(\sigma,\tilde{\sigma})=  \mathbb{E}_0 \left[ \, \frac{\mathbb{P}_{1|\sigma}(X^n,Y^n)\cdot\mathbb{P}_{1|\tilde{\sigma}}(X^{n},Y^{n})}{\mathbb{P}_0^2(X^n,Y^n)} \,\right]\ .$$
	Since any permutation is bijective and the prior distribution is set as $\pi = \mathrm{Unif}(S_n)$, $$\expe_{\sigma} [\, G(\sigma,\tilde{\sigma}) \,| \, \tilde{\sigma}\,]=\expe_{\sigma} [\, G(\tilde{\sigma}^{-1}\circ\sigma,\mathrm{id}) \,| \, \tilde{\sigma}\,]=\expe_{\sigma} [\, G(\sigma,\mathrm{id}) \,| \, \tilde{\sigma}\,]=\expe_{\sigma} [\, G(\sigma,\mathrm{id})]$$ holds for any fixed $\tilde{\sigma} \in S_n$. Hence, WLOG, we set $\tilde{\sigma}=\mathrm{id}$. By plugging  this observation into \eqref{eq: second-order-middle}, we have
\begin{align*}
		\expe_0\, L^2& =  \expe_{\sigma \sim \pi}\, \mathbb{E}_0\left[ \,\frac{\mathbb{P}_{1|\sigma}(X^{n},Y^{n})\cdot\mathbb{P}_{1|\text{id}}(X^{n},Y^{n})}{\mathbb{P}_0^2(X^{n},Y^{n})} \, \right] \\
		&=\expe_{\sigma \sim \pi}\,  \mathbb{E}_{0}\left[\, \mathlarger{\prod_{i=1}^n} \,
		\frac{\mathcal{N}_{\rho}\left(X_i, Y_{\sigma_i}\right)\cdot\mathcal{N}_{\rho}\left(X_i, Y_{i}\right)}{\mathcal{N}\left(X_i \right)^2\cdot\mathcal{N}\left(Y_i \right)^2}\,\right] \\
		& = \mathbb{E}_{\sigma \sim \pi}\,  \mathbb{E}_{0} \, \mathlarger{\prod_{i=1}^n} \, Z_i  \\
		&\stackrel{(*)}{\leq} \frac{1}{\left(1-\rho^{2 }\right)^{dn}}
\end{align*}
	where (*) follows \eqref{eq: cycle-exp-result}. So, $\expe_0\, L^2 \leq 1+o(1) \,\text{ if } \, dn\frac{\rho^2}{1-\rho^2} =o(1) $ which is implied by $\rho^2 =o\left( \frac{1}{dn}\right)$.
\end{proof}

\subsection{First-order moment of the truncated likelihood ratio \texorpdfstring{\boldmath{$\mathbb{E}_0 \,\tilde{L}$}}{E0L}}\label{app: truncated-first}
\begin{proof}[Proof of Lemma~\ref{lemma: truncated-first}]
	As mentioned in Section~\ref{sec: truncated}, it suffices to find conditions such that $\mathbb{P}_{1|\sigma}(\Gamma_\sigma)\geq 1-\mathcal{O}(1) $ for  any $\sigma \in S_n$. Fix $\sigma$ and upon applying union bounds, we get
	\begin{flalign*}
		1&-\mathbb{P}_{1|\sigma}(\Gamma_\sigma) = \mathbb{P}_{1|\sigma}(\,\Gamma_\sigma^C\,)\leq \underbrace{ \sum_{k=k^*}^{N_1^\sigma} \, \sum_{\substack{T \subseteq F_\sigma \\ |T|=k}} 2 \mathbb{P}_{1|\sigma}\bigg(\,\sum_{i \in T} \norm{X_i}^2 \leq w_k\bigg)}_{\triangleq\, \mathrm{(\mathcal{E}_1)}} +\,  \underbrace{ \sum_{k=k^*}^{N_1^\sigma} \, \sum_{\substack{T \subseteq F_\sigma \\ |T|=k}}\mathbb{P}_{1|\sigma}\bigg( \sign{\rho}\sum_{i \in T} X_i^\mathsf{T}Y_{\sigma_i}^{\vphantom{\mathsf{T}}} \geq v_k\bigg)}_{\triangleq\, \mathrm{(\mathcal{E}_2)} }  \ .
	\end{flalign*}
	
	\paragraph{Bounding  $\pmb{(\mathcal{E}_1)}$:} Let
	$$w_k = dk -2\sqrt{d}k\, r_k , \quad  \psi_1 = \min_{k=k^*,\ldots, N_1^\sigma}\, - \ln \frac{e n}{k} +r_k^2 $$ and $Z\sim\chi^2_{dk}$. Then,  for $0< w_k < dk$, we have 
	\begin{align}
		(\mathcal{E}_1)&=\sum_{k=k^*}^{N_1^\sigma} \sum_{\substack{T \subseteq F_\sigma \\ |T|=k}}2\mathbb{P}\left(\, Z \leq w_k\right) = 2 \sum_{k=k^*}^{N_1^\sigma}  \binom{N_1^\sigma}{k} \mathbb{P}\left(Z - dk \leq -(dk -w_k)\right)   \nonumber \\
		&\stackrel{\text{(i)}}{\leq}  2\sum_{k=k^*}^{N_1^\sigma} \binom{N_1^\sigma}{k} \exp\left(-\frac{(dk-w_k)^2}{4dk}\right) \nonumber\\
		&\stackrel{\text{(ii)}}{\leq} 2\sum_{k=k^*}^{N_1^\sigma} \exp\left(k\ln \frac{e n}{k}-\frac{(dk-w_k)^2}{4dk}\right)\nonumber \\
		& = 2\sum_{k=k^*}^{N_1^\sigma} \exp\left(-k \left(-\ln \frac{e n}{k} +r_k^2\right)\right) \nonumber \\
		& \leq  2\sum_{k=k^*}^{N_1^\sigma}  e^{-k\psi_1} \nonumber\\
		&\leq 2e^{-k^*\psi_1}\sum_{k=0}^{\infty} e^{-k\psi_1}\nonumber\\
		& \stackrel{\text{(iii)}}{\leq}  \frac{2e^{-k^*\psi_1}}{1-e^{-\psi_1}}\label{eq: first-moment-1}
	\end{align}
	where  (i) follows Laurent-Massart Lemma~\ref{laurent-massart}, (ii) is by the inequality $\binom{n}{k}\leq \left(\frac{en}{k}\right)^k$ from Stirling's approximation and due $N_1^\sigma \leq n$ for all $\sigma$; and in (iii), $\psi_1>0$ should hold for any $\sigma$. Note that the conditions $\psi_1 >0 $ and $w_k > 0$ are satisfied if  for all $k=k^*,\ldots, n$ $$\frac{\sqrt{d}}{2} > r_k >\sqrt{ \ln \frac{en}{k}} $$
	holds which requires $d > 4\ln \frac{en}{\;k^*}$ .
	
	\paragraph{Bounding  $\pmb{(\mathcal{E}_2)}$:} Let 
	$$v_k = |\rho| dk + 4|\rho| \sqrt{d}k \,s_k , \quad 
	\psi_2 = \min_{k=k^*,\ldots, N_1^\sigma} \frac{|\rho| \sqrt{d} \,s_k }{4}\min \left\{\frac{1}{|\rho|},\, \frac{2}{\sqrt{1-\rho^2}},\, \frac{s_k }{|\rho|\sqrt{d}},\, \frac{4|\rho| s_k }{(1-\rho^2)\sqrt{d}}\right\} - \ln \frac{e n}{k} $$
	with $s_k\geq 0$ and $Z_i \iid \mathcal{N}(0,\id_d)$ for $i \in [n]$ and independent from $Y^n$. Then, for $v_k > |\rho| kd$ and any fixed $T$ with $|T|=k$,
	\begin{align}
\mathbb{P}_{1|\sigma}\Bigg(\sign{\rho}\sum_{i \in T} X_i^\mathsf{T}Y_{\sigma_i}^{\vphantom{\mathsf{T}}} &  \geq v_k\Bigg) = \mathbb{P}\Bigg(\sum_{i \in T}|\rho| Y_{\sigma_i}^\mathsf{T}Y_{\sigma_i}^{\vphantom{\mathsf{T}}} +\sign{\rho}\sqrt{1-\rho^2}\, Y_{\sigma_i}^\mathsf{T} Z_i^{\vphantom{\mathsf{T}}}\geq v_k \Bigg) \nonumber\\
		& \stackrel{\text{(i)}}{=} \,\mathbb{P} \left(\xi^\mathsf{T}A\xi -|\rho| dk \geq v_k -|\rho| dk   \right)  \nonumber\\
		& \stackrel{\text{(ii)}}{\leq} \exp\left(-\frac{v_k-|\rho| dk}{16}\min \left\{\frac{1}{|\rho|},\, \frac{2}{\sqrt{1-\rho^2}},\, \frac{v_k-|\rho| dk}{2\rho^2dk},\, \frac{v_k-|\rho| dk}{(1-\rho^2)dk}\right\}\right)\label{eq: first-moment-2-mid}
	\end{align}
			where the variables 
			$$\xi =  \begin{bmatrix}
				Y_{\sigma_{i_1}}^\mathsf{T} & \cdots & Y_{\sigma_{i_k}}^\mathsf{T} & Z_{i_1}^\mathsf{T} & \cdots & Z_{i_k}^\mathsf{T}
			\end{bmatrix}^\mathsf{T} \quad \text{with } \left\{i_1,\ldots,i_k\right\}=T \ , $$
		$$ A = \frac{\sign{\rho}}{2} \begin{bmatrix}
			2\rho \id_{dk} & \sqrt{1-\rho^2}\, \id_{dk} \\
			\sqrt{1-\rho^2}\, \id_{dk}  & \vphantom{\frac{\sqrt{1-\rho^2}}{2}} \zeros_{dk}
		\end{bmatrix}$$ 
	are put in (i)  
	and Lemma~\ref{gaussian-chaos} is applied in (ii) with its notation for $A$ as 
			$$D = |\rho|\begin{bmatrix}  \id_{dk} & \zeros_{dk} \\ \zeros_{dk}  & \zeros_{dk} \end{bmatrix} \quad \text{with} \quad \norm{\alpha}_2^2 = \rho^2 dk, \quad \norm{\alpha}_{\infty} = |\rho| \quad \text{and}$$ 
			$$O = \frac{\sign{\rho} \sqrt{1-\rho^2}}{2}  \begin{bmatrix} \zeros_{dk}  & \id_{dk} \\
			\id_{dk}  & \zeros_{dk}
		\end{bmatrix}\quad \text{with} \quad \norm{\lambda}_2^2 = \frac{1-\rho^2}{4}2dk, \quad\norm{\lambda}_\infty = \frac{ \sqrt{1-\rho^2} }{2}$$ 
	where for $O$, we used the fact that any eigenvalue $\beta$ of a permutation matrix $P=\begin{smallbmatrix} \zeros  & \id \\
		\id  & \zeros
	\end{smallbmatrix}$  should abide $\beta^2=1$. 
	
		Therefore, for $v_k > |\rho| dk$ with $k=k^*,\ldots,n$, by using Equation~\eqref{eq: first-moment-2-mid}, we get
	\begin{align}
		(\mathcal{E}_2) &= \sum_{k=k^*}^{N_1^\sigma} \sum_{\substack{T \subseteq F_\sigma \\ |T|=k}}  \mathbb{P}_{1|\sigma}\left(\sign{\rho}\sum_{i \in T} X_i^\mathsf{T}Y_{\sigma_i}^{\vphantom{\mathsf{T}}} \geq v_k\right) 
		\nonumber \\
		& \leq 2\sum_{k=k^*}^{N_1^\sigma} \sum_{\substack{T \subseteq F_\sigma \\ |T|=k}}  \exp\left(-\frac{v_k-|\rho| dk}{16}\min \left\{\frac{1}{|\rho|},\, \frac{2}{\sqrt{1-\rho^2}},\, \frac{v_k-|\rho| dk}{2\rho^2dk},\, \frac{v_k-|\rho| dk}{(1-\rho^2)dk}\right\}\right) \nonumber\\
		&= 2\sum_{k=k^*}^{N_1^\sigma} \binom{N_1^\sigma }{k} \exp\left(-\frac{v_k-|\rho| dk}{16}\min \left\{\frac{1}{|\rho|},\, \frac{2}{\sqrt{1-\rho^2}},\, \frac{v_k-|\rho| dk}{2\rho^2dk},\, \frac{v_k-|\rho| dk}{(1-\rho^2)dk}\right\}\right) \nonumber\\
		& \stackrel{\text{(iii)}}{\leq} 2\sum_{k=k^*}^{N_1^\sigma} \exp\left(-k \left( -\ln \frac{e n}{k} + \frac{|\rho| \sqrt{d} \,s_k }{4}\min \left\{\frac{1}{|\rho|},\, \frac{2}{\sqrt{1-\rho^2}},\, \frac{s_k }{|\rho|\sqrt{d}},\, \frac{4|\rho| s_k }{(1-\rho^2)\sqrt{d}}\right\}\right)\right)\nonumber\\
		& \leq 2\sum_{k=k^*}^{N_1^\sigma} e^{-k\psi_2} \nonumber\\
		& \leq 2e^{-k^*\psi_2}\sum_{k=0}^{\infty} e^{-k\psi_2} \nonumber\\
		&\stackrel{\text{(iv)}}{\leq}  \frac{2e^{-k^*\psi_2}}{1-e^{-\psi_2}} \label{eq: first-moment-2}
	\end{align}
	where (iii) is by the inequality $\binom{n}{k}\leq \left(\frac{en}{k}\right)^k$ from Stirling's approximation and due $N_1^\sigma \leq n$ for all $\sigma$; and  in (iv), $\psi_2 >0$ should hold for any $\sigma$. Note that the condition $\psi_2 >0$  is satisfied for any $\sigma \in S_n$ if for all  $k=k^*,\ldots,n$ 
	\begin{align*} 
		&\frac{|\rho| \sqrt{d} \,s_k }{4}\min \left\{\frac{1}{|\rho|},\, \frac{2}{\sqrt{1-\rho^2}},\, \frac{s_k }{|\rho|\sqrt{d}},\, \frac{4|\rho| s_k }{(1-\rho^2)\sqrt{d}}\right\} >  \ln \frac{e n}{k}  \\
		\iff & s_k > \frac{2}{\sqrt{d}}\max\left\{2, \frac{\sqrt{1-\rho^2}}{|\rho|}\right\} \ln \frac{e n}{k} \quad \& \quad s_k > \max\left\{2, \frac{\sqrt{1-\rho^2}}{|\rho|} \right\} \sqrt{\ln \frac{e n}{k}}\\
		\iff & s_k > \sqrt{\ln \frac{en}{k}}\max\left\{2, \frac{\sqrt{1-\rho^2}}{|\rho|}\right\}  
	\end{align*}
	holds where in the last step, we applied the condition $d \geq 4 \ln \frac{en}{\;k^*}$ needed for bounding $(\mathcal{E}_1)$ in the previous part. 
	By summing \eqref{eq: first-moment-1} and \eqref{eq: first-moment-2}, we get
	$$1-\mathbb{P}_{1|\sigma}(\Gamma_\sigma)\leq (\mathcal{E}_1)+(\mathcal{E}_2) \leq\frac{2e^{-k^*\psi_1}}{1-e^{-\psi_1}} +  \frac{2e^{-k^*\psi_2}}{1-e^{-\psi_2}} \leq  \frac{4e^{-k^*\min\{\psi_1,\psi_2\}}}{1-e^{-\min\{\psi_1,\psi_2\}}} \ . $$ 
	Consequently, $\mathbb{E}_0 \,\tilde{L}\to 1$ if  $\min\{\psi_1,\psi_2\} >0$ and $k^*\to \infty$ hold.
\end{proof}

\subsection{Second-order moment of the truncated likelihood ratio  \texorpdfstring{\boldmath{$\mathbb{E}_0 \,\tilde{L}^2$}}{E0L2}}\label{app: truncated-second}
\begin{proof}[Proof of Lemma~\ref{lemma: truncated-second}]
	We will follow the similar steps of Lemma~\ref{lemma: converse-unconditional} by examining the behavior of $\tilde{L} ^2$'s mean under $H_0$ for small and large number of fixed points of a permutation separately. From Ingster-Suslina method (Lemma~\ref{lemma: ingster-suslina}), we have
	\begin{equation} \label{eq: tuncated-second-beginning}
		\expe_0 \,\tilde{L} ^2  =\,\expe_{\sigma,\tilde{\sigma}\,\iid \, \pi}\, \expe_0 \,L_\sigma L_{\tilde{\sigma}} \,\mathbbm{1}_{\Gamma_\sigma \cap \Gamma_{\tilde{\sigma}}} 
	\end{equation}
	Now, we observe that 
\begin{flalign*}
	\Gamma_\sigma \cap \Gamma_{\tilde{\sigma}} & = \bigcap_{k=k^*}^{\min(N_1^\sigma, N_1^{\tilde{\sigma}})}\,\bigcap_{\substack{T \subseteq F_\sigma\cap F_{\tilde{\sigma}} \\ |T|=k}} \bigg\{\,\sum_{i\in T} X_i^\mathsf{T} X^{\vphantom{\mathsf{T}}}_i > w_k ,\,\sum_{i\in T} Y_i^\mathsf{T} Y^{\vphantom{\mathsf{T}}}_i > w_k , \,\sign{\rho}\sum_{i\in T} X_i^\mathsf{T}Y^{\vphantom{\mathsf{T}}}_{\sigma_i} < v_k\bigg\} \\
	& \subseteq \;\bigg\{\sum_{i\in F_\sigma} X_i^\mathsf{T} X^{\vphantom{\mathsf{T}}}_i > w_{N_1^\sigma} ,\, \sum_{i\in F_\sigma} Y_i^\mathsf{T} Y^{\vphantom{\mathsf{T}}}_i > w_{N_1^\sigma}  ,\, \sign{\rho}\sum_{i\in F_\sigma} X_i^\mathsf{T}Y^{\vphantom{\mathsf{T}}}_{\sigma_i} < v_{N_1^\sigma}\bigg\}\\
	& \triangleq  \;  \mathcal{E}_{F_\sigma} \ .
\end{flalign*}
By continuing from~\eqref{eq: tuncated-second-beginning} and choosing the prior distribution as $\pi = \mathrm{Unif}(S_n)$
\begin{equation}\label{eq: converse-truncated-second-likelihood}
\expe_0 \,\tilde{L} ^2	\leq \, \expe_{\sigma,\tilde{\sigma}\, \iid \,\mathrm{Unif}(S_n)}\, \expe_0 \,L_\sigma L_{\tilde{\sigma}} \mathbbm{1}_{\mathcal{E}_{F_\sigma}} = \, \expe_{\sigma \sim\mathrm{Unif}(S_n)} \,  \expe_0 \,L_\sigma L_{\mathrm{id}} \mathbbm{1}_{\mathcal{E}_{F_\sigma}}
\end{equation}
where the last step is due to the uniformity and symmetric behavior of $L_{\tilde{\sigma}}$ for all $\tilde{\sigma}\in S_n$ (See Equation~\eqref{eq: second-order-middle} and its following for a similar argument.). 
	
	Recall that for a fixed $\sigma$ and $i\in[n]$, we have defined
	$$Z_{i} =\frac{\mathcal{N}_{\rho}\left(Y_{\sigma_i} \middle|  X_i\right)}{\mathcal{N}\left(Y_{\sigma_i} \right)} \cdot \frac{\mathcal{N}_{\rho}\left(Y_{i \vphantom{\sigma_i}} \middle|  X_i\right)}{\mathcal{N}\left(Y_{i \vphantom{\sigma_i}}  \right)} $$ and  $Z_C$ is the product of $Z_i$s in cycle $C\in \mathcal{C}_\sigma$. For any $\sigma$, we can convert the second expectation of~\eqref{eq: converse-truncated-second-likelihood} as
\begin{flalign}
		\expe_0 \,L_\sigma L_{\mathrm{id}} \mathbbm{1}_{\mathcal{E}_{F_\sigma}} & = \expe_0 \prod_{C \in \mathcal{C}_\sigma} Z_C \, \mathbbm{1}_{\mathcal{E}_{F_\sigma}} =   \expe_0\prod_{\substack{C \in \mathcal{C}_\sigma \\ |C|>1} } Z_C  \cdot\prod_{i \in F_\sigma}Z_i \mathbbm{1}_{\mathcal{E}_{F_\sigma}} =  \prod_{\substack{C \in \mathcal{C}_\sigma \\ |C|>1} } \expe_0 \,Z_C  \cdot \expe_0 \prod_{i \in F_\sigma}Z_i\mathbbm{1}_{\mathcal{E}_{F_\sigma}} \label{eq: second-truncated-mid}
\end{flalign}
	where the fact $Z_{C_1}\perp \!\!\! \perp Z_{C_2}$ for different cycles $C_1 \neq C_2$ and the fact that $\mathcal{E}_{F_\sigma}$  is an event dependent on $Z_C$ for $|C|=1$ are used.  As covered in Appendix~\ref{app: perms}, let $N_k^\sigma$ is the number of $k$-cycles of $\sigma$ and we have that $ \sum_{k=2}^n k\cdot N_k^\sigma = n-N_1^\sigma $. 
	
	By applying Lemma \ref{lemma: cycle-exp}, the first product of~\eqref{eq: second-truncated-mid} is upper bounded by
	\begin{flalign}
		\prod_{\substack{C \in \mathcal{C}_\sigma \\ |C|>1} } \expe_0 \,Z_C = \prod_{k=2}^n \left(1-\rho^{2k}\right)^{-dN_k^\sigma} \leq \prod_{k=2}^n \left(1-\rho^{4}\right)^{-d\frac{kN_k^\sigma}{2}} = \left(1-\rho^{4}\right)^{-d\frac{n-N_1^\sigma}{2}}\label{eq: converse-truncated-second-product1}
	\end{flalign}
where the inequality comes from $\left(1-\rho^4\right)^{k/2} \leq 1-\rho^4 \leq 1-\rho^{4k/2}$ since $\rho^4 \in [0,1]$ and $k/2 \geq 1$.

	Moreover, the second product of~\eqref{eq: second-truncated-mid} can be written as 
	\begin{flalign}
		\expe_0 \prod_{i \in F_\sigma}  Z_i \, \mathbbm{1}_{\mathcal{E}_{F_\sigma}}  &=\expe_{\substack{X_i,Y_j \,\iid \,\mathcal{N}(0,\id_d) \\ \forall \, i,j \in F_\sigma}} \left[\,\prod_{i \in F_\sigma} \left(\frac{\mathcal{N}_\rho(Y_i \,| X_i)}{\mathcal{N}(Y_i)}\right)^2\mathbbm{1}_{\mathcal{E}_{F_\sigma}}\right] \nonumber\\
		& = (1-\rho^2)^{-dN_1^\sigma}\,\expe_{\substack{X_i,Y_j \,\iid \,\mathcal{N}(0,\id_d) \\ \forall\, i,j \in F_\sigma}}  \left[ \, \prod_{i \in F_\sigma} \exp\left( -\frac{\norm{Y_i -\rho X_i}^2}{1-\rho^2} + \norm{Y_i}^2 \right) \mathbbm{1}_{\mathcal{E}_{F_\sigma}} \right] \nonumber \\
		& = (1-\rho^2)^{-dN_1^\sigma}\, \expe_{X,Y \, \iid \, \mathcal{N}(0,\id_{dN_1^\sigma})} \bigg[ \exp\left(-\frac{\rho^2}{1-\rho^2}\Big(\norm{X}^2 + \norm{Y}^2 -\frac{2}{\rho}X^\mathsf{T} Y\Big)\right) \nonumber\\
		&\qquad \qquad\qquad\qquad \qquad \cdot  \mathbbm{1}\left\{ \norm{X}^2 > w_{N_1^\sigma},\, \norm{Y}^2 > w_{N_1^\sigma}, \, \sign{\rho}X^\mathsf{T}Y<v_{N_1^\sigma}\right\} \bigg] \nonumber\\
		& \triangleq  (1-\rho^2)^{-dN_1^\sigma}\, \expe_{X,Y\, \iid\, \mathcal{N}(0,\id_{dN_1^\sigma})}	\left[h(X,Y)\mathbbm{1}_{\mathcal{E}_{N_1^\sigma}} \right] \label{eq: converse-truncated-second-product2}
	\end{flalign}

	Finally, by plugging  the results~\eqref{eq: converse-truncated-second-product1} and~\eqref{eq: converse-truncated-second-product2} into Equation~\eqref{eq: converse-truncated-second-likelihood}, the second-order moment of the truncated likelihood ratio is bounded as
	\begin{flalign*}
		\expe_0 \,\tilde{L} ^2 \leq & \left(1-\rho^4\right)^{-\frac{dn}{2}} \expe_{\sigma\sim\mathrm{Unif}(S_n)} \left[\left(\frac{1+\rho^2}{1-\rho^2}\right)^{\frac{dN_1^\sigma}{2}} \, \expe_{X,Y\, \iid\, \mathcal{N}(0,\id_{dN_1^\sigma})}\left[	h(X,Y)\mathbbm{1}_{\mathcal{E}_{N_1^\sigma}}\right]\right]\\
		= & \left(1-\rho^4\right)^{-\frac{dn}{2}} \expe_{\sigma\sim\mathrm{Unif}(S_n)} \left[\left(\frac{1+\rho^2}{1-\rho^2}\right)^{\frac{dN_1^\sigma}{2}} \, \expe_{X,Y\, \iid \, \mathcal{N}(0,\id_{dN_1^\sigma})}\left[	h(X,Y)\mathbbm{1}_{\mathcal{E}_{N_1^\sigma}} \cdot \mathbbm{1}\{ N_1^\sigma \leq k^*\} \right]\right]  \nonumber\\
		&+  \left(1-\rho^4\right)^{-\frac{dn}{2}} \expe_{\sigma\sim\mathrm{Unif}(S_n)} \left[\left(\frac{1+\rho^2}{1-\rho^2}\right)^{\frac{dN_1^\sigma}{2}} \, \expe_{X,Y\, \iid \, \mathcal{N}(0,\id_{dN_1^\sigma})}\left[	h(X,Y)\mathbbm{1}_{\mathcal{E}_{N_1^\sigma}} \cdot \mathbbm{1}\{ N_1^\sigma > k^*\}\right] \right] \nonumber \\
		\leq &  \left(1-\rho^4\right)^{-\frac{dn}{2}}  \expe_{\sigma\sim\mathrm{Unif}(S_n)} \left[\left(\frac{1+\rho^2}{1-\rho^2}\right) ^{\frac{dN_1^\sigma}{2}}\, \expe_{X,Y \,\iid\, \mathcal{N}(0,\id_{dN_1^\sigma})}	\Big[h(X,Y) \, \mathbbm{1}\{ N_1^\sigma \leq k^*\}\Big] \right]\tag{\text{I}}\label{eq: converse-truncated-second-I} \\
		&+  \left(1-\rho^4\right)^{-\frac{dn}{2}} \expe_{\sigma\sim\mathrm{Unif}(S_n)} \left[\left(\frac{1+\rho^2}{1-\rho^2}\right)^{\frac{dN_1^\sigma}{2}} \, \expe_{X,Y\, \iid \, \mathcal{N}(0,\id_{dN_1^\sigma})}\left[	h(X,Y)\mathbbm{1}_{\mathcal{E}_{N_1^\sigma}} \cdot \mathbbm{1}\{ N_1^\sigma > k^*\}\right] \right] \ .\tag{\text{II}}\label{eq: converse-truncated-second-II} 
	\end{flalign*}
	
	The cases \eqref{eq: converse-truncated-second-I} and \eqref{eq: converse-truncated-second-II} will be analyzed separately. Therefore, the aim is converted into finding two functions of $k^*$, call them $f_1(k^*)$ and $f_2(k^*)$, such that for $\frac{\rho^2}{1-\rho^2} < f_1(k^*)$, $\eqref{eq: converse-truncated-second-I}= 1+o(1) $ and for $\frac{\rho^2}{1-\rho^2}  < f_2(k^*)$, $\eqref{eq: converse-truncated-second-II} = o(1)$ both hold. Combining them, we find the desired condition for the converse result:
	$$ \text{For }\rho^2 < \min\{f_1(k^*),\, f_2(k^*)\}, \quad \expe_0 \,\tilde{L} ^2 \leq \text{(I)}+\text{(II)} = 1+o(1) . $$
	In order to find the impossibility boundary of  $\rho^2$ as close to the achievability boundary  as possible, we need to choose $k^*\leq n$ as the maximizer of $\min\{f_1(k^*),\, f_2(k^*)\}$. 
	\subsubsection{Bounding \texorpdfstring{$\pmb{{\eqref{eq: converse-truncated-second-I}=1 + o(1)}}$}{(I)=1+o(1)} }
	Since the $\Gamma_\sigma$ event~\eqref{eq: truncation} is empty for $N_1^{\sigma}<k^*$, the indicator event is applied directly. 
	\begin{flalign*}
		\eqref{eq: converse-truncated-second-I} & = \left(1-\rho^4\right)^{-\frac{dn}{2}} \expe_\sigma \left(\frac{1+\rho^2}{1-\rho^2}\right) ^{dN_1^\sigma/2}\, \expe_{X,Y \,\iid\, \mathcal{N}(0,\id_{dN_1^\sigma})}	\Big[h(X,Y) \, \mathbbm{1}\{ N_1^\sigma \leq k^*\}\Big] \\
		& \stackrel{\text{(i)}}{\leq} \left(1-\rho^4\right)^{-\frac{dn}{2}} \left(\frac{1+\rho^2}{1-\rho^2}\right)^{dk^*/2} \, \expe_{X,Y\, \iid\, \mathcal{N}(0,\id_{dN_1^\sigma})}	\Big[h(X,Y)\Big] \\
		& \stackrel{\text{(ii)}}{=}\left(1-\rho^4\right)^{-\frac{dn}{2}} \left(\frac{1+\rho^2}{1-\rho^2}\right)^{dk^*/2}\\
		& \stackrel{\text{(iii)}}{\leq}\exp\left(\frac{dn}{2}\left(\frac{1}{1-\rho^4}-1\right) + \frac{dk^*}{2}\left( \frac{1+\rho^2}{1-\rho^2}-1\right)\right)\\
		& = \exp\left( \frac{dn}{2} \frac{\rho^4}{1-\rho^4} + dk^*\frac{\rho^2}{1-\rho^2}\right)\\
		&  \leq  \exp\left( \frac{dn}{2} \left(\frac{\rho^2}{1-\rho^2}\right)^2+ dk^*\frac{\rho^2}{1-\rho^2}\right) 
	\end{flalign*}
where (i) is by $ \frac{1+\rho^2}{1-\rho^2}\geq 1$ for any $\rho^2\in[0,1]$, (ii) is from Lemma~\ref{lemma: gaussian-mgf-2} with $a=\frac{\rho^2}{1-\rho^2}$ and $b=\frac{2\rho}{1-\rho^2}$, (iii) is by $\ln x \leq x-1$ for $x>0$.
	Therefore, 
	\begin{align}
		\eqref{eq: converse-truncated-second-I}   = 1+o(1) \quad \text{if} \quad  & \frac{dn}{2} \left(\frac{\rho^2}{1-\rho^2}\right)^2 =o(1)  \text{ and }dk^*\frac{\rho^2}{1-\rho^2}=o(1) \nonumber\\
		\iff \,&  
		\frac{\rho^2}{1-\rho^2}= o\left( \min \left\{ \frac{1}{dk^*}, \, \sqrt{\frac{2}{dn}}\right\} \right) \ . \label{eq: converse-truncated-second-f1}
	\end{align}
	
	\subsubsection{Bounding \texorpdfstring{$\pmb{\eqref{eq: converse-truncated-second-II}= o(1)}$}{(II)=o(1)}}
	First, we observe that since $\norm{X}^2 \geq 0$ almost surely  for any random vector and  $\rho\,\sign{\rho} =|\rho|>0$, the inner expectation gets upper bounded by
	\begin{flalign*}
		\expe_{X,Y \,\iid\, \mathcal{N}(0,\id_{dN_1^\sigma})}\left[h(X,Y)\mathbbm{1}_{\mathcal{E}_{N_1^\sigma}}\right] 
		&= \expe_{X,Y} \left[\exp\left(-\frac{\rho^2}{1-\rho^2}\,\Big(\norm{X}^2+ \norm{Y}^2-\frac{2}{\rho}X^\mathsf{T} Y\big)\right) \right.\\
		&\left.\qquad\qquad\cdot\mathbbm{1}\left\{  \norm{X}^2 > w_{N_1^\sigma},\, \norm{Y}^2 > w_{N_1^\sigma},\,\sign{\rho}X^\mathsf{T}Y<v_{N_1^\sigma}\right\}\right]\\
		& \leq \exp\left(-\frac{2\rho^2}{1-\rho^2} w_{N_1^\sigma}\right) \expe_{X,Y} \Bigg[\exp\Big( \frac{2\rho}{1-\rho^2}X^\mathsf{T} Y   \Big)  \mathbbm{1}\left\{\sign{\rho} X^\mathsf{T}Y<v_{N_1^\sigma}\right\} \Bigg]\\
		& \leq \exp\left(-\frac{2\rho^2}{1-\rho^2} w_{N_1^\sigma} + \frac{2|\rho|}{1-\rho^2} v_{N_1^\sigma} \right) \ .
	\end{flalign*}
	Putting this into $\eqref{eq: converse-truncated-second-II}$, we get
	\begin{flalign}
		\eqref{eq: converse-truncated-second-II} & \stackrel{\hphantom{\text{(iii)}}}{=} \left(1-\rho^4\right)^{-\frac{dn}{2}} \expe_{\sigma\sim\mathrm{Unif}(S_n)}\left[\left(\frac{1+\rho^2}{1-\rho^2}\right)^{\frac{dN_1^\sigma}{2}} \, \expe_{X,Y \,\iid \, \mathcal{N}(0,\id_{dN_1^\sigma})}	\left[h(X,Y)\mathbbm{1}_{\mathcal{E}_{N_1^\sigma}} \, \mathbbm{1}\{ N_1^\sigma > k^*\}\right] \right]\nonumber \\
		& \stackrel{\hphantom{\text{(iii)}}}{\leq} \left(1-\rho^4\right)^{-\frac{dn}{2}} \expe_{\sigma\sim\mathrm{Unif}(S_n)} \left[\left(\frac{1+\rho^2}{1-\rho^2}\right)^{\frac{dN_1^\sigma}{2}}\exp\left(-\frac{2\rho^2}{1-\rho^2} \Big(w_{N_1^\sigma} - \frac{v_{N_1^\sigma}}{|\rho|} \Big) \right) \mathbbm{1}\{ N_1^\sigma > k^*\}\right] \nonumber  \\
		& \stackrel{\hphantom{\text{(iii)}}}{=} \expe_{\sigma\sim\mathrm{Unif}(S_n)} \left[\exp \left( \frac{dn}{2}\ln \frac{1}{1-\rho^4} + \frac{dN_1^\sigma}{2}\ln \frac{1+\rho^2}{1-\rho^2}-\frac{2\rho^2}{1-\rho^2} \Big(w_{N_1^\sigma} - \frac{v_{N_1^\sigma}}{|\rho|}  \,\Big) \right)\mathbbm{1}\{ N_1^\sigma > k^*\}\right] \nonumber  \\
		& \stackrel{\text{\hphantom{i}(i)\hphantom{i}}}{\leq} \sum_{k=k^*}^n\exp\left( \frac{dn}{2} \frac{\rho^4}{1-\rho^4} + dk\frac{\rho^2}{1-\rho^2} -\frac{2\rho^2}{1-\rho^2} \Big(w_k - \frac{v_k}{|\rho|} \, \Big) \right) \mathbb{P}\left[ N_1^\sigma = k\right]   \nonumber  \\
		& \stackrel{\,\text{(ii)}\,}{\leq} \sum_{k=k^*}^n\exp\left( -k \left(-\frac{dn}{2k} \frac{\rho^4}{1-\rho^4} - d\frac{\rho^2}{1-\rho^2} +\frac{2\rho^2}{1-\rho^2} \Big(\frac{w_k }{k}- \frac{v_k}{k|\rho|} \, \Big) + \ln \frac{k}{e}\right) \right) \nonumber  \\
		& \stackrel{\text{(iii)}}{\leq} \sum_{k=k^*}^n e^{-k\psi}  \nonumber\\
		&\stackrel{\hphantom{\text{(iv)}}}{\leq} e^{-k^*\psi }\sum_{k=0}^{\infty} e^{-k\psi} \nonumber\\
		&\stackrel{\text{(iv)}}{\leq} \frac{e^{-k^*\psi}}{1-e^{-\psi}}\label{eq: converse-truncated-second-II-eq}
	\end{flalign}
	where  (i) follows from $\ln x \leq x-1 $ for $x>0$, (ii) is by $\mathbb{P}\left[ N_1^\sigma = k\right]\leq\frac{1}{ k!} \leq \left(\frac{e}{k}\right)^k $ from Stirling's approximation, in (iii), $\psi$ is defined as
		\begin{equation}\label{eq: converse-truncated-second-min}
		\psi = \min_{k=k^*,\ldots, n}  -\frac{dn}{2k} \frac{\rho^4}{1-\rho^4} - d\frac{\rho^2}{1-\rho^2} +\frac{2\rho^2}{1-\rho^2} \Big(\frac{w_k }{k}- \frac{v_k}{k|\rho|}  \Big) + \ln \frac{k}{e}\ ,
	\end{equation} and in (iv), $\psi>0$ should hold. 

\paragraph{Choosing the thresholds $\pmb{w_k}$, $\pmb{v_k}$:}
	To make Equation~\eqref{eq: converse-truncated-second-II-eq} as tight as possible, $\psi$  should be as large as possible. Hence, $\frac{w_k }{k}- \frac{v_k}{k|\rho|}$  should be chosen as large as possible for any $k=k^*,\ldots,n$. Recall their definition from Lemma~\ref{lemma: truncated-first}
	$$	\frac{w_k }{k}- \frac{v_k}{k|\rho|}  = \frac{dk-2\sqrt{d} k r_k}{k} - \frac{|\rho| dk +4|\rho| \sqrt{d} k s_k}{k|\rho|} = -2\sqrt{d}(r_k+ 2s_k) 	$$
	and its result $\min\{ \psi_1, \psi_2\}>0$ obligates 
	$$r_k +2s_k > \sqrt{\ln \frac{e n}{k}}\left(1 +2\max\left\{2, \sqrt{\frac{1-\rho^2}{\rho^2}} \right\}  \right)$$
	to hold. Moreover, since $\expe_{0}\, \tilde{L}^2 \leq \expe_{0}\, L^2$, by Lemma~\ref{lemma: converse-unconditional} we need to have $\rho^2 \leq \frac{1}{dn}$ for $\expe_{0}\, L^2\leq1+o(1)$. Therefore, for $dn\geq 5$, 
	$\max\left\{2, \sqrt{\frac{1-\rho^2}{\rho^2}}\right\} =  \sqrt{\frac{1-\rho^2}{\rho^2}}$. 
	
	It is sufficient to fix $\frac{w_k }{k}- \frac{v_k}{k|\rho|}$ as 
	\begin{equation}\label{eq: converse-truncated-wkvk}
		\frac{w_k }{k}- \frac{v_k}{k|\rho|}  \triangleq -5\sqrt{d\ln \frac{e n}{k}\vphantom{\frac{\rho^2}{\rho^2}}}\sqrt{\frac{1-\rho^2}{\rho^2}}  \ .
	\end{equation}
	
\paragraph{Condition $\pmb{\psi>0}$:}
	By placing these chosen thresholds in~\eqref{eq: converse-truncated-wkvk} into the definition of $\psi$ in~\eqref{eq: converse-truncated-second-min}, the condition $\psi>0$ in the result~\eqref{eq: converse-truncated-second-II-eq} is satisfied if 
\begin{align*}
	\psi>0\Longleftarrow\,	&\ln \frac{k}{e}> \frac{dn}{2k} \frac{\rho^4}{1-\rho^4} + d\frac{\rho^2}{1-\rho^2} +  10\sqrt{\frac{\rho^2}{1-\rho^2}}\sqrt{d\ln \frac{e n}{k}\vphantom{\frac{\rho^2}{\rho^2}}} \quad \forall k=k^*,\ldots,n \ . \nonumber
	\intertext{Since LHS is increasing in $k$ and RHS is decreasing in $k$, the last expression is equivalent to}
		\iff &\ln \frac{k^*}{e} > \frac{dn}{\;2k^*} \frac{\rho^4}{1-\rho^4} + d\frac{\rho^2}{1-\rho^2} + 10\sqrt{\frac{\rho^2}{1-\rho^2}}\sqrt{d\ln \frac{e n}{\;k^*}\vphantom{\frac{\rho^2}{\rho^2}}} \ . \nonumber
	\intertext{By defining $\xi \triangleq\sqrt{\frac{\rho^2}{1-\rho^2}}$ for brevity and by using the inequality $\frac{\rho^4}{1-\rho^4} \leq \frac{\rho^2}{1-\rho^2}$, this is implied by}
	\Longleftarrow \;\; &\underbrace{ \frac{1}{d}\ln \frac{k^*}{e}}_{\triangleq \, c}  > \xi^2\underbrace{ \left(\frac{n}{\;2k^*} +1\right)}_{\triangleq\, a} +  2\xi\cdot\underbrace{ 5 \sqrt{\frac{1}{d}\ln \frac{e n}{\;k^*}}}_{\triangleq\, b} 
	\intertext{By completing the square of a second order polynomial with variable $z$, this is equivalent to }
	\iff & \xi < \frac{b}{a}\left(-1 + \sqrt{1+ \frac{c\cdot a}{b^2}}\,\right) \\
	\stackrel{\text{ (i)}}{\Longleftarrow}\;\;&  \xi < \frac{1}{\sqrt{26}+5}\,\sqrt{\frac{c}{a}} \quad \text{ for } \; \frac{c\cdot a}{b^2} \geq \frac{1}{25}\\
	\stackrel{\text{ (ii)}}{\Longleftarrow}\;\;&   \xi^2  \leq \frac{1}{169}\frac{k^*}{dn} \quad \text{ for } \; k^*\geq e\sqrt{n}
\end{align*}
where (i) comes from  Lemma~\ref{lemma: ineq_sqrt}.
For (ii), by putting back the definitions of $a,b$ and $c$, the condition $\frac{c\cdot a}{b^2} \geq \frac{1}{25}$ holds for $k^*\geq e \sqrt{n}$ as 
$$	\frac{c\cdot a}{b^2} =\frac{\frac{1}{d}\ln \frac{k^*}{e}  \left(\frac{n}{\;2k^*} +1\right)}{25 \frac{1}{d}\ln \frac{e n}{\;k^*}} \geq  \frac{\ln \frac{k^*}{e} }{25 \ln \frac{e n}{\;k^*}} \geq \frac{1}{25} $$
where the inequality
$\ln \frac{k^*}{e} / { \ln \frac{e n}{\;k^*}} \geq 1$ for $k^*\geq e \sqrt{n}$ is used. 
Finally, by $\frac{n}{2k^*} + 1\leq \frac{3n}{\;2k^*}$ resulting from $k^*\leq n$, (ii) is  completed. Note that for $e\sqrt{n} \leq k^* \leq n$ to hold,  $n \geq e^2$ is needed.
	
\paragraph{Summing up for $\pmb{\eqref{eq: converse-truncated-second-II}=o(1)}$:} Since the conditions $\psi>0$ and $k^*\to \infty$ are needed for  \eqref{eq: converse-truncated-second-II-eq}$\to 0$, by putting back $\xi^2 = \frac{\rho^2}{1-\rho^2}$, $	\eqref{eq: converse-truncated-second-II}=o(1)$ holds if
	\begin{equation}\label{eq: converse-truncated-second-f2}
	\frac{\rho^2}{1-\rho^2} =
 o\left( \frac{\,k^*}{dn} \right) \quad  \text{for } \; k^* \geq e\sqrt{n}\ .
\end{equation}
	
\subsubsection{Choosing \texorpdfstring{$\pmb{k^*}$}{k*}}
	We first observe that for $k^*\geq e\sqrt{n}$, we have
	$$\frac{1}{\sqrt{d}}\min \left\{ \frac{1}{\sqrt{d}k^*}, \, \frac{\sqrt{2}}{\sqrt{n}}\right\} = \frac{1}{\;dk^*} \ .$$
	
	We now discover that $\displaystyle f_1(k^*)=\frac{1}{\;dk^*}$ in~\eqref{eq: converse-truncated-second-f1} is a decreasing function of $k^*$ and $\displaystyle f_2(k^*)=\frac{1}{169}\frac{k^*}{dn}$  in~\eqref{eq: converse-truncated-second-f2} is an increasing 
	function of $k^*$. Due to this trade-off between the functions, $\displaystyle \argmax_{1\leq k^* \leq n}\, \min\{f_1(k^*),\, f_2(k^*)\} =\{k^* : \, f_1(k^*)-f_2(k^*)=0\}$. 
		Hence, the tightest converse regime that we can find is
	\begin{align*}
		\frac{\rho^2}{1-\rho^2} &= \max_{k^*} \,\min \big\{ o\big(f_1(k^*)\big),o\big(f_2(k^*)\big)\big\}\\
		& =  o\left( \,\max_{e\sqrt{n}\leq k^*} \min \left\{\frac{1}{\; d k^*} ,\; \frac{1}{169}\frac{\,k^*}{dn}\right\}  \right)\\
		& = o\left(\,\max_{e\sqrt{n}\leq k^* \leq n}\, \frac{1}{\;dk^*} \min \left\{ 1,\; \frac{1}{169}\frac{(k^*)^2}{n}\right\} \right)\\
		& =  o\left( \,\frac{1}{13}\frac{1}{d\sqrt{n}}\, \right) 
	\end{align*}
where in the last step, $k^*=13\sqrt{n}>e\sqrt{n}$ is chosen.
Therefore,
$$ \text{ if } \; \rho^2=o\left( \frac{1}{d\sqrt{n}} \right) ,  \quad\text{then } \; \expe_0 \,\tilde{L} ^2\leq\eqref{eq: converse-truncated-second-I} + \eqref{eq: converse-truncated-second-II}=1+o(1) $$
holds and we get the final result.
\end{proof}

\section{Technical Inequalities}
\begin{lemma}\label{lemma: minimize-g}
	Let $d\geq 0$, $a \in \RR$ and define $\gamma = \left(\frac{2a}{d}\right)^2$. Then, 
	$$\min_{|x|\leq 1}\,  ax + \frac{d}{2}\ln \frac{1}{1-x^2}= \frac{d}{2}\left(\ln \frac{\sqrt{\gamma +1}+1}{2} +1 - \sqrt{\gamma +1}\, \right) $$
	with the minimizer at $x = \frac{d}{2a} - \sign{a}\sqrt{1+\left(\frac{d}{2a}\right)^2}$.
\end{lemma}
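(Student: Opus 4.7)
The plan is to treat this as a straightforward one-variable calculus minimization, since the function $f(x) \triangleq ax - \frac{d}{2}\ln(1-x^2)$ blows up to $+\infty$ at both endpoints of $(-1,1)$, so the minimum is attained at an interior critical point. First I would compute $f'(x) = a + \frac{dx}{1-x^2}$ and set it equal to zero, which after clearing denominators gives the quadratic
\begin{equation*}
 a x^2 - d x - a = 0 \ ,
\end{equation*}
with roots $x = \frac{d}{2a} \pm \sign{a}\sqrt{1 + (d/(2a))^2}$ (using that $\sqrt{d^2 + 4a^2} = 2|a|\sqrt{1+(d/(2a))^2}$).

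Next I would identify which root lies in $(-1,1)$. Writing $y = d/(2a)$, the two roots are $y \pm \sqrt{1+y^2}$, and since $\sqrt{1+y^2} > |y|$ one root is positive and one is negative. A short check (squaring the inequality $|y \mp \sqrt{1+y^2}| < 1$) shows that the admissible interior root is exactly $x^* = y - \sign{a}\sqrt{1+y^2} = \frac{d}{2a} - \sign{a}\sqrt{1 + (d/(2a))^2}$, which is also the claimed minimizer. Because $f \to \infty$ at $\pm 1$ and the interior critical point is unique, $x^*$ must be the global minimum.

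The main work is evaluating $f(x^*)$ cleanly. The key trick is to avoid ugly algebra by using the critical-point equation $a(1-x^2) = -dx$, which yields
\begin{equation*}
 1-(x^*)^2 = -\frac{d x^*}{a} \ .
\end{equation*}
Setting $u = \sqrt{1+\gamma} = \sqrt{1 + (2a/d)^2}$, I would simplify $x^*$ as $x^* = \frac{d(1-u)}{2a}$ (using $\sign{a}/|a| = 1/a$ and $\sqrt{1+y^2} = du/(2|a|)$). Substituting into the critical-point identity and using $4a^2 = d^2(u^2-1)$ collapses everything to $1 - (x^*)^2 = \frac{2}{u+1}$. A direct calculation also gives $a x^* = \frac{d}{2}(1-u)$. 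Adding these two contributions,
\begin{equation*}
 f(x^*) = a x^* - \tfrac{d}{2}\ln(1-(x^*)^2) = \tfrac{d}{2}(1-u) + \tfrac{d}{2}\ln\tfrac{u+1}{2} \ ,
\end{equation*}
which is the claimed formula after putting $u = \sqrt{1+\gamma}$.

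I do not anticipate any real obstacle; the only thing to be careful about is tracking $\sign{a}$ when choosing the correct root and when relating $\sqrt{1+y^2}$ to $u$. Everything else is bookkeeping, and the identity $1-(x^*)^2 = 2/(u+1)$ is the one computation worth doing explicitly to keep the argument short.
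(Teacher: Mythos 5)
Your proof is correct and follows essentially the same route as the paper: set the derivative to zero, solve the resulting quadratic for the interior critical point $x^* = \frac{d}{2a} - \sign{a}\sqrt{1+(d/(2a))^2}$, and substitute back. The only cosmetic difference is that you justify globality via the divergence of $f$ at $\pm 1$ rather than via $f'' \geq 0$ as the paper does, and you spell out the back-substitution (which the paper leaves implicit) using the clean identities $1-(x^*)^2 = 2/(u+1)$ and $ax^* = \tfrac{d}{2}(1-u)$ with $u = \sqrt{1+\gamma}$ — both of which check out.
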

\begin{proof}
The proof follows by simply analyzing the first and second derivatives of the given function. Define $g(x)\triangleq \frac{d}{2}\ln \frac{1}{1-x^2} +ax$ over $|x| \leq 1$. Then, we have $g'(x) = d \frac{x}{1-x^2}+a$ and $g''(x) = d\frac{1+x^2}{(1-x^2)^2}$. Clearly, for any $x$, $g''(x)\geq 0 $ if $d\geq 0$ meaning that the function is convex. By finding $x^*=\frac{d}{2a}-\sign{a}\sqrt{1+\big(\frac{d}{2a}\big)^2} \in [-1,1]$ such that $g'(x^*)=0$ holds and plugging it in $g(x^*)$, we get  the desired result.
\end{proof}

\begin{lemma}\label{lemma: ach-f-lemma}
	For $x\in[0,1]$,
	$$x^2(\sqrt{1+x}-1)\leq \sqrt{1+x^3}-1\leq x\,(\sqrt{1+x}-1) \ .$$
\end{lemma}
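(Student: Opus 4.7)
The plan is to handle the two inequalities separately, in both cases using the conjugate trick $\sqrt{1+u}-1 = u/(\sqrt{1+u}+1)$ to turn square roots into rational expressions.

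For the left inequality $x^2(\sqrt{1+x}-1) \leq \sqrt{1+x^3}-1$, rewriting both sides via rationalization gives
\begin{equation*}
\frac{x^3}{\sqrt{1+x}+1} \leq \frac{x^3}{\sqrt{1+x^3}+1},
\end{equation*}
which (after canceling $x^3 \geq 0$, treating $x=0$ as the trivial case) is equivalent to $\sqrt{1+x^3}+1 \leq \sqrt{1+x}+1$, i.e., $x^3 \leq x$. This clearly holds on $[0,1]$.

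For the right inequality $\sqrt{1+x^3}-1 \leq x(\sqrt{1+x}-1)$, I would first rearrange it as
\begin{equation*}
\sqrt{1+x^3} \leq x\sqrt{1+x}+(1-x).
\end{equation*}
Both sides are nonnegative on $[0,1]$ (indeed the right-hand side equals $1 + x(\sqrt{1+x}-1) \geq 1$), so squaring is legitimate. Squaring and expanding gives $1+x^3 \leq x^2(1+x) + 2x(1-x)\sqrt{1+x} + (1-x)^2$, and the $x^3$ cancels, leaving
\begin{equation*}
0 \leq 2x(1-x)\bigl(\sqrt{1+x}-1\bigr)
\end{equation*}
after collecting the remaining linear and quadratic terms. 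Each of the three factors $x$, $1-x$, and $\sqrt{1+x}-1$ is nonnegative on $[0,1]$, so the inequality holds.

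There isn't really a hard step here; the only subtle point is verifying that the right-hand side of the rearranged right inequality is nonnegative before squaring, which is handled by the lower bound $x\sqrt{1+x}+(1-x) \geq 1$. The whole argument is about a page of calculation and the structure above already captures it; when writing the final proof I would just carry out the two expansions carefully and point out that the sign conditions $x\in[0,1]$ are used exactly at the last step of each direction.
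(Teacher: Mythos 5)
Your proof is correct, and it takes a genuinely different (and cleaner) route than the paper's.

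For the right inequality, the paper inserts a middle term: it shows $\sqrt{1+x^3}-1\leq x^2(\sqrt{2}-1)$ by squaring and factoring a quadratic in $x$, and then shows $x^2(\sqrt{2}-1)\leq x(\sqrt{1+x}-1)$ by the concavity of the square root. Your version avoids the intermediate bound entirely: rearranging to $\sqrt{1+x^3}\leq x\sqrt{1+x}+(1-x)$, checking the right-hand side is $\geq 1$ so squaring is legitimate, and observing that after the $x^3$ cancels everything collapses to $0\leq 2x(1-x)(\sqrt{1+x}-1)$. This is tighter and involves no ad hoc constants. For the left inequality, the contrast is sharper still. The paper multiplies both sides by $x^2$ and compares $x^4(\sqrt{1+x}-1)$ with $x^2(\sqrt{1+x^3}-1)$ by arguing they agree at the endpoints, are increasing and convex, hence ``do not cross,'' and then decides the ordering by comparing the two integrals over $[0,1]$ numerically. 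That argument is delicate (two convex increasing functions with equal endpoint values can in principle cross in the interior, so the non-crossing claim deserves more justification than the paper gives), and it relies on numerical evaluation of integrals. Your rationalization $\sqrt{1+u}-1=u/(\sqrt{1+u}+1)$ reduces the left inequality directly to $\sqrt{1+x^3}\leq\sqrt{1+x}$, i.e., $x^3\leq x$, which is immediate on $[0,1]$ and requires no calculus, no numerics, and no auxiliary monotonicity or convexity claims. In short, your approach is more elementary, self-contained, and, for the left inequality in particular, plugs a gap in the paper's reasoning.
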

\begin{proof}
	We begin with the inequality on RHS. Instead of proving it directly, we add a middle step as 
	$$0 \leq \sqrt{1+x^3}-1\stackrel{\text{(i)}}{\leq} x^2(\sqrt{2}-1) \stackrel{\text{(ii)}}{\leq} x\,(\sqrt{1+x}-1) \ .$$
	By taking the square of both sides, (i) becomes 
	\begin{align*}
		\text{(i)}&\iff \sqrt{1+x^3}\leq x^2(\sqrt{2}-1)+1   \iff  x\leq (\sqrt{2}-1)^2x^2 +2(\sqrt{2}-1) \\
		&\iff 0 \leq ((\sqrt{2}-1)x-2)((\sqrt{2}-1)x - (\sqrt{2}-1))
	\end{align*}
	and upon examining the roots, one can easily see that this holds true for $x \in [0,1]$.  On the other hand, after a small manipulation, (ii) is equivalent to $x(\sqrt{2}-1) +1 \leq \sqrt{1+x}$ and this is true for the given range of $x$ due to concavity of square root function. Hence, RHS of the statement follows by combining (i) and (ii).
	
	To show LHS, we take a different approach. First, we observe that the both functions $x^4(\sqrt{1+x}-1)$ and $x^2(\sqrt{1+x^3}-1)$ carry the same value at the boundary points and they are both increasing. Moreover, their second derivatives are nonnegative within the given interval. Hence, their concavity does not change which implies that those two functions do not cross each other. Therefore, we can argue the following:
	\begin{align*}
		x^2(\sqrt{1+x}-1)\leq \sqrt{1+x^3}-1 & \iff  x^4(\sqrt{1+x}-1)\leq x^2(\sqrt{1+x^3}-1) \\
		& \iff \int_0^1 x^4(\sqrt{1+x}-1)\dif x \leq \int_0^1 x^2(\sqrt{1+x^3}-1)\dif x
	\end{align*}
	where by the change of variables, the first integral becomes
	$$\int_0^1 x^4(\sqrt{1+x}-1)\dif x = \int_1^2 (u-1)^4 \sqrt{u}\dif u -\int_0^1x^4 \dif x   = \int_1^2 (u^{9/2} - 4u^{7/2}+6u^{5/2}-4^{3/2}+\sqrt{u})\dif u - \frac{x^5}{5} \Big|_{0}^1  \leq 0.0706$$
	and the second one is calculated easily
	$$\int_0^1 x^2(\sqrt{1+x^3}-1)\dif x = \frac{2}{9}(1+x^3)^{3/2}\Big|_0^1-\frac{x^3}{3} \Big|_{0}^1 \geq 0.0723 \ .$$
	By comparing these two values, the inequality on LHS follows.
\end{proof}

\begin{lemma}\label{lemma: ach-f-func}
	For $x \in (0,1]$,
	$$\ln \, \frac{\sqrt{1-x+x^2} +1-x}{\sqrt{1+x} +1} +\frac{1}{1-x}\left(x - \sqrt{1-x+x^2}\right) +\sqrt{1+x}\leq (\sqrt{2}-1)^2x \ .$$
\end{lemma}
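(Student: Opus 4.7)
Let $a=\sqrt{1+x}$ and $b=\sqrt{1-x+x^2}$, and denote the LHS of the inequality by $L(x)$. The plan is to reduce $L$ to a clean closed form via algebraic identities, differentiate, and conclude by integration. Two identities drive the simplification. First, $b^2-x^2=1-x$ gives $(b-x)(b+x)=1-x$, so $\tfrac{x-b}{1-x}=-\tfrac{1}{b+x}$. Second, expanding shows
\[
(b+x)(b+1-x) \;=\; b^2+b+x-x^2 \;=\; b+1,
\]
so $b+1-x=\tfrac{b+1}{b+x}$, which splits the logarithm as $\ln\tfrac{b+1-x}{a+1}=\ln(b+1)-\ln(a+1)-\ln(b+x)$. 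Putting these together,
\[
L(x) \;=\; \bigl[a-\ln(a+1)\bigr] \;+\; \bigl[\ln(b+1)-\ln(b+x)-\tfrac{1}{b+x}\bigr].
\]

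A direct differentiation (using $a'=1/(2a)$, $b'=(2x-1)/(2b)$, with repeated cancellations that again exploit $(b+x)(b+1-x)=b+1$) collapses to the remarkably compact
\[
L'(x) \;=\; \frac{1}{2(a+1)} \;-\; \frac{1}{2(b+1)(b+x)}.
\]
At $x=0$, $a=b=1$ and both terms equal $\tfrac14$, so $L(0)=0$ and $L'(0)=0$ (a Taylor expansion in fact gives $L(x)=x^3/24+O(x^4)$, showing the inequality is strikingly slack near the origin). At $x=1$,
\[
L'(1) \;=\; \frac{1}{2(\sqrt{2}+1)}-\frac{1}{8} \;=\; \frac{4\sqrt{2}-5}{8},
\]
and the elementary inequality $20\sqrt{2}\le 29$ (verified by squaring: $800\le 841$) shows $L'(1)\le(\sqrt{2}-1)^2=3-2\sqrt{2}$.

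The plan is to finish by proving $L'(x)\le(\sqrt{2}-1)^2$ for all $x\in[0,1]$; integrating then yields $L(x)=\int_0^x L'(s)\,ds \le (\sqrt{2}-1)^2 x$, which is the desired bound. The main obstacle will be this uniform bound on $L'$. The cleanest route is to establish monotonicity of $L'$ on $[0,1]$, so that $L'(x)\le L'(1)$; after clearing denominators this reduces to the polynomial-style inequality
\[
(4bx+4x^2-x+1)\,a(a+1)^2 \;\ge\; b\bigl((b+1)(b+x)\bigr)^2,
\]
which holds with equality at $x=0$. Both sides mix the surds $a$, $b$, and $ab=\sqrt{1+x^3}$; the identities $a^2=1+x$, $b^2=1-x+x^2$, and $a^2b^2=1+x^3$ (the third also invoked in Lemma~\ref{lemma: ach-f-lemma}) let one expand and match terms, although the algebra is intricate. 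A useful auxiliary identity for controlling the gap is
\[
b(1+x)-a \;=\; \frac{x^3(1+x)}{b(1+x)+a},
\]
obtained from $(b(1+x))^2-a^2=(1+x)[(1+x)(1-x+x^2)-1]=x^3(1+x)$. Should monotonicity of $L'$ prove unwieldy, one can bypass it and instead show that the numerator of $(\sqrt{2}-1)^2-L'(x)$, written over the common denominator $2(a+1)(b+1)(b+x)$, is non-negative throughout $[0,1]$ by the same mix of identities.
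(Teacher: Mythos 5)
Your algebraic reduction is correct and, once verified, quite elegant. With $a=\sqrt{1+x}$, $b=\sqrt{1-x+x^2}$, the identities $(b-x)(b+x)=1-x$ and $(b+x)(b+1-x)=b+1$ do give
\[
L(x)=\bigl[a-\ln(a+1)\bigr]+\Bigl[\ln(b+1)-\ln(b+x)-\tfrac{1}{b+x}\Bigr],
\]
and a direct (if tedious) check confirms the compact derivative
\[
L'(x)=\frac{1}{2(a+1)}-\frac{1}{2(b+1)(b+x)},
\]
together with $L(0)=0$. This is a genuinely different route from the paper, which splits the LHS as $h+g$, bounds $h$ via Lemma~\ref{lemma: ach-f-lemma} plus a Pad\'e bound on $\ln$, bounds $g$ via the other direction of Lemma~\ref{lemma: ach-f-lemma}, and adds. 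Your approach sidesteps both auxiliary inequalities and the Pad\'e trick entirely by turning the problem into a first-derivative bound.

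However, as written the proof is not complete: the crucial step, $L'(x)\le(\sqrt{2}-1)^2$ on $[0,1]$, is only sketched. You propose proving monotonicity of $L'$, concede "the algebra is intricate," and offer a fallback that is also left unexecuted. That is a genuine gap. What makes this worth flagging is that the gap is in fact trivial to fill, and your own derivative formula hands you the short argument: since $a\ge 1$ on $[0,1]$ we have $\tfrac{1}{2(a+1)}\le\tfrac14$, and since $b\le 1$ (because $b^2=1-x(1-x)\le 1$) we have $(b+1)(b+x)\le 2(1+x)\le 4$, hence $\tfrac{1}{2(b+1)(b+x)}\ge\tfrac18$. Therefore $L'(x)\le\tfrac14-\tfrac18=\tfrac18$, and $\tfrac18\le 3-2\sqrt2$ is equivalent to $16\sqrt2\le 23$, i.e.\ $512\le 529$. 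Integrating from $0$ (where $L(0)=0$) gives $L(x)\le\tfrac{x}{8}\le(\sqrt2-1)^2x$. No monotonicity of $L'$, and none of the surd bookkeeping you anticipated, is needed. So the approach is sound and arguably cleaner than the paper's, but you stopped one easy estimate short of a proof; had you pushed on the crude bound instead of reaching for monotonicity, it would have closed immediately.
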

\begin{proof}
	Let 
	$$h(x) \coloneqq \ln \, \frac{\sqrt{1-x+x^2} +1-x}{\sqrt{1+x} +1}, \quad g(x) \coloneqq \frac{1}{1-x}\left(x - \sqrt{1-x+x^2}\right)+\sqrt{1+x} \ .$$ We will bound the functions separately by showing that $h(x)\leq -2(\sqrt{2}-1)x$ and $g(x)\leq x$ hold or $x\in (0,1]$. 
	
	Regarding $h(x)$, by inequality $\sqrt{1+x^3}-1\leq x\,(\sqrt{1+x}-1)$ in Lemma~\ref{lemma: ach-f-lemma} we observe that
	\begin{align*}
		\frac{\sqrt{1-x+x^2} +1-x}{\sqrt{1+x} +1} = \frac{\sqrt{1+x^3} +(1-x)\sqrt{1+x}}{\sqrt{1+x}\, \left(\sqrt{1+x} +1\right)} \leq \frac{1-x+\sqrt{1+x}}{\sqrt{1+x}\, \left(\sqrt{1+x} +1\right)} \ .
	\end{align*}
	By using $\ln y \leq 2 - \frac{4}{y+1}$ for $y \in (0,1]$ (See \cite{log-pade} for more details.) and the previous result, we get
	\begin{align*}
		\frac{h(x)}{2} & \leq \half \ln \frac{1-x+\sqrt{1+x}}{\sqrt{1+x}\, \left(\sqrt{1+x} +1\right)} \leq 1 -\frac{2\sqrt{1+x}\, \left(\sqrt{1+x} +1\right)}{1-x+\sqrt{1+x} + \sqrt{1+x}\, \left(\sqrt{1+x} +1\right)}  \\
		& = 1-\sqrt{1+x} \leq -(\sqrt{2}-1)x
	\end{align*}
	where the last step is due to concavity of square root.
	
	For $g(x)$, we apply the  inequality $x^2(\sqrt{1+x}-1)\leq \sqrt{1+x^3}-1$ in Lemma~\ref{lemma: ach-f-lemma} as
	\begin{align*}
		g(x) = \frac{x\sqrt{1+x}-\sqrt{1+x^3} + 1-x^2}{(1-x)\sqrt{1+x}} \leq  \frac{x\sqrt{1+x}-x^2(\sqrt{1+x}-1)  -x^2}{(1-x)\sqrt{1+x}} = x \ . 
	\end{align*}
	Hence, by summing $h(x)+g(x)\leq -2(\sqrt{2}-1)x+x = (\sqrt{2}-1)^2x$, the result follows.
\end{proof}

\begin{lemma}\label{lemma: ineq_sqrt}
	Let $c\in [0,1]$ and $x_0\geq1$. For $x\geq x_0^2-1$, the inequality 
	$-1 + \sqrt{1+x}\geq c\sqrt{x}$
	holds, if $c \leq \sqrt{\frac{x_0-1}{x_0+1}}$. 
\end{lemma}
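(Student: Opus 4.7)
The plan is to eliminate the square roots by a substitution, reduce the inequality to a linear condition, and then verify that the hypothesis on $c$ rearranges to exactly that condition.

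First I would set $y \triangleq \sqrt{1+x}$, so that $y \geq 1$, $x = y^2-1$, and $\sqrt{x} = \sqrt{y^2-1} = \sqrt{y-1}\sqrt{y+1}$. The range $x \geq x_0^2 - 1$ translates to $y \geq x_0$, and the target inequality $\sqrt{1+x} - 1 \geq c\sqrt{x}$ becomes
\begin{equation*}
    y - 1 \;\geq\; c\,\sqrt{y-1}\sqrt{y+1}\ .
\end{equation*}
If $y = 1$ (only possible when $x_0 = 1$, $x = 0$), both sides vanish and we are done. Otherwise $y > 1$, both sides are non-negative, and I can divide through by $\sqrt{y-1}$ to get the equivalent condition $\sqrt{y-1} \geq c\sqrt{y+1}$, i.e., after squaring, $y-1 \geq c^2(y+1)$, which rearranges to
\begin{equation*}
    y \;\geq\; \frac{1+c^2}{1-c^2}
\end{equation*}
(the denominator is positive since the hypothesis $c^2 \leq (x_0-1)/(x_0+1) < 1$ forces $c < 1$; the boundary case $c = 1$ would require $x_0 = \infty$ and is vacuous).

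Finally, I would observe that the hypothesis $c^2 \leq (x_0-1)/(x_0+1)$ is algebraically equivalent to $c^2(x_0+1) \leq x_0 - 1$, hence to $x_0(1-c^2) \geq 1 + c^2$, hence to $x_0 \geq (1+c^2)/(1-c^2)$. Combining this with $y \geq x_0$ yields $y \geq (1+c^2)/(1-c^2)$, which is exactly what was needed.

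There is no real obstacle here beyond bookkeeping: the entire content is the substitution $y = \sqrt{1+x}$ which linearizes both sides after squaring, plus noting that the threshold $c^2 \leq (x_0-1)/(x_0+1)$ is the same as $x_0 \geq (1+c^2)/(1-c^2)$. The only care needed is handling the edge case $y = 1$ separately (so that dividing by $\sqrt{y-1}$ is legitimate) and checking that $c < 1$ under the hypothesis.
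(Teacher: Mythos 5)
Your proof is correct, and it takes a cleaner route than the paper's. Both arguments begin with the substitution $y = \sqrt{1+x}$ and both hinge on the elementary equivalence
\begin{equation*}
c^2 \le \frac{x_0-1}{x_0+1} \quad\Longleftrightarrow\quad x_0 \ge \frac{1+c^2}{1-c^2}\ ,
\end{equation*}
but the path between them differs. The paper squares the target inequality immediately, obtaining a quadratic
\begin{equation*}
\left(\sqrt{x+1}\right)^2 - \tfrac{2}{1-c^2}\sqrt{x+1} + \tfrac{1+c^2}{1-c^2} \ge 0
\end{equation*}
in $\sqrt{x+1}$, then replaces it by a more tractable product $\bigl(\sqrt{x+1}-x_0\bigr)\bigl(\sqrt{x+1}-\tfrac{1}{x_0}\tfrac{1+c^2}{1-c^2}\bigr)$ with the same constant term and a sum of roots at least $\tfrac{2}{1-c^2}$, which is a one-sided surrogate rather than an exact factorization; this requires checking two side conditions on $x_0$. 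Your proof instead factors $\sqrt{x} = \sqrt{y-1}\sqrt{y+1}$ before squaring, cancels the common factor $\sqrt{y-1}$, and is left with the linear comparison $y \ge \tfrac{1+c^2}{1-c^2}$, which follows directly from $y \ge x_0$ and the rearranged hypothesis. This avoids the quadratic and its surrogate entirely and makes every step a genuine equivalence (modulo the separately handled $y=1$ edge case), which is both shorter and easier to audit; the paper's route is slightly more robust to forgetting the $x_0\ge 1$ constraint, since it carries $x_0$ explicitly into the factorization, but that is a minor bookkeeping advantage at best. Your proposal is a valid and arguably preferable replacement.
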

\begin{proof}
By completing the square of a second order polynomial with variable $\sqrt{x+1}$ and analyzing its roots, we get the following equivalence relation for $\sqrt{x+1}\geq x_0$:
\begin{align*}
	-1+\sqrt{1+x}\geq c\sqrt{x} \stackrel{\mathrm{(i)}}{\iff}& (1-c^2)(x+1)-2\sqrt{x+1}+1+c^2 \geq 0\\
	\stackrel{\mathrm{(ii)}}{\iff} & \left(\sqrt{x+1}\right)^2 -\frac{2}{1-c^2}\sqrt{x+1}+\frac{1+c^2}{1-c^2} \geq 0 \\
	\stackrel{\mathrm{(iii)}}{\Longleftarrow}\;\;&\left(\sqrt{x+1}-x_0\right)\left(\sqrt{x+1}-\frac{1}{x_0}\frac{1+c^2}{1-c^2}\right) \geq 0
\end{align*}
where in (i)  $c\geq 0$, in (ii) $c\leq 1$ are used and (iii) holds for $x_0 \geq \frac{1}{x_0}\frac{1+c^2}{1-c^2}$ and $x_0 + \frac{1}{x_0}\frac{1+c^2}{1-c^2} \geq \frac{2}{1-c^2}$ which implies $c \leq \sqrt{\frac{x_0-1}{x_0+1}}$ for $x_0\geq 1$.
\end{proof}

\end{document}